\def\RS{\mathsf{RS}}
\begin{document}

\begin{frontmatter}[classification=text]

\title{Random Reed--Solomon Codes Achieve List-Decoding Capacity With Linear-Sized Alphabets
\titlefootnote{Preliminary versions of these results appeared at the conferences FOCS '23 and STOC '24~\cite{GZ23,AGL24b}.}} 

\author[omara]{Omar Alrabiah\thanks{Supported in part by a Saudi Arabian Cultural Mission (SACM) Scholarship, NSF CCF-2210823 and V.\ Guruswami's Simons Investigator Award.}}
\author[zeyug]{Zeyu Guo\thanks{Supported by NSF grant CCF-2440926.}}
\author[venkatg]{Venkatesan Guruswami\thanks{Supported by a Simons Investigator Award and NSF grants CCF-2210823 and CCF-2228287.}}
\author[rayl]{Ray Li\thanks{Supported by a NSF Mathematical Sciences Postdoctoral Research Fellowships Program under Grant DMS-2203067, and a UC Berkeley Initiative for Computational Transformation award.}}
\author[zihanz]{Zihan Zhang\thanks{Supported in part by NSF grant CCF-2440926.}}

\begin{abstract}
Reed--Solomon codes are a classic family of error-correcting codes consisting of evaluations of low-degree polynomials over a finite field on some sequence of distinct field elements. They are widely known for their optimal unique-decoding capabilities, but their list-decoding capabilities are not fully understood. Given the prevalence of Reed-Solomon codes, a fundamental question in coding theory is determining if Reed--Solomon codes can optimally achieve list-decoding capacity.
  
  A recent breakthrough by Brakensiek, Gopi, and Makam established that Reed--Solomon codes are combinatorially list-decodable all the way to capacity. However, their results hold for randomly-punctured Reed--Solomon codes over an exponentially large field size $2^{O(n)}$, where $n$ is the block length of the code. A natural question is whether Reed--Solomon codes can still achieve capacity over smaller fields. 
  We show that Reed--Solomon codes are list-decodable to capacity with linear field size $O(n)$, which is evidently optimal up to a constant factor. 

  Our techniques also show that random linear codes are list-decodable up to capacity with optimal list-size $O(1/\varepsilon)$ and near-optimal alphabet size $2^{O(1/\varepsilon^2)}$, where $\varepsilon$ is the gap to capacity. As far as we are aware, list-decoding up to capacity with optimal list-size $O(1/\varepsilon)$ was not known to be achievable with any linear code over a constant alphabet size (even non-constructively), and it was also not known to be achievable for random linear codes over any alphabet size.

  With our proof, which maintains a hypergraph perspective of the list-decoding problem, we include an alternate presentation of ideas from Brakensiek, Gopi, and Makam that more directly connects the list-decoding problem to the GM-MDS theorem via a hypergraph orientation theorem.
\end{abstract}
\end{frontmatter}

\section{Introduction}

An (\emph{error-correcting}) \emph{code} is simply a set of strings (\emph{codewords}).
In this paper, all codes are \emph{linear}, meaning our code $C \subseteq \mathbb{F}_q^n$ is a space of vectors over a finite field $\mathbb{F}_q$, for some prime power $q$. 
A \emph{Reed--Solomon code} \cite{RS60} is a linear code obtained by evaluating low-degree polynomials over $\mathbb{F}_q$. More formally,
\begin{align}
  \RS_{n,k}(\alpha_1,\dots,\alpha_n) \defeq \{(f(\alpha_1),\dots,f(\alpha_n))\in\mathbb{F}_q^n : f\in \mathbb{F}_q[X], \deg(f) < k\}.
\end{align}
The \emph{rate} $R$ of a code $C$ is $R\defeq \log_q|C|/n$, which, for a Reed--Solomon code, is $k/n$.
Famously, Reed--Solomon codes are optimal for the \emph{unique decoding problem} \cite{RS60}: for any rate $R$ Reed--Solomon code, for every received word $y\in\mathbb{F}_q^n$, there is at most one codeword within Hamming distance $pn$ of $y$ for \emph{error parameter} $p=(1-R)/2$,\footnote{The Hamming distance between two codewords is the number of coordinates on which they differ.} and further this error parameter $p=\frac{1-R}{2}$ is optimal by the \emph{Singleton bound} \cite{Singleton64}.

In this paper, we study Reed--Solomon codes in the context of \emph{list-decoding}, a generalization of unique-decoding that was introduced by Elias and Wozencraft \cite{Elias57,Wozencraft58}. 
Formally, a code $C \subseteq \mathbb{F}_q^n$ is \emph{$(p,L)$-list-decodable} if, for every received word $y\in\mathbb{F}_q^n$, there are at most $L$ codewords of $C$ within Hamming distance $pn$ of $y$. 

It is well known that the largest fraction of errors that can be list-decoded with small lists approaches the quantity $1-R$ \cite[Theorem 7.4.1]{GRS22}. Specifically, for $p=1-R-\varepsilon$, there are (infinite families of) rate $R$ codes that are $(p,L)$ list-decodable for a list-size $L$ as small as $O(1/\varepsilon)$. On the other hand, for $p=1-R+\varepsilon$, if a rate $R$ code is $(p,L)$ list decodable, the list size $L$ must be exponential in the code length $n$. The quantity $1-R$ is therefore referred to as the \emph{list-decoding capacity}, to characterize the limiting fraction of errors that can be list-decoded as a function of the code rate.
Informally, a code that is list-decodable up to radius $p=1-R-\varepsilon$ with list size $O_\varepsilon(1)$, or even list size $n^{O_\varepsilon(1)}$ where $n$ is the code length, is said to \emph{achieve} (\emph{list-decoding}) \emph{capacity}.

The list-decodability of Reed--Solomon codes is important for several reasons.
Reed--Solomon codes are the most fundamental algebraic error-correcting codes. 
In fact, all of the prior explicit constructions of codes achieving list-decoding capacity are based on algebraic constructions that generalize Reed--Solomon codes, for example, Folded Reed--Solomon codes~\cite{GuruswamiR08,KRSW18}, Multiplicity codes~\cite{GW13,Kop15,KRSW18}, and algebraic-geometric codes~\cite{DL12,GX12,GX13,HRW19}.
Thus, it is natural to wonder whether and when Reed--Solomon codes themselves achieve list-decoding capacity.
Additionally, all Reed--Solomon codes are optimally \emph{unique-decodable}, so (equivalently) they are optimally list-decodable with the list size $L=1$, making them a natural candidate for codes achieving list-decoding capacity. 
Further, capacity-achieving Reed--Solomon codes would potentially offer advantages over existing explicit capacity-achieving codes, such as simplicity and potentially smaller alphabet sizes (which we achieve in this work).
Lastly, list-decoding of Reed–Solomon codes has found several applications in complexity theory and pseudorandomness \cite{CPS99, STV01,LP20}.

For all these reasons, the list-decodability of Reed--Solomon codes is well-studied. As rate $R$ Reed--Solomon codes are uniquely decodable up to the optimal radius $\frac{1-R}{2}$ given by the Singleton Bound, the Johnson-bound \cite{Johnson62} automatically implies that Reed--Solomon codes are $(p,L)$-list-decodable for error parameter $p=1-\sqrt{R}-\varepsilon$ and list size $L=O(1/\varepsilon)$.
Guruswami and Sudan \cite{GuruswamiS99} showed how to \emph{efficiently} list-decode Reed--Solomon codes up to the Johnson radius $1-\sqrt{R}$.
For a long time, this remained the best list-decodability result (even non-constructively) for Reed--Solomon codes.

Since then, several results suggested Reed--Solomon codes could \emph{not} be list-decoded up to capacity, and in fact, not much beyond the Johnson radius $1-\sqrt{R}$.
Guruswami and Rudra \cite{GR06} showed that, for a generalization of list-decoding called \emph{list-recovery}, Reed--Solomon codes are not list-recoverable beyond the (list-recovery) Johnson bound in some parameter settings.
Cheng and Wan \cite{CW07} showed that efficient list-decoding of Reed--Solomon codes beyond the Johnson radius in certain parameter settings implies fast algorithms for the discrete logarithm problem.
Ben-Sasson, Kopparty, and Radhakrishnan~\cite{BKR10} showed that full-length Reed--Solomon codes ($q=n$) are not list-decodable much beyond the Johnson bound in some parameter settings.

Nevertheless, a subsequent exciting line of work \cite{RudraW14, ST20, GLSTW21, FKS22, GST22, BGM23} has shown the existence of Reed--Solomon codes that could in fact be list-decoded beyond the Johnson radius.
These works all consider \emph{combinatorial} list-decodability of \emph{randomly punctured} Reed--Solomon codes.
By combinatorial list-decodability, we mean that the code is proved to be list-decodable without providing an algorithm to efficiently decode the list of nearby codewords.
By randomly punctured Reed--Solomon code, we mean a code $\RS_{n,k}(\alpha_1,\dots,\alpha_n)$ where $(\alpha_1,\dots,\alpha_n)$ are chosen uniformly over all $n$-tuples of pairwise distinct elements of $\mathbb{F}_q$. 
Several of these works~\cite{RudraW14, FKS22, GST22} proved more general list-decoding results about randomly puncturing any code with good unique-decoding properties, not just Reed--Solomon codes.

In this line of work, a recent breakthrough of Brakensiek, Gopi, and Makam~\cite{BGM23} showed, using notions of ``higher-order MDS codes'' \cite{BGM22,Roth22}, that Reed--Solomon codes can actually be list-decoded up to capacity.
In fact, they show, more strongly, that Reed--Solomon codes can be list-decoded with list size $L$ with radius $p=\frac{L}{L+1}(1-R)$, exactly meeting the \emph{generalized Singleton bound} \cite{ST20}, resolving a conjecture of Shangguan and Tamo \cite{ST20}. 
However, their results require randomly puncturing Reed--Solomon codes over an exponentially large field size $2^{O(n)}$, where $n$ is the block length of the code.

A natural question is how small we can take the field size in a capacity-achieving Reed--Solomon code.
It was shown \cite{BDG23,AGL24} that the exponential-in-$n$ field size in \cite{BGM23} is indeed necessary to \emph{exactly} achieve the generalized Singleton bound\footnote{In \cite{BDG23}, this was shown for $L=2$ under the additional assumptions that the code is linear and MDS, and the general statement was later proved in \cite{AGL24}.} but smaller field sizes remained possible if one allowed a small $\varepsilon$ slack in the parameters.

  \subsection{Our Results}
  \label{subsec:results}

  \paragraph{List-decoding Reed--Solomon codes.}

  We show that Reed--Solomon codes are list-decodable up to capacity and the generalized Singleton bound with linear alphabet size $O(n)$, which is evidently optimal up to a constant factor.
  Our main result is the following.

\begin{theorem}
  Let $\varepsilon\in(0,1)$, $L\ge 2$ and $q$ be a prime power such that $q\ge n+k\cdot 2^{10L/\varepsilon}$. Then with probability at least $1-2^{-Ln}$, a randomly punctured Reed--Solomon code of block length $n$ and rate $k/n$ over $\mathbb{F}_q$ is $(\frac{L}{L+1}(1-R-\varepsilon),L)$ average-radius list-decodable.
  \label{thm:main}
\end{theorem}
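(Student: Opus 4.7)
The plan is to follow the framework of Guo--Zhang \cite{GZ23} for showing that a random puncturing of a Reed--Solomon code avoids all ``bad'' agreement patterns with high probability, while sharpening the union bound using symmetries of reduced intersection matrices to bring the alphabet size down from $O(n^2)$ to $O(n)$.

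First I would reduce the average-radius list-decoding question to a combinatorial/algebraic statement about low-degree polynomials. A failure of $(p,L)$ average-radius list-decoding is witnessed by $L+1$ distinct codewords corresponding to polynomials $f_0,\dots,f_L$ of degree less than $k$, together with a received word $y$, whose agreement sets $T_j = \{i : f_j(\alpha_i) = y_i\}$ satisfy $\sum_j |T_j| \ge (L+1)(1-p)n = n + L(k + \varepsilon n)$. Taking differences $g_j = f_j - f_0$, each $g_j$ is a nonzero polynomial of degree less than $k$ vanishing on $T_0 \cap T_j$, so the problem reduces to an existence statement for low-degree polynomials with prescribed zero sets. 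Following \cite{BGM23,GZ23}, this is equivalent to the rank deficiency of a certain intersection matrix built from the evaluation points $\alpha_i$ and the sets $T_j$.

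Next I would invoke the hypergraph orientation theorem, a reformulation of the GM-MDS theorem alluded to in the introduction, to identify the ``critical'' configurations of sets $(T_0,\dots,T_L)$ that can yield a rank-deficient intersection matrix: only configurations whose support hypergraph admits a certain orientation can contribute, and every other matrix is full-rank over a generic choice of evaluation points. For each such critical configuration, the probability over the random $\alpha_1,\dots,\alpha_n$ that the matrix actually becomes rank-deficient is bounded by a Schwartz--Zippel-style calculation, producing a factor of roughly $(n/q)^m$ where $m$ measures the surplus of constraints dictated by the inequality $\sum_j |T_j| \ge n + L(k+\varepsilon n)$.

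The main obstacle, and the source of the improvement over \cite{GZ23}, is controlling the union bound over these critical configurations. A naive count over tuples $(T_0,\dots,T_L)$ gives roughly $\binom{n}{k}^{L+1}$ terms and forces $q = \Omega(n^2)$. To achieve $q = \Omega(n)$ as in Theorem~\ref{thm:main}, I would exploit symmetries of the \emph{reduced} intersection matrix: many tuples $(T_0,\dots,T_L)$ give rise to equivalent reduced matrices up to row/column permutations and relabelings, so they can be grouped into equivalence classes with only one representative contributing to the union bound. Concretely, the plan would be to extract a canonical form for the reduced intersection matrix, bound the number of canonical forms substantially more tightly than the naive count, and combine this with the per-configuration Schwartz--Zippel bound to show that the total failure probability is at most $2^{-Ln}$ whenever $q \ge n + k\cdot 2^{10L/\varepsilon}$. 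The hard part will be setting up the right notion of equivalence under these symmetries so that the reduction in the count is large enough, and quantifying the tradeoff between $\varepsilon$ (the surplus in agreement, which controls $m$) and the combinatorial explosion of canonical forms so that the union bound closes at linear field size.
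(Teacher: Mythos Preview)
Your proposal misidentifies both the bottleneck and the place where the symmetry of the reduced intersection matrix is exploited.

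The union bound over agreement configurations (tuples $(T_0,\dots,T_L)$, equivalently agreement hypergraphs) is \emph{not} the obstacle: the paper, like \cite{GZ23}, simply union-bounds over all $\le 2^{(L+2)n}$ possible agreement hypergraphs, and this factor is harmlessly absorbed into the exponent. Your idea of grouping configurations into equivalence classes with ``only one representative contributing to the union bound'' is not sound as stated: even if two configurations have permutation-equivalent reduced intersection matrices, the corresponding rank-deficiency events are distinct events on $(\alpha_1,\dots,\alpha_n)$, and a union bound must count all of them. Equivalence of the individual probabilities does not let you drop terms.

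The actual source of the $O(n^2)$ in \cite{GZ23} lies inside the analysis of a \emph{single fixed} hypergraph $\mathcal{H}$. There one shows that if $\mathsf{RIM}_\mathcal{H}(X_{[n]}=\alpha_{[n]})$ drops rank, then $(\alpha_1,\dots,\alpha_n)$ admits a \emph{certificate} $(i_1,\dots,i_r)\in[n]^r$ with $r=\Theta(\varepsilon n)$, where each fixed certificate is attained with probability at most $\bigl(\tfrac{(t-1)k}{q-n}\bigr)^r$. In \cite{GZ23} the certificate count is essentially $n^r$, which forces $q-n\gtrsim nk$. The paper's improvement (Lemma~\ref{lem:main}) is to redesign the certificate-generation procedure so that every certificate decomposes into at most $2^t$ increasing runs, cutting the count to $\binom{n}{r}2^{tr}$; since $r=\Theta_\varepsilon(n)$ this is $C_{L,\varepsilon}^{\,r}$ with no residual factor of $n$, and $q-n\gtrsim k$ suffices. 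The mechanism is a ``bank'' of unused variables: when the current square submatrix becomes singular upon revealing $X_{i_j}=\alpha_{i_j}$, one replaces $X_{i_j}$ by a fresh variable $X_{i_j'}$ of the \emph{same type} (i.e.\ $e_{i_j'}=e_{i_j}$) with $i_j'>i_j$, so the resulting matrix is again a submatrix of $\mathsf{RIM}_\mathcal{H}$ and the next failure index $i_{j+1}$ must exceed $i_j$. This is precisely where the exchangeability of same-type variables (Remark~\ref{rem:exchangable}) is used, and it is entirely absent from your plan.

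In short, your Schwartz--Zippel step is correct in spirit, but the saving that brings the field size down to linear must come from a structured certificate argument \emph{within} each fixed configuration, not from shrinking the outer union over configurations.
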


As in previous works like \cite{BGM23}, Theorem~\ref{thm:main} gives \emph{average-radius list-decodability}, a stronger guarantee than list-decodability: for any distinct $L+1$ codewords $c\ind{1},\dots,c\ind{L+1}$ and any vector $y\in\mathbb{F}_q^n$, the average Hamming distance from $c\ind{1},\dots,c\ind{L+1}$ to $y$ is at least $\frac{L}{L+1}(1-R-\varepsilon)$.
Taking $L=O(1/\epsilon)$ in Theorem~\ref{thm:main}, it follows that Reed--Solomon codes achieve list-decoding capacity even over linear-sized alphabets.
\begin{corollary}
  Let $\varepsilon\in(0,1)$ and $q$ be a prime power such that $q\ge n+k\cdot 2^{O(1/\varepsilon^2)}$. Then with probability at least $1-2^{-\Omega(n/\varepsilon)}$, a randomly punctured Reed--Solomon code of block length $n$ and rate $k/n$ over $\mathbb{F}_q$ is $\left(1-R-\varepsilon,O\left(\frac{1}{\varepsilon}\right)\right)$ average-radius list-decodable.
  \label{cor:main}
\end{corollary}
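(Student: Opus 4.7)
The plan is to follow the union-bound framework developed by Brakensiek--Gopi--Makam \cite{BGM23} and refined by Guo--Zhang \cite{GZ23}, adding a symmetrization step to improve the field size from $O(n^2)$ to $O(n)$. Assume toward contradiction that the random Reed--Solomon code fails to be $(\frac{L}{L+1}(1-R-\varepsilon), L)$ average-radius list-decodable. Then there exist polynomials $f_1, \ldots, f_{L+1} \in \mathbb{F}_q[X]$ of degree less than $k$ and a vector $y \in \mathbb{F}_q^n$ such that $\sum_{j=1}^{L+1} |\{i \in [n] : f_j(\alpha_i) = y_i\}| > n + Lk + L\varepsilon n$. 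Labelling each coordinate $i$ by its agreement pattern $S_i = \{j : f_j(\alpha_i) = y_i\} \subseteq [L+1]$ turns the counterexample into a multi-hypergraph on $[L+1]$ paired with coefficient data for $(f_1,\dots,f_{L+1}, y)$; the goal is to show that random $\alpha_i$ almost surely admit no such structure.

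Next, I would encode the algebraic constraints imposed by the agreement patterns as the singularity of a structured \emph{reduced intersection matrix} $M$, whose rows come from the linear equations $f_j(\alpha_i) = y_i$ for $j \in S_i$ and whose columns come from the coefficients of the $f_j$'s and the entries of $y$. A bad configuration forces $M$ to have a nontrivial kernel, and a hypergraph orientation theorem of Bollob\'as--Brightwell type (the ``alternate presentation'' alluded to in the abstract) converts this rank deficiency into a purely combinatorial witness: an orientation of the multi-hypergraph $\{S_i\}$ satisfying a Hall-like load condition on every subset of $[L+1]$. By the GM-MDS theorem, for ``generic'' $\alpha_i$ the polynomial minor of $M$ indexed by such a witness is nonzero, so the probability that random distinct $\alpha_i \in \mathbb{F}_q$ produce a bad configuration of this combinatorial type reduces to a Schwartz--Zippel estimate on a polynomial whose degree is dictated by the witness data.

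The payoff step, where the argument diverges from \cite{GZ23}, exploits symmetries of the reduced intersection matrix in the final union bound. Each bad configuration admits many equivalent representations under permutations of the $L+1$ codeword labels and of the coordinates supporting each agreement pattern; these symmetries act compatibly on the set of witnesses, so restricting to canonical representatives reduces the effective number of configurations by a factor scaling like $n$, which is precisely what separates the $O(n^2)$ bound of \cite{GZ23} from our target of $O(n)$. The main obstacle is identifying the right symmetry group so that (i) it is large enough to yield the required saving and (ii) the per-witness Schwartz--Zippel bound remains valid after passing to the quotient; configurations with few symmetries (e.g.\ very irregular hypergraphs) must be isolated and bounded separately, typically by showing that their rigidity already forces a better probability estimate. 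Once these pieces are in place, combining the symmetry-reduced count with the per-witness probability bound and the hypothesis $q \geq n + k \cdot 2^{10L/\varepsilon}$ yields a total failure probability of at most $2^{-Ln}$.
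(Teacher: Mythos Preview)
First, note that the corollary itself follows from Theorem~\ref{thm:main} in one line: set $L=\Theta(1/\varepsilon)$ so that $\frac{L}{L+1}(1-R-\varepsilon)\ge 1-R-O(\varepsilon)$ and the hypothesis $q\ge n+k\cdot 2^{10L/\varepsilon}$ becomes $q\ge n+k\cdot 2^{O(1/\varepsilon^2)}$. Your proposal is really a sketch of Theorem~\ref{thm:main}, so I evaluate it as such.

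The high-level framework you describe (bad list $\Rightarrow$ agreement hypergraph $\Rightarrow$ reduced intersection matrix rank-deficient $\Rightarrow$ generic full rank via GM-MDS $\Rightarrow$ union bound) matches the paper. The gap is in your ``payoff step.'' You propose to save the factor of $n$ over \cite{GZ23} by symmetry-reducing the count of \emph{bad configurations} (hypergraphs), quotienting by coordinate and label permutations and passing to canonical representatives. But in the Guo--Zhang argument the extra factor of $n$ in the field size does not come from the outer union bound over the $2^{O(Ln)}$ hypergraphs (the paper leaves that untouched). It comes from the \emph{inner} per-hypergraph probability bound: Guo--Zhang control the probability that $\mathsf{RIM}_\mathcal{H}(X_{[n]}=\alpha_{[n]})$ is rank-deficient via an iterative ``reveal'' argument that produces a certificate $(i_1,\dots,i_r)\in[n]^r$ and then union-bounds over all $n^r$ possible certificates. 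With $r=\Theta(\varepsilon n)$ the contribution $(n^r)^{1/r}=n$ is precisely what forces $q=\Omega(n^2)$. Shrinking the outer hypergraph count by a symmetry factor $S$ saves only $S^{1/r}=S^{O(1/n)}$ inside the field-size requirement, which cannot cancel a full factor of $n$ unless $S\ge n^{\Omega(n)}$; and in any case the naive union bound over an orbit recovers the full orbit size, so ``passing to the quotient'' does not by itself reduce $\sum_\mathcal{H}\Pr[B_\mathcal{H}]$.

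The paper's actual innovation targets the certificate count directly. It uses the exchangeability you mention (indices $i,i'$ with $e_i=e_{i'}$ are interchangeable in $\mathsf{RIM}_\mathcal{H}$) \emph{inside} the certificate-generation algorithm rather than on the space of hypergraphs: one maintains a ``bank'' of $\Theta(r/2^t)$ unused indices of each of the $\le 2^t$ types, and whenever the working square submatrix $M_j$ becomes singular upon revealing $X_{i_j}=\alpha_{i_j}$, one replaces $X_{i_j}$ by a fresh banked variable $X_{i_j'}$ of the same type with $i_j'>i_j$. The result is still a submatrix of $\mathsf{RIM}_\mathcal{H}$ (this is where exchangeability is used), and because $i_j'>i_j$ the next certificate index must exceed $i_j$. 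Hence the certificate decomposes into at most $2^t$ increasing runs and the certificate count drops from $n^r$ to $\binom{n}{r}2^{tr}$; since $r=\Theta(\varepsilon n)$ one has $\binom{n}{r}^{1/r}=O_\varepsilon(1)$, which is the saving. Your proposal omits this certificate machinery entirely (the robustness-to-row-deletion slack of Lemma~\ref{lem:int-mat-2}, the iterative reveal, and the bank trick), and the symmetrization you describe operates on the wrong object to recover it.
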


In our proof of Theorem~\ref{thm:main}, we maintain a hypergraph perspective of the list-decoding problem, which was introduced in \cite{GLSTW21}. Section~\ref{ssec:hypergraph} elaborates on the advantages of this perspective, which include (i) more compact notations, definitions, and lemma statements, (ii) some more streamlined proofs, and (iii) an alternate presentation of ideas from Brakensiek, Gopi, and Makam \cite{BGM23} that more directly connects the list-decoding problem to the so-called GM-MDS theorem \cite{DauSY14,Lovett18,YildizH19} via a hypergraph orientation theorem (see Appendix~\ref{app:bgm}).  

\paragraph{List-decoding random linear codes.}
A random linear code of rate $R$ and length $n$ over $\mathbb{F}_q$ is a random subspace of $\mathbb{F}_q^n$ of dimension $Rn$.
List-decoding random linear codes is well-studied \cite{ZP81, Elias91, GHSZ02,GHK11,Wootters13,RudraW14,RudraW18, LiW20,MRRSW20, GLMSSW21, GM22,PP24} and is an important question for several reasons.
First, finding explicit codes approaching list-decoding capacity is a major challenge, and random linear codes provide a stepping stone towards explicit codes: it is easily seen that uniformly random codes achieve list-decoding capacity, and showing list-decodability of random linear codes can be viewed as a derandomization of the uniformly random construction (see \cite{Elias91} for a discussion of the challenge of showing list-decodability of linear codes, as was first done in \cite{ZP81}). 
Mathematically, the list-decodability of random linear codes concerns a fundamental geometric question: to what extent do random subspaces over $\mathbb{F}_q$ behave like uniformly random sets?
In coding theory, list-decodable random linear codes are useful building blocks in other coding theory constructions \cite{GI01, HW18}.
Lastly, the algorithmic question of decoding random linear codes is closely related to the Learning With Errors (LWE) problem in cryptography \cite{Regev09} and Learning Parity with Noise (LPN) problem in learning theory \cite{BKW03,FGKP06}.

The list-decodability of random linear codes is more difficult to analyze than uniformly random codes, because codewords do not enjoy the same independence as in random codes. Thus the naive argument that shows that random linear codes achieve list-decoding capacity \cite{ZP81} gives an exponentially worse list size of $q^{1/\varepsilon}$ than for random codes ($\varepsilon$ is the gap to the ``$q$-ary capacity'', $R=1-H_q(p)$,  where $H_q(x)\defeq x\log_q(q-1) - x\log_q(x)-(1-x)\log_q(1-x)$ is the $q$-ary entropy function). 
Several works have sought to circumvent this difficulty \cite{Elias91, GHSZ02, GHK11, Wootters13, RudraW14, RudraW18, LiW20, GLMSSW21} improving the list-size bound to $O_q(1/\varepsilon)$, matching the list-size of uniformly random codes.

However, these results are more relevant for smaller alphabet sizes $q$, and approaching the alphabet-independent capacity of $p=1-R$ is less understood.
In this setting, uniformly random codes are, with high probability, list-decodable to capacity with optimal alphabet size $2^{O(1/\varepsilon)}$ \footnote{This follows from the list-decoding capacity theorem \cite{GRS22}. Over $q$-ary alphabets, the list-decoding capacity is given by $p=H_q^{-1}(1-R)$, which is larger than $1-R-\varepsilon$ when $q\ge 2^{\Omega(1/\varepsilon)}$.} and optimal list size $O(1/\varepsilon)$.\footnote{For codes over smaller alphabets, the list size $O(1/\varepsilon)$, where $\varepsilon$ is the gap to capacity, is believed to be optimal, but a proof is only known for large radius \cite{GV05}. However, for approaching the alphabet-independent capacity, the list size $O(1/\varepsilon)$ \emph{is} known to be optimal by the generalized Singleton bound \cite{ST20}.}
However, it was not known whether random linear codes (or, in general, more structured codes) could achieve similar parameters. In particular, both of the following questions were open (as far as we are aware).
\begin{itemize}
  \item Are rate $R$ random linear codes $(1-R-\varepsilon,O(1/\varepsilon))$-list-decodable with high probability? 
    Previously, this was not known for \emph{any} alphabet size $q$, even alphabet size growing with the length of the code.
    Previously, the best list size for random linear codes list-decodable to radius $p=1-R-\varepsilon$ was at least $2^{\Omega(1/\varepsilon)}$ \cite{GHK11,RudraW18}.\footnote{Prior works on list-decoding random linear codes were more relevant for $q$ an absolute constant such as $2,3,4,5,\dots$. \cite{GHK11} appears to give a list-size bound of $O(q^{O_R(1)}/\varepsilon)$, and \cite{RudraW18} appears to give a list size bound that is at least $q^{\log^2(1/\varepsilon)}$, and we need $q\ge 2^{\Omega(1/\varepsilon)}$.} 
  \item Do there exist \emph{any} linear codes (even non-constructively) over constant-sized (independent of $n$) alphabets that are $(1-R-\varepsilon,O(1/\varepsilon))$-list-decodable?
\end{itemize}
Using the same framework as the proof of Theorem~\ref{thm:rlc}, we answer both questions affirmatively.
We show that, with high probability, random linear codes approach the generalized Singleton bound, and thus capacity, with alphabet size close to the optimal.
\begin{theorem}
  For all $L\ge 1, \varepsilon\in (0,1)$, a random linear code over alphabet size $q\ge 2^{10L/\varepsilon}$ and $n$ sufficiently large is with high probability $(\frac{L}{L+1}(1-R-\varepsilon),L)$-average-radius-list-decodable.
  \label{thm:rlc}
\end{theorem}
By taking $L=O(1/\varepsilon)$, we see that random linear codes achieve capacity with optimal list size $O(1/\varepsilon)$ and near-optimal alphabet size $2^{O(1/\varepsilon^2)}$.
\begin{corollary}
  For all $\varepsilon > 0$, a random linear code over alphabet size $q\ge 2^{O(1/\varepsilon^2)}$ and $n$ sufficiently large is with high probability $(1-R-\varepsilon,O(1/\varepsilon))$-average-radius-list-decodable.
  \label{cor:rlc}
\end{corollary}

As the proof of Theorem~\ref{thm:rlc} is very similar to the proof of Theorem~\ref{thm:main}, we focus most of the paper on Theorem~\ref{thm:main} for brevity and clarity of presentation in Section~\ref{sec:prelim} and Section~\ref{sec:proof}.
In Section~\ref{sec:rlc}, we show how the definitions and proof can be modified to work for random linear codes.

\paragraph{Alphabet size lower bounds.}
Above, we saw that random linear codes achieve list-decoding capacity with optimal list-size and linear alphabet size.
A natural question is determining the optimal constant in the alphabet size. We showed that $q\ge n\cdot 2^{O(1/\varepsilon^2)}$ suffices, and by the list-decoding capacity theorem \cite{Elias91} --- which requires $q\ge 2^{\Omega(1/\varepsilon)}$ --- we cannot have better than an exponential-type dependence on $1/\varepsilon$ for subconstant $\varepsilon < O(1/\log n)$. 

For approaching capacity with constant $\varepsilon$, Ben-Sasson, Kopparty, and Radhakrishnan~\cite{BKR10} showed that, for any $c\ge 1$, there exist full-length Reed--Solomon codes that are not list-decodable much beyond the Johnson bound with list-sizes $O(n^c)$. Thus in order to achieve list-decoding capacity, one needs $q>n$ in some cases.
However, while full-length Reed--Solomon codes could not achieve capacity, perhaps it was possible that Reed--Solomon codes over field size, say $q=2n$ or even $q=(1+\gamma)n$, could achieve capacity in all parameter settings.
We observe that, as a corollary of \cite{BKR10}, such a strong guarantee is not possible: for any $c>1$, there exist a constant rate $R=R(c)>0$ and infinitely many field sizes $q$ such that all Reed--Solomon codes of length $n\ge q/c$ and rate $R$ over $\mathbb{F}_q$ are not list-decodable to capacity $1-R$ with list size $n^c$. The proof is in Section~\ref{app:lb}.
\begin{proposition}
  Let $\delta=2^{-b}$ for some positive integer $b\ge 3$.
  There exists infinitely many $q$ such that any Reed--Solomon code of length $n\ge 4\delta^{0.99}q$ and rate $\delta$ is not $(1-2\delta,n^{\Omega(\log(1/\delta))})$-list-decodable.
  \label{pr:lb}
\end{proposition}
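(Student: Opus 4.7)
The plan is to derive Proposition~\ref{pr:lb} as a corollary of the \cite{BKR10} lower bound on list-decoding of full-length Reed--Solomon codes. The main mechanism is concentration of Hamming distance under random puncturing: a bad list-decoding instance in the full-length setting should survive restriction to a random subset of coordinates with high probability.

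First, I would invoke BKR10's subspace polynomial construction to obtain, for infinitely many prime powers $q$ (say $q = 2^t$ with $t$ large), a full-length Reed--Solomon code $\RS_{q,K}$, a received word $y^\star \in \mathbb{F}_q^q$, and a list $\mathcal{L}^\star$ of $N \ge q^{\Omega(\log(1/\delta))}$ codewords all at Hamming distance at most $(1-2\delta-\eta)q$ from $y^\star$, for some slack $\eta = \eta(\delta) > 0$. The parameters $q$ and $K$ should be tuned so that the underlying polynomials have degree less than $\delta n$, so that after puncturing the restricted codewords land in the target punctured code $\RS_{n,\delta n}$ of rate $\delta$.

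Next, I would randomly puncture to $A \subset \mathbb{F}_q$ with $|A|=n$ and analyze each codeword under restriction. For each $c^\star \in \mathcal{L}^\star$, the punctured distance $d(c^\star|_A, y^\star|_A)$ is a sum of $n$ indicators sampled without replacement from $q$, with mean at most $(1-2\delta-\eta)n$. Hoeffding's inequality for sampling without replacement gives $\Pr[d(c^\star|_A, y^\star|_A) > (1-2\delta)n] \le \exp(-\Omega(\eta^2 n))$. Union-bounding over $\mathcal{L}^\star$, and noting that $\log N = O(\log q \cdot \log(1/\delta))$ while $\eta^2 n = \Omega(\eta^2 \delta^{0.99} q) \gg \log q$ for $q$ sufficiently large, yields that with probability $1 - o(1)$ every punctured codeword remains within distance $(1-2\delta)n$ of $y^\star|_A$. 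The punctured codewords stay distinct because any two polynomials of degree $<K$ agree on at most $K-1 < n$ points, so the restriction map is injective on $\mathcal{L}^\star$. Since $q \ge n/(16\delta^{0.99})$, the list size satisfies $N \ge q^{\Omega(\log(1/\delta))} \ge n^{\Omega(\log(1/\delta))}$, refuting $(1-2\delta, n^{\Omega(\log(1/\delta))})$-list-decodability.

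The most delicate step is the parameter matching in the first step: aligning the BKR10 construction so that the bad list falls inside the dimension-$\delta n$ code after puncturing. The natural subspace polynomials achieving fractional distance $\approx 1-2\delta$ have degree $\approx 2\delta q$, which exceeds $\delta n$ (since $n \le 16\delta^{0.99}q < 2q$), so they do not lie in $\RS_{n,\delta n}$ as polynomials. The hypothesis $b \ge 5$ in the proposition is precisely the numerical slack needed to overcome this obstacle via an adapted construction, for instance via subspace polynomials over a suitable subfield of $\mathbb{F}_q$ or via a projection of the BKR10 bad list onto the low-degree subcode; verifying that enough BKR10 codewords survive this adaptation is the main technical hurdle of the proof.
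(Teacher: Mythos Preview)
Your overall plan --- invoke \cite{BKR10} to get many close codewords in the full-length code, then show by concentration that enough of them survive random puncturing --- is exactly the paper's approach. The gap is in what you call ``the most delicate step.'' You treat the degree mismatch as an open technical hurdle requiring an adapted construction (subfield subspace polynomials, projection onto a subcode, etc.), but no such adaptation is needed: the two-parameter form of the BKR result already handles it. Specifically, \cite[Corollary~2.3]{BKR10} gives, for any $r'\le r\le 2r'$, at least $q^{2r'-r}$ polynomials of degree $<2^{-r}q$ agreeing with some fixed word on $\ge 2^{-r'}q$ positions. Taking $r'=b-2$ (so $R'\defeq 2^{-r'}=4\delta$) and $r=\lceil 1.99\,r'\rceil$, the degree bound becomes $Rq$ with $R=2^{-r}\approx (4\delta)^{1.99}$, and choosing $n=\tfrac{4R}{R'}q\le 16\delta^{0.99}q$ makes $k\defeq Rq$ equal to $\delta n$ exactly. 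So the BKR polynomials already lie in $\RS_{n,\delta n}$ with no modification, and the list size $q^{2r'-r}=q^{\Omega(r')}=n^{\Omega(\log(1/\delta))}$ comes for free. The hypothesis $b\ge 5$ is not buying room for an adaptation; it is just ensuring $r'\ge 3$ so that $r=\lceil 1.99r'\rceil$ sits strictly between $r'$ and $2r'$ and the exponent $2r'-r$ is positive.

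A secondary difference: you union-bound over all $N$ codewords, which works but requires quantitative Hoeffding tails beating $\log N=\Theta(\log q\cdot\log(1/\delta))$. The paper instead only shows each codeword individually survives puncturing with probability $1-o(1)$, then applies Markov to the count of ``bad'' codewords to conclude that at least $L/2$ survive with probability $1-o(1)$. This is cleaner since it needs no quantitative tail bound and does not require every codeword to survive.
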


\paragraph{Follow-up works.}
This paper is based on two conference papers \cite{GZ23,AGL24b}. The paper \cite{GZ23} proved Theorem~\ref{thm:main} with a quadratic alphabet size, while \cite{AGL24b} improved the alphabet size to linear and extended the techniques to random linear codes. There have already been several follow-ups to these works.

Brakensiek, Dhar, Gopi, and Zhang \cite{BDGZ23} proved that Algebraic Geometry (AG) codes achieve list-decoding capacity over constant-sized alphabets by combining our techniques with a generalized GM-MDS theorem, which Brakensiek, Dhar, and Gopi proved in \cite{BDG24}.

Recently, Guo, Xing, Yuan, and Zhang \cite{guo24} initiated the study of ``higher-order MRD codes'' as counterparts to higher-order MDS codes in the rank metric. They proved that random Gabidulin codes are list-decodable to capacity in the rank metric.

\section{Preliminaries}
\label{sec:prelim}

\subsection{Basic Notation}
For positive integers $t$, let $[t]$ denote the set $\{1,2,\dots,t\}$.
The \emph{Hamming distance} $d(x,y)$ between two vectors $x,y\in\mathbb{F}_q^n$ is the number of indices $i$ where $x_i\neq y_i$.
For a finite field $\mathbb{F}_q$, we follow the standard notation that $\mathbb{F}_q[X_1,\dots,X_n]$ denotes the ring of multivariate polynomials with variables $X_1,\dots,X_n$ over $\mathbb{F}_q$, and $\mathbb{F}_q(X_1,\dots,X_n)$ denotes the field of fractions of the polynomial ring $\mathbb{F}_q[X_1,\dots,X_n]$.
By abuse of notation, we let $X_{\le i}$ or $X_{[i]}$ to denote the sequence $X_1,\dots,X_i$, and we let, for example, $X_{\le i}=\alpha_{\le i}$ to denote $X_1=\alpha_1,X_2=\alpha_2,\dots,X_i=\alpha_i$.
Given a matrix $M$ over the field of fractions $\mathbb{F}_q(X_1,\dots,X_n)$ whose entries are in $\mathbb{F}_q[X_1,\dots,X_n]$ and field elements $\alpha_1,\dots,\alpha_i\in\mathbb{F}_q$,  let $M(X_{\le i}=\alpha_{\le i})$ denote the matrix over $\mathbb{F}_q(X_{i+1},X_{i+2},\dots,X_n)$ obtained by setting $X_{\le i}=\alpha_{\le i}$ in $M$.

\subsection{Hypergraphs and Connectivity}
\label{ssec:hypergraph}
In this work, we maintain a hypergraph perspective of the list-decoding problem, which was introduced in \cite{GLSTW21}. We describe a bad list-decoding instance with a hypergraph where the $L+1$ bad codewords identify the vertices and the $n$ evaluation points identify the hyperedges (Definition~\ref{def:hypergraph}).
While prior works described a bad list-decoding instance by $L+1$ sets indicating the agreements of the codewords with the received word, this hypergraph perspective gives us several advantages:

\begin{enumerate}
  \item The constraints imposed by a bad list-decoding configuration yield a hypergraph that is \emph{weakly-partition-connected}. This is a natural notion of hypergraph connectivity, which is well-studied in combinatorics~\cite{FKK03, FKK03b, Kiraly03} and optimization~\cite{JMS03, FrankK09, Frank11, CX18}, and which generalizes a well-known notion ($k$-partition-connectivity) for graphs \cite{nash1961edge,tutte1961problem}.\footnote{The notion of weakly-partition-connected sits between two other well-studied notions: \emph{$k$-partition-connected} implies $k$-weakly-partition-connected implies \emph{$k$-edge-connected} \cite{Kiraly03}. Each of these three notions generalizes an analogous notion on graphs. On graphs, $k$-partition-connected and $k$-weakly-partition-connected are equivalent.} This connection allows us to have more compact notation, definitions, and lemma statements, and allows us to streamline some proofs.
  \item With the hypergraph perspective, we can give a new presentation of the results in \cite{BGM23} and more directly connect the list-decoding problem to the GM-MDS theorem \cite{DauSY14,Lovett18,YildizH19}, as the heavy-lifting in the combinatorics is done using known results on hypergraph orientations. This is done in Appendix~\ref{app:bgm}.
\end{enumerate}

A hypergraph $\mathcal{H}=(V,\mathcal{E})$ is given by a set of vertices $V$ and a set $\mathcal{E}$ of \emph{(hyper)edges}, which are (possibly) subsets of the vertices $V$.
In this work, all hypergraphs have \emph{labeled} edges, meaning we enumerate our edges $e_i$ by distinct indices $i$ from some set, typically $[n]$, in which case we may also think of $\mathcal{E}$ as a tuple $(e_1,\dots,e_n)$.
Throughout this paper, the vertex set $V$ is typically $[t]$ for some positive integer $t$.
The \emph{weight} of a hyperedge $e$ is $\wt(e)\defeq \max(0,|e|-1)$, and the \emph{weight} of a set of hyperedges $\mathcal{E}$ is simply $\wt(\mathcal{E})\defeq \sum_{e\in \mathcal{E}}{\wt(e)}$.

\tikzstyle{vertex} = [fill,shape=circle,node distance=50pt,scale=0.7,font=\tiny,label={[font=\tiny]}]
\tikzstyle{edge} = [opacity=1,fill opacity=0,line cap=round, line width=2pt]
\tikzstyle{hyperedge} = [fill,opacity=1,fill opacity=1,line cap=round, line join=round, line width=12pt]
\newcommand\examplehypergraph{
        \draw (0,0.3) circle (4.5);
        \foreach \i in {1,...,7}{
          \node[vertex,label=above:\(f\ind{\i}\)] (\i) at (360*\i/7:3) {};
        }
        \begin{pgfonlayer}{background}
          \draw[hyperedge,color=green!60!gray!60] (1.center) -- (2.center) -- (4.center)--(1.center);
          \node[] at (-0.5,1) {$e_{n-2}$};
        \draw[hyperedge,color=orange!90!white!80] (6.center) -- (5.center) --(6.center);
        \node[] at (0.75,-2.7) {$e_{n-1}$};
        \draw[hyperedge,color=magenta!90!white!70] (7.center) -- (3.01,0);
        \node[] at (3.5,-0.8) {$e_{n}$};
        \end{pgfonlayer}
}
\begin{figure}[t]
  \begin{center}
    \begin{tikzpicture}[scale=0.5]
        \examplehypergraph
        \node[align=left] at (16,0) {
          \colorbox{green!60!gray!60}{$e_{n-2}=\{1,2,4\}$ means $f\ind{1}(\alpha_{n-2}) = f\ind{2}(\alpha_{n-2}) = f\ind{4}(\alpha_{n-2}) = y_{n-2}$}\\
          \colorbox{orange!90!white!80}{$e_{n-1}=\{5,6\}$ means $f\ind{5}(\alpha_{n-1}) = f\ind{6}(\alpha_{n-1}) = y_{n-1}$}\\
        \colorbox{magenta!90!white!70}{$e_n=\{7\}$ means $f\ind{7}(\alpha_n) = y_n$}};
    \end{tikzpicture}
    \end{center}
    \caption{Example edges from an agreement hypergraph $\mathcal{H}=([7], (e_1,\dots,e_n))$ (Definition~\ref{def:hypergraph}) arising from a bad list-decoding configuration with polynomials $f\ind{1},\dots,f\ind{7}\in\mathbb{F}_q[X]$, received word $y\in\mathbb{F}_q^n$, and evaluation points $\alpha_1,\dots,\alpha_n$.}
    \label{fig:hypergraph}
  \end{figure}
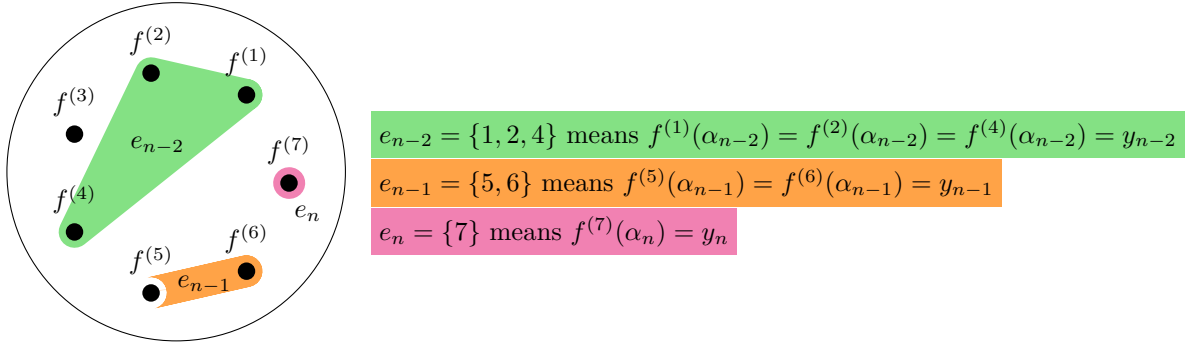

All hypergraphs that we will consider in this work are \emph{agreement hypergraphs} for a bad list-decoding configuration. See Figure~\ref{fig:hypergraph} for an illustration.
\begin{definition}[Agreement Hypergraph]
  Given vectors $y,c\ind{1},\dots,c\ind{t}\in\mathbb{F}_q^n$, the \emph{agreement hypergraph} has a vertex set $[t]$ and a tuple of $n$ hyperedges $(e_1,\dots,e_n)$ where $e_i \defeq \{j \in [t]: c^j_i = y_i\}$.
  \label{def:hypergraph}
\end{definition}
Agreement hypergraphs (or their subgraphs) enjoy a key property called weak-partition-connectivity. 

\begin{definition}[Weak-Partition-Connectivity]
  A hypergraph $\mathcal{H}=([t],\mathcal{E})$ is \emph{$k$-weakly-partition-connected} if, for every partition $\mathcal{P}$ of the set of vertices $[t]$,
  \begin{align}
  \sum_{e\in \mathcal{E}}{\max\{|\mathcal{P}(e)| - 1, 0\}} \ge k(|\mathcal{P}|-1)
    \label{eq:wpc}
  \end{align}
  where $|\mathcal{P}|$ is the number of parts of the partition, and $|\mathcal{P}(e)|$ is the number of parts of the partition that edge $e$ intersects.\footnote{We take $\max\{|\mathcal{P}(e)| - 1, 0\}$ rather than $|\mathcal{P}(e)| - 1$ because the hyperedges in agreement hypergraphs could be empty.}
\end{definition}

To give some intuition for weak-partition-connectivity, we state two of its combinatorial implications. First, if a hypergraph is $k$-weakly-partition-connected, then it is \emph{$k$-edge-connected} \cite{Kiraly03}, which, by the Hypergraph Menger's (Max-Flow-Min-Cut) theorem \cite[Theorem 1.11]{Kiraly03}, equivalently means that every pair of vertices has $k$ edge-disjoint (hyper)paths between them.\footnote{In general the converse is not true.}
Second, suppose we replace every hyperedge $e$ with an arbitrary spanning tree of its vertices (which we effectively do in Definition~\ref{def:rim}).
The resulting (non-hyper)graph is \emph{$k$-partition-connected},\footnote{In (non-hyper)graphs, $k$-partition-connectivity and $k$-weak-partition-connectivity are equivalent.} which, by the Nash-Williams-Tutte Tree-Packing theorem \cite{nash1961edge,tutte1961problem}, equivalently means there are $k$ edge-disjoint spanning trees (this connection was used in \cite{GLSTW21}).

The key reason we consider weak-partition-connectivity is that a bad list-decoding configuration yields a $k$-weakly-partition-connected agreement hypergraph.
\begin{lemma}[{Bad list gives $k$-weakly-partition-connected hypergraph. See also \cite[Lemma 7.4]{GLSTW21}}]
  Suppose that vectors $y,c\ind{1},\dots,c\ind{L+1}\in\mathbb{F}_q^n$ are such that the average Hamming distance from $y$ to $c\ind{1},\dots,c\ind{L+1}$ is at most $\frac{L}{L+1}(n-k)$. That is, $\sum_{j=1}^{L+1}d(y,c\ind{j}) \le L(n-k)$. Then, for some subset $J\subseteq[L+1]$ with $|J|\ge 2$, the agreement hypergraph of $(y,c\ind{j}:j\in  J)$ is $k$-weakly-partition-connected.
  \label{lem:hypergraph-1}
\end{lemma}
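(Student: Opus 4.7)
The plan is to reduce to a purely combinatorial claim about weighted hypergraphs: any hypergraph $\mathcal{H}$ on vertex set $[t]$ with total weight at least $k(t-1)$ admits an induced sub-hypergraph $\mathcal{H}[J] := ([t], (e \cap J)_{e \in \mathcal{E}})$ with $|J| \ge 2$ that is $k$-weakly-partition-connected. Given this claim, the lemma follows by a quick weight computation: the agreement hyperedge $e_i = \{j : c^{(j)}_i = y_i\}$ has size equal to the number of codewords agreeing with $y$ at coordinate $i$, so $\sum_i |e_i| = \sum_{j=1}^{L+1}(n - d(y, c^{(j)})) \ge (L+1)n - L(n-k) = n + Lk$. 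Since $(|e_i|-1)_+ \ge |e_i|-1$ pointwise, the total weight is at least $Lk = k((L+1) - 1)$, which is exactly the hypothesis of the claim with $t = L+1$. The resulting induced sub-hypergraph on $J$ is by definition the agreement hypergraph of $(y, c^{(j)} : j \in J)$.

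To prove the claim I would induct on $t$. The base case $t=2$ is immediate, since the only nontrivial partition is the pair of singletons, and the partition inequality $\sum_i (|\mathcal{P}(e_i)|-1)_+ \ge k$ reduces to the total-weight hypothesis. For the inductive step, if $\mathcal{H}$ itself is $k$-weakly-partition-connected, take $J = [t]$. Otherwise there is a partition $\mathcal{P} = (P_1, \dots, P_m)$ with $m \ge 2$ violating \eqref{eq:wpc}, i.e.\ with cross-weight $\sum_i (|\mathcal{P}(e_i)|-1)_+ < k(m-1)$.

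The key computation is the pointwise identity
\begin{equation*}
  (|e|-1)_+ \;=\; \sum_{s=1}^m (|e \cap P_s| - 1)_+ \;+\; (|\mathcal{P}(e)|-1)_+,
\end{equation*}
which splits the weight of each hyperedge into its ``within-part'' contributions and a ``cross-part'' contribution. Summing over $e \in \mathcal{E}$, the sum of the within-part weights $\sum_s \wt(\mathcal{H}[P_s])$ equals total weight minus cross-weight, which strictly exceeds $k(t-1) - k(m-1) = k \sum_s (|P_s|-1)$. Averaging, some $P_{s^\star}$ satisfies $\wt(\mathcal{H}[P_{s^\star}]) > k(|P_{s^\star}|-1)$, which forces $|P_{s^\star}| \ge 2$ and supplies the hypothesis of the claim for the strictly smaller vertex set $P_{s^\star}$. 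Induction then produces a $J \subseteq P_{s^\star} \subseteq [t]$ with the desired property.

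The main obstacle is guaranteeing that the recursion actually makes progress, which is precisely where weak-partition-connectivity is the right notion: the additive decomposition of $(|e|-1)_+$ across a partition is exactly the quantity appearing in \eqref{eq:wpc}, so the cross-weight/within-weight accounting closes up and the averaging step produces a child instance with the same form of hypothesis. Without this clean decomposition (for instance, working with edge-connectivity directly), the recursion would not preserve the right invariant.
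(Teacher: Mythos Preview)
Your proof is correct and is essentially the same as the paper's: both rest on the identity $\wt(e)=\sum_s \wt(e\cap P_s)+(|\mathcal{P}(e)|-1)_+$ together with the initial weight count $\wt(\mathcal{E})\ge Lk$. The only difference is organizational---the paper takes an inclusion-minimal $J\subseteq[L+1]$ with $|J|\ge 2$ and $\sum_e\wt(e\cap J)\ge(|J|-1)k$ and verifies $k$-weak-partition-connectivity directly from minimality, whereas you phrase the same descent as an induction on $|J|$ via a violating partition; these are standard reformulations of each other.
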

Lemma~\ref{lem:hypergraph-1} follows from the following result about weakly-partition-connected hypergraphs
\begin{lemma}
    Let $\mathcal{H}=(V,\mathcal{E})$ be a hypergraph with at least two vertices such that $\sum_{e\in\mathcal{E}} \wt(e)\ge k\cdot (|V|-1)$, where $k$ is a positive integer. Then there exists a subset $V'\subseteq V$ of at least two vertices such that the hypergraph $\mathcal{H}'=(V', \{e\cap V': e\in\mathcal{E}\})$ is $k$-weakly-partition-connected.
    \label{lem:hypergraph-0}
\end{lemma}
\begin{proof}
    Let $V'$ be an inclusion-minimal subset $V'\subseteq[L+1]$ with $|V'|\ge 2$ such that 
    \begin{align}
        \sum_{e\in\mathcal{E}}{^{} \wt(e\cap V')} \ge (|V'|-1)k.
        \label{eq:hypergraph-0-0}
    \end{align}
By assumption, $V'=[L+1]$ satisfies \eqref{eq:hypergraph-0-0}, so $V'$ exists (note that singleton subsets of $[L+1]$ satisfy \eqref{eq:hypergraph-0-0} with equality).
Let $\mathcal{H}=(V',\mathcal{E}')$ be the hypergraph with edge set $\mathcal{E}'=\{V'\cap e: e\in\mathcal{E}\}$.
By minimality of $V'$, for all nonempty $V''\subsetneq V'$, we have $\sum_{e\in\mathcal{E}'}{\wt(e\cap V'')} \le (|V''|-1)k$.
Now, consider a non-trivial partition $\mathcal{P}=P_1\sqcup\cdots\sqcup P_p$ of $V'$ where $P_i \neq V'$ for all $i \in [p]$ (as otherwise \eqref{eq:wpc} trivially follows). We have
\begin{align}
  \sum_{e\in\mathcal{E}'}{\max\{|\mathcal{P}(e)|-1, 0\}}
  &=\sum_{e\in\mathcal{E}'}\left(\wt(e) - \sum_{\ell=1}^p{\wt(e\cap P_{\ell})}\right) \nonumber \\
  &=\sum_{e\in\mathcal{E}'}{\wt(e)} - \sum_{\ell=1}^p{\sum_{e\in\mathcal{E}'}{\wt(e\cap P_{\ell})}} \nonumber \\
  &\ge (|V'|-1)k - \sum_{\ell=1}^p{(|P_{\ell}|-1)k} \nonumber \\
  &= (p-1)k \nonumber\\
  &= (|\mathcal{P}|-1)k.
\end{align}
This holds for all partitions $\mathcal{P}$ of $V'$, so $\mathcal{H}'$ is $k$-weakly-partition-connected.
\end{proof}
\begin{proof}[Proof of Lemma~\ref{lem:hypergraph-1}]
  Consider the agreement hypergraph $([L+1],\mathcal{E})$ of $y,(c\ind{1},\dots,c\ind{L+1})$.
The total edge weight is
\begin{align}
  \sum_{e\in \mathcal{E}}{\wt(e)}
  &\ge -n + \sum_{e\in \mathcal{E}}{|e|}
  = -n + \sum_{i=1}^{n}{\sum_{j=1}^{L+1}{\one[y_i = c\ind{j}_i]}}
   = -n + \sum_{j=1}^{L+1}{(n-d(y,c\ind{j}))}
   \ge Lk.
   \label{eq:hyper-1-1}
\end{align}
   By Lemma~\ref{lem:hypergraph-0}, there exists a subset $J\subseteq [L+1]$ of at least two vertices such that $\mathcal{H}'=(J, \{J\cap e: e\in\mathcal{E}\})$ --- which is exactly the agreement hypergraph of $(y,c\ind{j}:j\in J)$ --- is $k$-weakly-partition-connected.
\end{proof}

\begin{remark}
    The condition $|J|\ge 2$ is needed later so that the reduced intersection matrix (defined below) is not a $0\times 0$ matrix, in which case the matrix does not help establish list-decodability.
\end{remark}

\subsection{Reduced Intersection Matrices: Definition and Example}

We work with the reduced intersection matrix, which encodes all agreements from a bad list-decoding configuration into linear constraints on the message symbols (the polynomial coefficients). 
We point out that previous works \cite{ST20, GLSTW21, BGM23} considered a related matrix called the (non-reduced) \emph{intersection matrix}, and our proof would work just as well with this matrix.

\begin{definition}[Reduced intersection matrix]
  The \emph{reduced intersection matrix} $\mathsf{RIM}_{k,q,\mathcal{H}}$ associated with a prime power $q$, degree $k$, and a hypergraph $\mathcal{H}=([t],(e_1,\dots,e_n))$ is a $\wt(\mathcal{E}) \times (t-1)k$ matrix over the field of fractions $\mathbb{F}_q(X_1,\dots,X_n)$. It is constructed as follows.
  For each hyperedge $e_i$ with vertices $j_1<j_2<\dots<j_{|e_i|}$, we add $\wt(e_i) = |e_i|-1$ rows to $\mathsf{RIM}_{k,q,\mathcal{H}}$.
  For $u=2,\dots,|e_i|$, we add a row $r_{i,u}=(r^{(1)},\ldots,r^{(t-1)})$ of length $(t-1)k$ defined as follows:
  \begin{itemize}
    \item If $j = j_1$, then $r\ind{j} = [1, X_i, X_i^2,\dots,X_i^{k-1}]$ 
    \item If $j = j_u$ and $j_u \neq t$, then $r\ind{j} = -[1, X_i, X_i^2,\dots,X_i^{k-1}]$ 
    \item Otherwise, $r\ind{j} = 0^k$.
  \end{itemize}
  We typically omit $k$ and $q$ and write $\mathsf{RIM}_\mathcal{H}$ as $k$ and $q$ are typically understood.
    \label{def:rim}
\end{definition}
\begin{example}
  Recall the example edges of the agreement hypergraph $\mathcal{H}=([7],(e_1,\dots,e_n))$ in Figure~\ref{fig:hypergraph}.
  \begin{center}
    \begin{tikzpicture}[scale=0.5]
      \examplehypergraph
    \end{tikzpicture}
  \end{center}
  The edges $e_{n-2},e_{n-1},e_n$ from $\mathcal{H}$ contribute the following length $(t-1)k$ rows to its reduced intersection matrix:
  \begin{align}
    \begin{bmatrix}
      V_{n-2} & -V_{n-2} & 0 & 0 & 0 & 0 \\
      V_{n-2} & 0& 0& -V_{n-2} & 0 & 0 \\
      0& 0& 0& 0 & V_{n-1} & -V_{n-1}  \\
    \end{bmatrix}
  \end{align}
  Here $V_{i} = [1,X_i,X_i^2,\dots,X_i^{k-1}]$ is a ``Vandermonde row'', and $0$ denotes the length-$k$ vector $[0,0,\dots,0]$.
  Note that each edge $e$ contributes $|e|-1$ rows to the agreement matrix, and in particular $e_n$ does not contribute any rows.
\end{example}

  The following lemma implies that, if every reduced intersection matrix arising from a possible bad list-decoding configuration has full column rank when $X_1=\alpha_1,\dots,X_n=\alpha_n$, the corresponding Reed--Solomon code is list-decodable. 
  
\begin{lemma}[RIM of agreement hypergraphs are not full column rank]
    Let $\mathcal{H}$ be an agreement hypergraph for $(y,c\ind{1},\dots,c\ind{t})$, where $c\ind{j}\in\mathbb{F}_q^n$ are codewords of $\RS_{n,k}(\alpha_1,\dots,\alpha_n)$, not all equal to each other.
    Then the reduced intersection matrix $\mathsf{RIM}_\mathcal{H}(X_{[n]}=\alpha_{[n]})$ does not have full column rank. \label{lem:eval}
\end{lemma}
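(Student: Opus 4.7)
The plan is to exhibit an explicit nonzero vector $w \in \mathbb{F}_q^{(t-1)k}$ in the right kernel of $\mathsf{RIM}_\mathcal{H}(X_{[n]}=\alpha_{[n]})$; since the matrix has exactly $(t-1)k$ columns, such a $w$ directly witnesses the failure of full column rank. Each codeword $c^{(j)}$ comes from a unique polynomial $f^{(j)} \in \mathbb{F}_q[X]$ of degree less than $k$ with $c^{(j)}_i = f^{(j)}(\alpha_i)$, so the natural candidate is to package the differences of the message coefficients. Letting $\mathbf{f}^{(j)} \in \mathbb{F}_q^k$ denote the coefficient vector of $f^{(j)}$, I would define $w$ to be the concatenation of blocks $w^{(j)} = \mathbf{f}^{(j)} - \mathbf{f}^{(t)}$ for $j = 1, \ldots, t-1$, anchored at vertex $t$ so as to match the asymmetric treatment of vertex $t$ in Definition~\ref{def:rim}.

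Next I would verify row-by-row that $\mathsf{RIM}_\mathcal{H}(X_{[n]}=\alpha_{[n]}) \cdot w = 0$. Consider a row contributed by $(e_i, u)$, where $e_i$ has sorted vertices $j_1 < j_2 < \cdots < j_{|e_i|}$ and $u \in \{2, \ldots, |e_i|\}$. Writing $V_i = (1, \alpha_i, \ldots, \alpha_i^{k-1})$, the row has $V_i$ in block $j_1$ (with $j_1 < t$ automatic) and $-V_i$ in block $j_u$ whenever $j_u \neq t$. Dotting with $w$ yields $V_i \cdot (\mathbf{f}^{(j_1)} - \mathbf{f}^{(t)}) - V_i \cdot (\mathbf{f}^{(j_u)} - \mathbf{f}^{(t)}) = f^{(j_1)}(\alpha_i) - f^{(j_u)}(\alpha_i)$ when $j_u \neq t$, and $V_i \cdot (\mathbf{f}^{(j_1)} - \mathbf{f}^{(t)}) = f^{(j_1)}(\alpha_i) - f^{(t)}(\alpha_i)$ when $j_u = t$; in both cases this is $f^{(j_1)}(\alpha_i) - f^{(j_u)}(\alpha_i)$. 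Since $j_1, j_u \in e_i$, Definition~\ref{def:hypergraph} gives $f^{(j_1)}(\alpha_i) = y_i = f^{(j_u)}(\alpha_i)$, so the row vanishes.

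Finally, $w \neq 0$: because $c^{(1)}, \ldots, c^{(t)}$ are distinct codewords, the polynomials $f^{(j)}$ are pairwise distinct in $\mathbb{F}_q[X]$, so for some $j \in [t-1]$ the block $w^{(j)} = \mathbf{f}^{(j)} - \mathbf{f}^{(t)}$ is nonzero. The only mildly delicate point in this proof is the asymmetric role of vertex $t$ in Definition~\ref{def:rim}, but that is exactly what forces the ``anchor at $t$'' form of $w$; once the right ansatz is chosen, the verification reduces to a one-line telescoping identity and no real obstacle remains.
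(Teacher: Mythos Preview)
Your proof is correct and takes essentially the same approach as the paper: the paper exhibits the identical kernel vector $[f\ind{1}-f\ind{t},\dots,f\ind{t-1}-f\ind{t}]^T$ and invokes distinctness of the $f\ind{j}$ to conclude it is nonzero. Your version is simply more explicit, unpacking the row-by-row verification that the paper summarizes as ``by definition.''
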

\begin{proof}
    By definition, 
    \begin{align}
        \mathsf{RIM}_\mathcal{H}(X_{[n]}=\alpha_{[n]}) \cdot 
        \begin{bmatrix}
            f\ind{1}-f\ind{t}\\
            \vdots\\
            f\ind{t-1}-f\ind{t}
        \end{bmatrix}= 0
    \end{align}
    where $f\ind{1},\dots,f\ind{t}\in\mathbb{F}_q^{k}$ are the vectors of coefficients of the polynomials that generate the codewords $c\ind{1},\dots,c\ind{t}\in\mathbb{F}_q^n$.
    Since these vectors are not all equal to each other, $\mathsf{RIM}_\mathcal{H}(X_{[n]}=\alpha_{[n]})$ does not have full column rank.
\end{proof}
\begin{remark}[Symmetries of reduced intersection matrices]
From this definition, it should be clear that we can divide the variables $X_1,\dots,X_n$ into at most $2^L$ classes such that variables in the same class are \emph{exchangeable} with respect to the reduced intersection matrix $\mathsf{RIM}_\mathcal{H}$: if $e_i$ and $e_{i'}$ are the same hyperedge, then swapping $X_i$ and $X_{i'}$ yields the same reduced intersection matrix (up to row permutations). 
This observation turns out to be crucial for bringing the alphabet size all the way down to linear; without it, we would get a quadratic alphabet size.
\label{rem:exchangable}
\end{remark}

\begin{remark}
The pairwise distinctness requirement in the definition of average-radius-list-decodability (see Section~\ref{subsec:results}) is nonetheless crucial in the proof of Theorem~\ref{thm:main}, despite the weaker requirement in Lemma~\ref{lem:eval}. That is because we will eventually apply Lemma~\ref{lem:eval} on the subcollection of codewords given from Lemma~\ref{lem:hypergraph-1}, which can potentially be arbitrary. The guarantee that this subcollection of codewords is not all equal to each other would then follow from pairwise distinctness of the codewords in the original list.
\end{remark}

\subsection{Reduced Intersection Matrices: Full Column Rank}

The following theorem shows that reduced intersection matrices of $k$-weakly-partition-connected hypergraphs are nonsingular when viewed as a matrix over $\mathbb{F}_q(X_1,\dots,X_n)$.
This was essentially conjectured by Shangguan and Tamo \cite{ST20} and essentially established by Brakensiek, Gopi, and Makam \cite{BGM23}, who conjectured and showed, respectively, nonsingularity of the (non-reduced) intersection matrix under similar conditions.
By the same union bound argument as in \cite[Theorem 5.8]{ST20}, Theorem~\ref{lem:int-mat-1} already implies list-decodability of Reed--Solomon codes up to the generalized Singleton bound over exponentially large field sizes, which is \cite[Theorem 1.5]{BGM23}. 
For completeness, and to demonstrate how the hypergraph perspective more directly connects the list-decoding problem to the GM-MDS theorem, we include a proof of Theorem~\ref{lem:int-mat-1} in Appendix~\ref{app:bgm}.
\begin{theorem}[Full column rank. Implicit from Theorem A.2 of \cite{BGM23}]
    Let $n$ and $k$ be positive integers and $\mathbb{F}_q$ be a finite field. 
    Let $\mathcal{H}$ be a $k$-weakly-partition-connected hypergraph with $n$ hyperedges and at least $2$ vertices.
    Then $\mathsf{RIM}_{\mathcal{H}}$ has full column rank over the field $\mathbb{F}_q(X_1,\cdots,X_n)$.
    \label{lem:int-mat-1}
\end{theorem}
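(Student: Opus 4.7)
Plan. I would prove Theorem~\ref{lem:int-mat-1} by following the strategy outlined in Section~\ref{ssec:hypergraph}: reducing the full-column-rank claim to the GM-MDS theorem \cite{Lovett18,YildizH19} via a hypergraph tree-packing. The target is to exhibit a $k(t-1) \times k(t-1)$ square submatrix $M$ of $\mathsf{RIM}_\mathcal{H}$ whose determinant, viewed as a polynomial in $\mathbb{F}_q[X_1, \ldots, X_n]$, is not identically zero; the condition $q \geq 2n-k-1$ is precisely what is needed to invoke GM-MDS over $\mathbb{F}_q$.

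\textbf{Step 1: Tree packing.} Form the multigraph $G_\mathcal{H}$ on vertex set $[t]$ by replacing each hyperedge $e \in \mathcal{E}$ with the star centered at $\min(e)$; the $|e|-1$ spokes of this star are in bijection with the rows $r_{i,u}$ that $e$ contributes to $\mathsf{RIM}_\mathcal{H}$. As sketched in Section~\ref{ssec:hypergraph}, $k$-weak-partition-connectivity of $\mathcal{H}$ implies $k$-partition-connectivity of $G_\mathcal{H}$ (each star meets any partition $\mathcal{P}$ in at least $|\mathcal{P}(e)| - 1$ edges), so the Nash-Williams-Tutte theorem yields $k$ edge-disjoint spanning trees $T_1, \ldots, T_k$ of $G_\mathcal{H}$. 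Their $k(t-1)$ spokes select $k(t-1)$ rows of $\mathsf{RIM}_\mathcal{H}$, forming a square submatrix $M$; it suffices to show $\det M \not\equiv 0$.

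\textbf{Step 2: Polynomial reformulation.} A nonzero vector in $\ker M$ corresponds, after identifying each length-$k$ column block with the coefficients of a polynomial $F_j \in \mathbb{F}_q(X_{[n]})[Y]_{<k}$ (with $F_t := 0$), to a nonzero tuple $(F_1, \ldots, F_{t-1})$ satisfying $F_j(X_i) = F_{j'}(X_i)$ for every tree spoke $\{j, j'\}$ coming from hyperedge $e_i$. Rooting each $T_\ell$ at $t$, the $T_\ell$-path from any $j \in [t-1]$ up to $t$ relays ``vanishing information'' from $F_t = 0$ back toward $F_j$, and, combining across the $k$ trees, yields a collection of vanishing patterns $\{S_j \subseteq [n]\}_{j \in [t-1]}$ prescribing sets of $X_i$ at which each $F_j$ must vanish under any such nontrivial solution.

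\textbf{Step 3: Application of GM-MDS.} I would verify that the vanishing patterns $\{S_j\}$ satisfy the Hall-type hypothesis of GM-MDS, a check that unfolds to precisely the $k$-weak-partition-connectivity of $\mathcal{H}$ via the Nash-Williams-Tutte decomposition. GM-MDS then supplies $\alpha_1, \ldots, \alpha_n \in \mathbb{F}_q$ (possible since $q \geq 2n-k-1$) such that no nonzero tuple of degree-$<k$ polynomials simultaneously vanishes at the respective prescribed evaluation points, forcing each $F_j \equiv 0$ and contradicting the existence of a nontrivial kernel over the evaluated matrix. Hence $\det M(X_{[n]} = \alpha_{[n]}) \neq 0$, so $\det M \not\equiv 0$ in $\mathbb{F}_q[X_{[n]}]$, establishing that $\mathsf{RIM}_\mathcal{H}$ has full column rank over $\mathbb{F}_q(X_{[n]})$.

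The main obstacle is Step 3, specifically the Hall-condition verification: one must argue that every union $\bigcup_{j \in J} S_j$ for $J \subseteq [t-1]$ is at least as large as the GM-MDS-mandated threshold, and this is exactly where $k$-weak-partition-connectivity enters, translated through the $k$ edge-disjoint spanning trees. A subtlety is that two spokes from distinct trees may come from the same hyperedge $e_i$ and thus correspond to the same variable $X_i$; this is absorbed into the careful min-cut/max-flow accounting that the Nash-Williams-Tutte packing provides, and it is precisely this accounting that the hypergraph orientation viewpoint makes transparent (as advertised in Section~\ref{ssec:hypergraph}).
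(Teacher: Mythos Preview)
Your overall plan --- find a concrete evaluation $\alpha_{[n]}$ via GM-MDS at which $\mathsf{RIM}_\mathcal{H}$ has full column rank --- matches the paper's. But Step~2 contains a real gap, and Step~3 inherits it.

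The problem is the claim that rooting each $T_\ell$ at $t$ ``relays vanishing information from $F_t=0$ back toward $F_j$'' and produces vanishing sets $S_j$ for the individual $F_j$. A spoke $\{j,j'\}$ coming from hyperedge $e_i$ imposes $F_j(X_i)=F_{j'}(X_i)$, not $F_j(X_i)=0$. Along a root-path $j=v_0,v_1,\dots,v_s=t$ in $T_\ell$ using hyperedges $e_{i_1},\dots,e_{i_s}$, the only genuine vanishing you get is $F_{v_{s-1}}(X_{i_s})=0$; the earlier equalities $F_{v_{u-1}}(X_{i_u})=F_{v_u}(X_{i_u})$ do not collapse, because the evaluation point changes at every step. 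Concretely, with $t=3$, $k=2$ and trees $T_1=\{1\!-\!3,\;1\!-\!2\}$, $T_2=\{2\!-\!3,\;1\!-\!2\}$ (rooted at $3$), neither $F_1$ nor $F_2$ acquires two vanishing conditions; you only get $F_1(\alpha_a)=0$, $F_2(\alpha_b)=0$, and two equalities $F_1(\alpha_c)=F_2(\alpha_c)$, $F_1(\alpha_d)=F_2(\alpha_d)$. So there are no sets $S_j$ of size $\ge k$ to feed into GM-MDS, and in any case GM-MDS is a statement about realizing a zero pattern on a fixed matrix, not about forcing a coupled system of degree-$<k$ polynomials with equality constraints to be identically zero.

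The paper's proof avoids this by moving to the \emph{dual} side. It does not pre-select a square submatrix of $\mathsf{RIM}_\mathcal{H}$. Instead, via a hypergraph \emph{orientation} theorem (Theorem~\ref{thm:hypergraph}, strictly stronger than the Nash--Williams--Tutte packing you invoke), it extracts multiplicities $\delta_1,\dots,\delta_t\ge 0$ with $\sum_j\delta_j=n-k$ such that $\delta_j$ copies of $S_j:=\{i:j\notin e_i\}$ form an $(n,n-k)$-GZP (Corollary~\ref{cor:hypergraph}). GM-MDS then yields $\alpha_{[n]}$ and an invertible $M$ with $(MH)_{\ell,i}=0$ for $i\in S_\ell$, where $H$ is the Reed--Solomon parity-check matrix. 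A putative nonzero kernel vector of $\mathsf{RIM}_\mathcal{H}(X_{[n]}=\alpha_{[n]})$ is read as codewords $c\ind{1},\dots,c\ind{t}$, from which one builds a single word $y$ agreeing with each $c\ind{j}$ on $\{i:j\in e_i\}$; the zero pattern of $MH$ forces $MHy=0$, hence $y$ is a codeword, and $k$-weak-partition-connectivity (via the partitions $\{j\}\sqcup([t]\setminus\{j\})$) gives $y=c\ind{j}$ for all $j$, a contradiction. The GZP lives on the parity-check rows, not on your $F_j$'s, and the combinatorial input is an orientation with $k$ edge-disjoint in-paths to a root, which is exactly what makes the GZP Hall condition go through --- tree packing alone is not what is used.
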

\begin{remark}
  We note that, \cite{BGM23} assumes throughout their paper that the alphabet size $q$ is sufficiently large, but, as in Theorem~\ref{lem:int-mat-1}, this assumption is easily dropped:
    For any fixed field size $q$, take $Q$ to be a sufficiently large power of $q$.
    Then, by the ``$q$ sufficiently large'' version of Theorem~\ref{lem:int-mat-1}, matrix $\mathsf{RIM}_{Q,\mathcal{H}}$ has full column rank over the field $\mathbb{F}_Q(X_1,\dots,X_n)$.
    Hence, the determinant of some square full-rank submatrix of $\mathsf{RIM}_{Q,\mathcal{H}}$ is a nonzero polynomial in $\mathbb{F}_Q[X_1,\dots,X_n]$.
    The entries of $\mathsf{RIM}_{Q,\mathcal{H}}$ can all be viewed as polynomials over $\mathbb{F}_q$, so the corresponding full-rank submatrix of $\mathsf{RIM}_{q,\mathcal{H}}$ has a determinant that is a nonzero polynomial in $\mathbb{F}_q[X_1,\dots,X_n]$ --- symbolically, the determinants are the same polynomials, as $\mathbb{F}_q$ and $\mathbb{F}_Q$ have the same characteristic.
    Hence, the matrix $\mathsf{RIM}_{q,\mathcal{H}}$ has full column rank over the field $\mathbb{F}_q(X_1,\dots,X_n)$.
    \label{rem:int-mat-1}
\end{remark}

\subsection{Reduced Intersection Matrix: Row Deletions}

We consider row deletions from the reduced intersection matrix. 
The goal of this section is to establish Lemma~\ref{lem:del}, that the full-column-rank-ness of reduced intersection matrices are robust to row deletions.

\begin{definition}[Row deletion of reduced intersection matrix]
  Given a hypergraph $\mathcal{H}=([t],(e_1,\dots,e_n))$ and set $B \subseteq [n]$, define $\mathsf{RIM}_{\mathcal{H}}^B$ to be the submatrix of $\mathsf{RIM}_\mathcal{H}$ obtained by deleting all rows containing a variable $X_i$ with $i\in B$.
  \label{def:del}
\end{definition}

The next lemma roughly says that, given a reduced intersection matrix $\mathsf{RIM}_\mathcal{H}$ with some constant factor ``slack'' in the combinatorial constraints, we can omit a constant fraction of the rows without compromising the full-column-rank-ness of the matrix.
\begin{lemma}[Robustness to deletions]
  Let $\mathcal{H}=([t],\mathcal{E})$ be a $(k+\varepsilon n)$-weakly-partition-connected hypergraph with $t \ge 2$, where $\mathcal{E}=(e_1,\dots,e_n)$.
  For all sets $B\subseteq [n]$ with $|B|\le \varepsilon n$, we have that $\mathsf{RIM}_{\mathcal{H}}^B$ is nonempty and has full column rank. 
  \label{lem:del}
\end{lemma}
\begin{proof}
  By definition of the reduced intersection matrix $\mathsf{RIM}_\mathcal{H}$, the matrix with row deletions $\mathsf{RIM}_\mathcal{H}^B$ is the matrix $\mathsf{RIM}_{\mathcal{H}'}$, where $\mathcal{H}'=([t],\mathcal{E}')$ is the hypergraph obtained from $\mathcal{H}$  by deleting $e_i$ for $i\in B$.
  By Theorem~\ref{lem:int-mat-1}, it suffices to prove that $\mathcal{H}'$ is $k$-weakly-partition connected.
  Indeed, consider any partition $\mathcal{P}$ of $[t]$.
  We have
  \begin{align}
    \sum_{e\in\mathcal{E}'}{\max\{|\mathcal{P}(e)|-1, 0\}}
    &= \sum_{i\in[n]}{\max\{|\mathcal{P}(e_i)|-1, 0\}}  - \sum_{i\in B}{\max\{|\mathcal{P}(e_i)|-1, 0\}} \nonumber\\
    &\ge (k+\varepsilon n)\cdot(|\mathcal{P}|-1)  - |B|\cdot (|\mathcal{P}|-1) 
    = k\cdot(|\mathcal{P}|-1), 
  \end{align}
  as desired.
  The first inequality holds because $\mathcal{H}$ is $(k+\varepsilon n)$-weakly-partition-connected, and, trivially, any edge $e_i$ touches at most $|\mathcal{P}|$ parts of $\mathcal{P}$.
\end{proof}

\section{Proof of List-Decodability with Linear-Sized Alphabets}
\label{sec:proof}

\subsection{Overview of the Proof}
\label{sec:proof-overview}

\tikzstyle{startstop} = [rectangle , text width=3cm , rounded corners, minimum width=3cm, minimum height=1cm, text centered, draw=black, fill=red!30]
\tikzstyle{arrow} = [thick, >=stealth]
\begin{figure*}
\label{diagram}
\centering
\begin{tikzpicture}[node distance=2cm]
\footnotesize

\node (1)[startstop, yshift=0cm, fill=orange!20] at (0,0) {\underline{Lemma~\ref{lem:eval}} \\ Bad list-decoding configuration has $(k+\varepsilon n)$-w.p.c agreement hypergraph};
\node (2)[startstop, fill=orange!20] at (4,0) {\underline{Lemma~\ref{lem:hypergraph-1}}\\ \textsf{RIM}s for agreement hypergraphs do not have full column rank};
\node (3)[startstop, fill=blue!50!green!20] at (8,0) {\underline{Lemma~\ref{lem:main}}\\ \textsf{RIM}s for $(k+\varepsilon n)$-w.p.c hypergraphs have full column rank w.h.p.};
\node (4)[startstop, fill=green!20] at (8,-3.5) {\underline{Theorem~\ref{thm:main}}\\ RS code list-decodable w.h.p.};
\node (x)[coordinate,above=1.5 of 4] {};
\draw [arrow] (2) |- (x);
\draw [arrow] (1) |- (x);
\draw [arrow] (3) |- (x);
\draw [arrow,->] (x) --node[anchor=east,minimum width=3cm,align=center]{Union bound over possible \\ agreement hypergraphs} (4);

\node (5)[startstop, fill=blue!50!green!20] at (0,3.5) {\underline{Lemma~\ref{lem:cert-2}}\\ If \textsf{RIM} not full column rank, it admits a certificate.};

\node (6)[startstop, fill=blue!50!green!20] at (4,3.5) {\underline{Corollary~\ref{cor:count}}\\ Number of possible certificates is small.};

\node (7)[startstop, fill=blue!50!green!20] at (8,3.5) {\underline{Corollary~\ref{cor:product}}\\ The probability of any one certificate is very small};

\node (y)[coordinate,above=1.5 of 3] {};
\draw [arrow] (5) |- (y);
\draw [arrow] (6) |- (y);
\draw [arrow] (7) |- (y);
\draw [arrow,->] (y) --node[anchor=east,minimum width=3cm,align=center]{Union bound over \\ possible certificates} (3);

\node [left of =5,color=blue!50!black,anchor=east]
{\begin{minipage}{3cm} \begin{center} Properties of \texttt{GetCertificate}, which generates certificates for non-full-rank \textsf{RIM}s.
\end{center}\end{minipage}};
\end{tikzpicture}

\caption{
A roadmap of our proof. The orange boxes are preliminaries, and the blue-green boxes are the meat of the proof address in Section~\ref{sec:proof}. All probabilities are over the random choice of evaluation points $\alpha_1,\dots,\alpha_n$ for our Reed--Solomon code.
}\label{fig:roadmap}
\end{figure*}
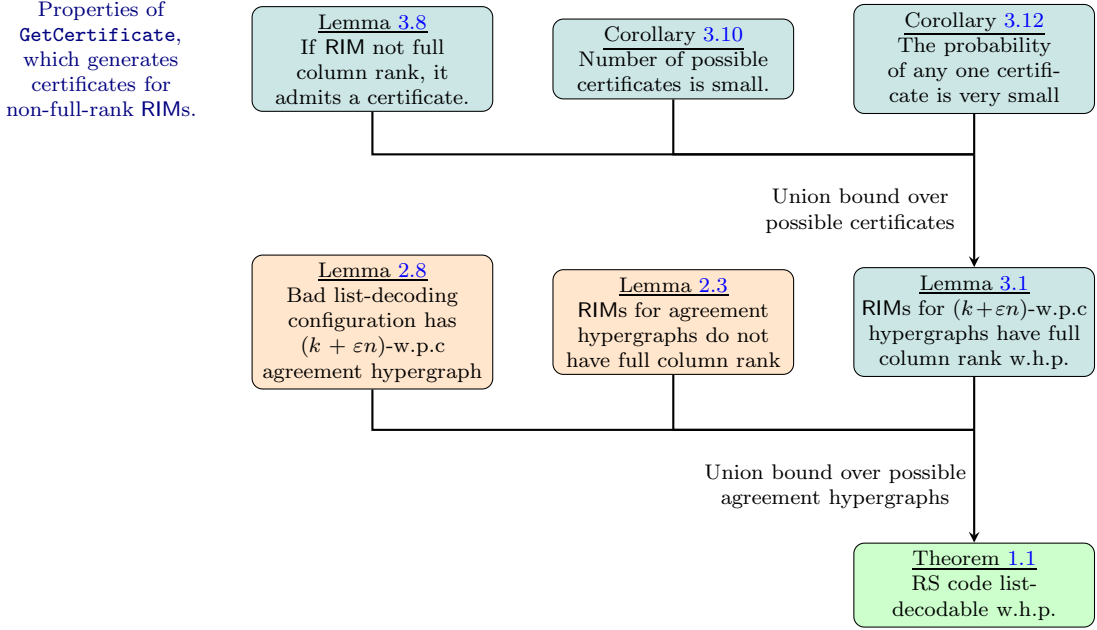

To explain our ideas, we first consider why previous results in \cite{ST20, GLSTW21, BGM23} require an exponentially large alphabet. 
By Lemma~\ref{lem:hypergraph-1} and Lemma~\ref{lem:eval}, every bad list-decoding configuration admits a weakly-partition-connected agreement hypergraph whose reduced intersection matrix does not have full column rank. 
Thus, to prove list-decodability, it suffices to show that, with high probability, every such reduced intersection matrix has full column rank.
Because these matrices have full column rank as symbolic matrices (Theorem~\ref{lem:int-mat-1}, see also \cite{BGM23}), they have full column rank under random evaluations with probability at least $1-\frac{Lk^2}{q}$ by the Schwarz-Zippel lemma (take the determinant of any full rank square submatrix).
Then, if $q$ is sufficiently large, by a union bound over all $2^{(L+1)n}$ possible reduced intersection matrices, all reduced intersection matrices have full column rank with high probability, so our code is list-decodable with high probability.
This union bound requires $q$ to be at least exponential in $n$.

Our key high-level insight is to use the ``slackness'' in the coding parameters to amplify the probability that the reduced intersection matrix fails to be full rank from $O_{L,k,n}(1/q)$ to $1/q^{\Omega(n)}$; that way, a smaller alphabet size $q$ suffices.
To see this, consider the toy problem of independently picking $m$ random row vectors $v_1,\cdots,v_m\in\mathbb{F}_q^n$ to form an $m\times n$ matrix $M$, which we want to have full column rank. If we choose $m=n$, which is the optimal choice of $m$, then the probability that M has full column rank is bounded by a function that is $\Theta(1/q)$, and this happens only if each $v_i$ is not in the span of $v_1,\dots, v_{i-1}$. However, suppose we choose $m = (1 + \varepsilon)n$ for some small $\varepsilon > 0$. In this case, we could afford $\varepsilon n$ ``faulty'' vectors $v_i$ , i.e., $v_i$ may be in the span of previous vectors, in which case we just skip it and consider the next vector. The probability that the matrix $M$ has full column rank is then exponentially small, $1/q^{\Omega(\varepsilon n)}$.
In the same way, the slackness in our coding parameters means the reduced intersection matrix has roughly $1+\varepsilon$ times as many rows as columns, so (with some additional ideas) we can similarly amplify the probability that the reduced intersection matrix fails to be full rank.

Lemma~\ref{lem:main} below captures this probability amplification. Our main result, Theorem~\ref{thm:main}, follows by applying Lemma~\ref{lem:hypergraph-1} and Lemma~\ref{lem:eval} with Lemma~\ref{lem:main}, and taking a union bound over all $\sum_{t=2}^{L+1}{2^{tn}}$ possible agreement hypergraphs.
\begin{lemma}
  Let $k$ be a positive integer and $\varepsilon>0$.
  For each $(k+\varepsilon n)$-weakly-partition-connected hypergraph $\mathcal{H}=([t],(e_1,\dots,e_n))$ with $t \ge 2$, we have, for $r=\floor{\varepsilon n/2}$,
  \begin{align}
    \Pr_{\alpha_1,\dots,\alpha_n \sim \mathbb{F}_q\text{ distinct}} \left[ \mathsf{RIM}_{\mathcal{H}}(X_{[n]}=\alpha_{[n]})\text{ does not have full column rank} \right] \le \binom{n}{r}2^{tr}\cdot \left(\frac{(t-1)k}{q-n}\right)^r \ .
    \label{eq:main-lem-1}
  \end{align}
  \label{lem:main}
\end{lemma}

At the highest level, our proof of Lemma~\ref{lem:main} is a union bound over ``not-full-column-rank certificates.''
For every sequence of evaluation points $(\alpha_1,\dots,\alpha_n) \in \mathbb{F}_q^n$ for which $\mathsf{RIM}_{\mathcal{H}}$ does not have full column rank, we show that there is a \emph{certificate} $(i_1,\dots,i_r) \in [n]^r$ of distinct indices in $[n]$ (Lemma~\ref{lem:cert-2}), which intuitively ``attests'' to the failure of the matrix $\mathsf{RIM}_\mathcal{H}$ to be full column rank.
We then show that, for any certificate $(i_1,\dots,i_r)$, the probability that $(\alpha_1,\dots,\alpha_n)$ has certificate $(i_1,\dots,i_r)$ is exponentially small. (More precisely, it will at most be $(\frac{(t-1)k}{q-n})^r$. See Corollary~\ref{cor:product}).
We then show that there are not too many certificates (Corollary~\ref{cor:count}), and then union bound over the number of possible certificates to obtain the desired result (Lemma~\ref{lem:main}).

Given an evaluation sequence $(\alpha_1,\dots,\alpha_n)$, we deterministically generate a certificate $(i_1,\dots,i_r)$ as described in Algorithm~\ref{alg:certificate}. Alongside, we also produce a sequence of $(t-1)k \times (t-1)k$ submatrices $M_1,\dots,M_r$ of $\mathsf{RIM}_\mathcal{H}$, as specified in Algorithm~\ref{alg:matrix}, with each matrix entirely determined by the indices $i_1,\dots,i_r$.
The first matrix, $M_1$, is a canonical full-rank submatrix of $\mathsf{RIM}_\mathcal{H}(X_{[n]})$. More generally, for each $j$, the matrix $M_j$ is a full-rank submatrix of $\mathsf{RIM}_\mathcal{H}(X_{[n]})$ and is a deterministic function of $i_1,\dots,i_{j-1}$. 
Additionally, each index $i_j$ depends only on $M_j$ and the evaluation points $\alpha_1,\dots,\alpha_n$.
In this way, we sequentially select $M_1,i_1,M_2,i_2,\dots,$ in that order.

To explain the choice of $i_j$, we note that evaluating $X_{[n]} = \alpha_{[n]}$ forces $\mathsf{RIM}_\mathcal{H}$ to not be full rank, then all of its $(t-1)k \times (t-1)k$ submatrices are not full rank. 
Thus if we sequentially ``reveal" $X_1=\alpha_1, X_2=\alpha_2,\dots$, then at some point, $M_j$ becomes not-full-rank.
We define $i_j$ to be the index such that setting $X_{i_j}=\alpha_{i_j}$ makes $M_j$ not-full-rank.
Conditioned on $M_j$ being full rank with $X_1=\alpha_1,\dots,X_{i_j-1}=\alpha_{i_j-1}$, the probability that $M_j$ becomes not-full-rank when setting $X_{i_j}=\alpha_{i_j}$ is at most $\frac{(t-1)k}{q-n}$: $\alpha_{i_j}$ is uniformly random over at least $q-n$ field elements, and the degree of $X_{i_j}$ in the determinant of $M_j$ is at most $(t-1)k$ (and the determinant is nonzero by definition).
It is possible to run the conditional probabilities in the correct order to conclude that the probability that a particular certificate $i_1,\dots,i_r$ is generated is at most $(\frac{(t-1)k}{q-n})^r$.

At this point, we can already obtain a quadratic alphabet size by proving a weaker version of \eqref{eq:main-lem-1}.
The above argument works if we choose $M_j$ to be any canonical, say the lexicographically smallest, full rank square submatrix of $\mathsf{RIM}_\mathcal{H}^{\{i_1,\dots,i_{j-1}\}}$.
By a union bound over all certificates---of which there are, naively, at most $n^r$---the probability that $\textsf{RIM}_\mathcal{H}(X_{[n]}=\alpha_{[n]})$ is not full rank is at most $n^r\cdot (\frac{(t-1)k}{q-n})^r$, which is exponentially small for $q=\Theta(n^2)$, so we get our list-decoding result with alphabet size $q=\Theta(n^2)$.
To improve the alphabet size to quadratic, we improve the bound on the number of certificates to $\binom{n}{r}2^{tr}$, which is much smaller --- $2^{\Theta(n)}$ rather than $n^{\Theta(n)}$ --- when $r=\Omega(n)$, the parameter regime of interest here.
Our savings in the number of certificates comes from leveraging that there are at most $2^t$ different ``types'' of hyperedges (see Remark~\ref{rem:exchangable}), and thus at most $2^t$ different types of variables $X_i$ in the reduced intersection matrix $\mathsf{RIM}_\mathcal{H}$.
With this observation in mind, we assume, without loss of generality, that the edges of $\mathcal{H}$ are ordered by their respective type (we can relabel the edges of $\mathcal{H}$, which effectively permutes the rows of $\mathsf{RIM}_\mathcal{H}$).

To reduce the number of certificates, we make a more deliberate choice of matrices $M_j$ that leverages the symmetries of $\mathsf{RIM}_{\mathcal{H}}$ (Remark~\ref{rem:exchangable}) and forces the certificates to have $O_t(1)$ increasing sequences $i_1<i_2<i_3<\cdots$. First, we ensure that we can keep a ``bank'' of $\Omega_t(r)$ unused variables of each of the $O_t(1)$ types. Then, starting with a full column rank submatrix $M$ of $\mathsf{RIM}_{\mathcal{H}}$ devoid of all variables in the ``bank,'' we start sequentially applying the evaluations $X_1 = \alpha_1, X_2 = \alpha_2, \ldots$. Whenever $M(X_{\le i_1} = \alpha_{\le i_1})$ turns singular, we find that the evaluation $X_{i_1} = \alpha_{i_1}$ is what 'caused' it to become singular. We then go to the ``bank'' to find a variable $X_{i_1'}$ of the same type as $X_{i_1}$ and ``re-indeterminate'' $M$ by replacing all instances of $X_{i_1}$ in $M$ with $X_{i_1'}$. That way, we ensure that $M$ is, in a sense, ``reused.'' Furthermore, we ensure $i_1' > i_1$, so that the matrix $M(X_{\le i_1} = \alpha_{\le i_1})$ is now nonsingular, so we can keep going. Of course, if we end up reaching the end (i.e. $M(X_{[n]} = \alpha_{[n]})$ is full column rank), then in fact, $\mathsf{RIM}_{\mathcal{H}}(X_{[n]} = \alpha_{[n]})$ is full column rank, and so the evaluations $(\alpha_1, \ldots, \alpha_n)$ were `good' after all.

Otherwise, if the evaluations $(\alpha_1, \ldots, \alpha_n)$ were `bad', then the submatrix $M$ couldn't have reached the end, and that can only happen if some specific type was completely exhausted from the bank. However, given the size of our initial bank, that must have meant that $M$ must have been ``re-indeterminated'' at least $\Omega_t(r)$ times. When that happens, we collect the indices $i_1,\dots,i_\ell$ that we gathered from this round, remove them from $\mathsf{RIM}_\mathcal{H}$, and repeat the process again with a refreshed bank. Since we only need $r$ indices, we end up doing at most $O_t(1)$ rounds. Because each round yields a strictly increasing sequence of indices of length at least $\Omega_t(r)$, then we end up getting a certificate consisting of at most $O_t(1)$ strictly increasing runs of total length $r$, of which there are at most $\binom{n}{r}\cdot O_t(1)^r$ by simple counting.

To be more concrete, when we generate the submatrix $M = M_1$, we ensure that any variable appearing in $M_1$ has the same type as $\Omega_t(r)$ variables that are \emph{not} in $M_1$ (but still in $\mathsf{RIM}_\mathcal{H}$).
This creates a ``bank'' of variables of each type.
Then, if $X_{\le i_1}=\alpha_{\le i_1}$ was the evaluation that made $M_1$ singular, we can get $M_2$ by replacing all copies of $X_{i_1}$ with some $X_{i_1'}$ that is of the same type and in the ``bank.''
Since variables $i_1$ and $i_1'$ are of the same type, they have analogous rows in the reduced intersection matrix 
$\mathsf{RIM}_\mathcal{H}$, so this new matrix $M_2$ is still a submatrix of $\mathsf{RIM}_\mathcal{H}$. 
Therefore, we can pick up where we left off with $M_1$ but with $M_2$ instead. That is, $M_2$ will in fact be full rank when we apply the evaluations $X_{\le i_1} = \alpha_{\le i_1}$. Thus the next index $i_2$ on which $M_2$ turns singular will be strictly greater than $i_1$.
We then repeat the process in $M_2$, replacing $X_{i_2}$ with some $X_{i_2'}$ that is in the ``bank'' and of the same type, getting $M_3$, and so on.
We can continue this process for $\Omega_t(r)$ steps because of the size of the bank of each type, so we get an increasing run of length $\Omega_t(r)$ in our certificate.
After we run out of some type in our bank, we remove the used indices $i_1,\dots,i_\ell$ from $\mathsf{RIM}_\mathcal{H}$ and repeat the process again with a refreshed bank. This continues for $O_t(1)$ times only, as we only need $r$ indices in the end.

\subsection{Setup for Proof of Lemma~\ref{lem:main}}
\label{ssec:proof-def}
We devote most of the remainder of this section to formally proving Lemma~\ref{lem:main}, and conclude with the proof of Theorem~\ref{thm:main}.

\paragraph{Types.}
For a hypergraph $\mathcal{H}=([t],(e_1,\dots,e_n))$, the \emph{type} of an index $i$ (or, by abuse of notation, the type of the variable $X_i$, or the edge $e_i$) is simply the set $e_i\subseteq [t]$.
There are $2^t$ types, and by abuse of notation, we identify the types by the numbers $1,2,\dots,2^t$ in an arbitrary fixed order with a bijection $\tau:2^{[t]}\to [2^t]$, where $2^{[t]}$ denotes the power set of $[t]$.
We say a hypergraph is \emph{type-ordered} if the hyperedges $e_1,\dots,e_n$ are sorted according to their type: $\tau(e_1) \le \tau(e_2)\le\cdots\le \tau(e_n)$.
Since permuting the labels of the edges of $\mathcal{H}$ preserves the rank of $\mathsf{RIM}_{\mathcal{H}}$ (it merely permutes the rows of $\mathsf{RIM}_\mathcal{H}$), we can without loss of generality assume in Lemma~\ref{lem:main} that $\mathcal{H}$ is type-ordered. 

\paragraph{Global variables.}
Throughout the rest of the section, we fix a positive integer $k$, parameter $\varepsilon>0$, and  $\mathcal{H}=([t],(e_1,\dots,e_n))$, a type-ordered $(k+\varepsilon n)$-weakly-partition-connected hypergraph with $t \ge 2$. We also fix
\begin{align}
  r\defeq \left\lfloor \frac{\varepsilon n}{2} \right\rfloor.
\end{align}

\begin{algorithm}[t]
  \caption{$\mathtt{GetMatrixSequence}$}
  \label{alg:matrix}
  \KwIn{indices $i_1,\dots,i_{j-1}\in[n]$ for some $j\ge 1$.}
  \KwOut{$M_1,\dots,M_j$, which are $(t-1)k\times (t-1)k$ matrices over $\mathbb{F}_q(X_1,X_2,\dots,X_n)$.}
  $B\gets \emptyset$, $i_0\gets \perp$, $\ell_0\gets\perp$\;
  \For{$\ell=1,\dots,j$}{ \label{line:outer-loop}
    \tcp*[l]{$M_\ell$ depends only on $i_1,\dots,i_{\ell-1}$}
    \If{$\ell > 1$}{
      \tcp*[l]{Fetch new index from bank $B$}
      $\tau\gets$ the type of $i_{\ell-1}$\label{line:tau}\;
      $s\gets$ number of indices among $i_{\ell_0},i_{\ell_0+1},\dots,i_{\ell-1}$ that are type $\tau$ \label{line:s}\;
      $i_{\ell-1}'\gets$ the $s$-th smallest element of $B$ that has type $\tau$\;
      \If{$i_{\ell-1}'$ \textrm{is defined}}{
        $M_\ell\gets$ the matrix obtained from $M_{\ell-1}$ by replacing all copies of $X_{i_{\ell-1}}$ with $X_{i_{\ell-1}'}$
        \label{line:ml}
      }
    }
    \If{$M_\ell$ \textrm{not yet defined}}{
      \tcp*[l]{Refresh bank $B$}
      $B\gets \emptyset$\;
      \For{$\tau=1,\dots,2^t$}{
        $B\gets B \cup \{\text{largest $\floor{r/2^t}$ indices of type $\tau$ in $[n]\setminus\{i_1,\dots,i_{\ell-1}\}$}\}$
      (if there are less than $\floor{r/2^t}$ indices of type $\tau$, then $B$ contains all such indices) \label{line:B}\;
      }
      $M_\ell\gets $ lexicographically smallest nonsingular $(t-1)k\times (t-1)k$ submatrix of $\mathsf{RIM}_{\mathcal{H}}^{B\cup \{i_1,\dots,i_{\ell-1}\}}$ \label{line:ml2}\;
      $\ell_0\gets \ell$ \tcp{new refresh index} \label{line:refresh-index}\;
    }
  }
  \Return{$M_1,\dots,M_j$}
\end{algorithm}

\begin{algorithm}[t]
  \caption{$\mathtt{GetCertificate}$}
  \label{alg:certificate}
  \KwIn{Evaluation points $(\alpha_1,\dots,\alpha_n) \in \mathbb{F}_q^n$.}
  \KwOut{A ``certificate'' $(i_1,\dots,i_r) \in [n]^r$.}
  \For{$j=1,\dots,r$}{
    \tcp*[l]{$M_1,\dots,M_{j-1}$ stay the same, $M_j$ is now defined}
    $M_1,\dots,M_j = \mathtt{GetMatrixSequence}(i_1,\dots,i_{j-1})$\;
    $i_j \gets $ smallest index $i$ such that $M_j(X_{\le i}=\alpha_{\le i})$ is singular\label{line:ij}\;
    \If{$i_j$\text{ not defined}}{
      \Return{$\perp$}
    }
  }
  \Return{$(i_1,\dots,i_r)$}
\end{algorithm}

\subsection{\texorpdfstring{$\mathtt{GetCertificate}$}{GetCertificates} and \texorpdfstring{$\mathtt{GetMatrixSequence}$}{GetMatrixSequence}: Basic Properties}

As mentioned at the beginning of this section, we design an algorithm,  Algorithm~\ref{alg:certificate}, that attempts to generate a certificate $(i_1,\dots,i_r)\in[n]^r$ for evaluation points $\alpha_1,\dots,\alpha_n$.
It uses Algorithm~\ref{alg:matrix}, a helper function that generates the associated square submatrices $M_1,\dots,M_r$ of $\mathsf{RIM}_\mathcal{H}$.
Below, we establish some basic properties of these algorithms.

First, we establish that the matrices outputted by \texttt{GetMatrixSequence} are well-defined.

\begin{lemma}[Output is well-defined]
  For all sequence of indices $i_1,\dots,i_{j-1}$, if $M_1,\dots,M_j$ is the output of the function $\mathtt{GetMatrixSequence}(i_1,\dots,i_{j-1})$, then $M_1,\dots,M_j$ are well-defined.
  \label{lem:well-defined}
\end{lemma}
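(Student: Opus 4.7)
The plan is to prove the lemma by induction on $\ell \in \{1,\dots,j\}$, showing that each iteration of the outer loop in Algorithm~\ref{alg:matrix} successfully assigns a well-defined matrix $M_\ell$. There are only two branches that can assign $M_\ell$, so I will handle each in turn. The fetch-from-bank branch (line~\ref{line:ml}) is guarded by the explicit check that $i'_{\ell-1}$ is defined; given that, the assignment is merely a symbolic substitution into $M_{\ell-1}$, which is well-defined by the inductive hypothesis. So the only substantive case is the refresh branch (line~\ref{line:ml2}).

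In the refresh branch, $M_\ell$ is taken to be the lexicographically smallest nonsingular $(t-1)k \times (t-1)k$ submatrix of $\mathsf{RIM}_\mathcal{H}^{B \cup \{i_1,\dots,i_{\ell-1}\}}$. Such a submatrix exists precisely when that larger matrix has full column rank, which, by Lemma~\ref{lem:int-mat-2}, holds whenever $|B \cup \{i_1,\dots,i_{\ell-1}\}| \le \lambda k$. So my plan reduces to bounding this set size.

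By construction (line~\ref{line:B}), $B$ contains at most $\lfloor r/2^t \rfloor$ indices of each of the $2^t$ types, hence $|B| \le 2^t \cdot \lfloor r/2^t \rfloor \le r = \lambda k / 2$. In the regime of interest, where $\mathtt{GetMatrixSequence}$ is invoked from $\mathtt{GetCertificate}$ with $j \le r$, we have $|\{i_1,\dots,i_{\ell-1}\}| \le \ell - 1 \le r - 1$. Combining, $|B \cup \{i_1,\dots,i_{\ell-1}\}| \le 2r - 1 < \lambda k$, so Lemma~\ref{lem:int-mat-2} applies and yields the desired nonsingular submatrix.

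I do not anticipate a major obstacle, as the cap of $\lfloor r/2^t \rfloor$ per type in the bank construction is clearly calibrated so that $|B|$ plus the at-most-$r$ previously-committed indices stays within the $\lambda k$ budget of Lemma~\ref{lem:int-mat-2}. The one subtlety is simply invoking the inductive hypothesis to guarantee $M_{\ell-1}$ is well-defined before performing the substitution in line~\ref{line:ml}, but this is immediate. I would also note explicitly that the lemma is only needed (and only proved) for $j \le r$, corresponding to the way $\mathtt{GetMatrixSequence}$ is actually called; larger $j$ is never exercised.
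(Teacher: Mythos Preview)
Your proof is correct and follows essentially the same approach as the paper: in the non-refresh case the assignment is trivially well-defined, and in the refresh case you bound $|B \cup \{i_1,\dots,i_{\ell-1}\}|$ by $2r \le \lambda k$ and invoke Lemma~\ref{lem:int-mat-2}. Your version is slightly more explicit than the paper's in spelling out the induction and in flagging that the argument only covers $j \le r$ (the paper's bound $|B \cup \{i_1,\dots,i_{\ell-1}\}| < |B| + r$ silently relies on the same restriction).
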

 \begin{proof}
    If $\ell$ is a refresh index, then we have $|B\cup\{i_1,\dots,i_{\ell-1}\}|< |B| + r \le 2r \le \varepsilon n$, so by Lemma~\ref{lem:del}, $\mathsf{RIM}_\mathcal{H}^{B\cup\{i_1,\dots,i_{\ell-1}\}}$ is nonempty and has full column rank. Thus $M_\ell$ exists in Line~\ref{line:ml2}.
    If $\ell$ is not a refresh index, $M_\ell$ is always well-defined by definition.
\end{proof}

Next, we observe that \texttt{GetMatrixSequence} is an ``online'' algorithm.
\begin{lemma}[Online]
  $\mathtt{GetMatrixSequence}$ is a deterministic function of $i_1,\dots,i_{j-1}$, and it computes $M_\ell$ ``online'', meaning $M_\ell$ depends only on $i_1,\dots,i_{\ell-1}$ for all $\ell=1,\dots,j$ (and $M_1$ is always the same matrix).
  In particular, $\mathtt{GetMatrixSequence}(i_1,\dots,i_{j-1})$ is a prefix of $\mathtt{GetMatrixSequence}(i_1,\dots,i_j)$.
  \label{lem:online}
\end{lemma}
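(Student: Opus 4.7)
The plan is induction on $\ell$, establishing the invariant that at the start of iteration $\ell$ of the outer loop (Line~\ref{line:outer-loop}) of Algorithm~\ref{alg:matrix}, both the matrices $M_1,\ldots,M_{\ell-1}$ and the local state $(B,\ell_0)$ are a deterministic function of $i_1,\ldots,i_{\ell-2}$ only. This ``online'' invariant is really the whole content of the lemma: once we know $M_\ell$ is a deterministic function of $i_1,\ldots,i_{\ell-1}$ alone, both the ``online'' statement and the ``prefix'' statement follow.

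For the base case $\ell=1$, the ``if $\ell>1$'' branch is skipped, so $M_1$ is still undefined and we fall through to the refresh block with $\{i_1,\ldots,i_{\ell-1}\}=\varnothing$. Thus $B$ is the union over types $\tau$ of the largest $\lfloor r/2^t\rfloor$ indices of type $\tau$ in $[n]$, and $M_1$ is the lexicographically smallest nonsingular $(t-1)k\times(t-1)k$ submatrix of $\mathsf{RIM}_\mathcal{H}^{B}$. Nothing here reads the input, so $M_1$ is a fixed matrix as asserted.

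For the inductive step $\ell\ge 2$, I simply inspect the code and check that every line executed at iteration $\ell$ reads only $i_1,\ldots,i_{\ell-1}$ together with the state $(B,\ell_0)$ inherited from iteration $\ell-1$ (which by induction depends only on $i_1,\ldots,i_{\ell-2}$). Concretely: the type $\tau$ on Line~\ref{line:tau} uses $i_{\ell-1}$; the count $s$ on Line~\ref{line:s} reads $i_{\ell_0},i_{\ell_0+1},\ldots,i_{\ell-1}$, which lies within $i_1,\ldots,i_{\ell-1}$ because $\ell_0\le\ell-1$ (since $\ell_0$ is only ever assigned on Line~\ref{line:refresh-index} at the end of a strictly earlier iteration); the selection of $i'_{\ell-1}$ and the substitution producing $M_\ell$ on Line~\ref{line:ml} use only $B$, $i_{\ell-1}$, $i'_{\ell-1}$, and $M_{\ell-1}$; and if we fall into the refresh block the updates read only $\{i_1,\ldots,i_{\ell-1}\}$ to rebuild $B$ and select $M_\ell$ from $\mathsf{RIM}_\mathcal{H}^{B\cup\{i_1,\ldots,i_{\ell-1}\}}$, then assign $\ell_0\gets\ell$. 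In all cases $M_\ell$ and the updated $(B,\ell_0)$ depend only on $i_1,\ldots,i_{\ell-1}$, closing the induction.

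The prefix claim is then immediate: for every $\ell\le j$, iteration $\ell$ of the call $\mathtt{GetMatrixSequence}(i_1,\ldots,i_j)$ produces the same matrix $M_\ell$ as iteration $\ell$ of $\mathtt{GetMatrixSequence}(i_1,\ldots,i_{j-1})$ by the online invariant, so the output $(M_1,\ldots,M_j)$ of the latter is a prefix of the output $(M_1,\ldots,M_{j+1})$ of the former. There is essentially no mathematical obstacle here; the only point that requires a moment's care is verifying $\ell_0\le\ell-1$ so that the range $i_{\ell_0},\ldots,i_{\ell-1}$ on Line~\ref{line:s} never reaches beyond the inputs currently available, and this is guaranteed by the placement of the assignment on Line~\ref{line:refresh-index} at the very end of the refresh block.
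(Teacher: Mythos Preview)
Your proposal is correct and takes essentially the same approach as the paper, which dispatches the lemma in one line (``By definition and Lemma~\ref{lem:well-defined}''). Your argument is simply a careful line-by-line unpacking of that ``by definition'': you make explicit the loop invariant on the state $(B,\ell_0)$ and verify that iteration~$\ell$ reads only $i_1,\ldots,i_{\ell-1}$, which is exactly what the paper leaves implicit.
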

\begin{proof}
    By definition and Lemma~\ref{lem:well-defined}.
\end{proof}

\begin{definition}[Refresh index]
  In $\mathtt{GetMatrixSequence}$, in the outer loop over $\ell$, we say a \emph{refresh index} is an index $\ell$ obtained at Line~\ref{line:refresh-index} (i.e. when $M_\ell$ is defined on Line~\ref{line:ml2}).
  For example, $\ell=1$ is a refresh index.
\end{definition}

Our first lemma shows that the new indices we are receiving from $\mathtt{GetMatrixSequence}$ are in fact new.

\begin{lemma}[New variable]
  In $\mathtt{GetMatrixSequence}$, in the outer loop iteration over $\ell$ at Line~\ref{line:outer-loop}, if we reach Line~\ref{line:ml} of $\mathtt{GetMatrixSequence}$, variable $X_{i_{\ell-1}'}$ does not appear in $M_{\ell_0},M_{\ell_0+1},\dots,M_{\ell-1}$, where $\ell_0$ is the largest refresh index less than $\ell$.
  \label{lem:cert-0}
\end{lemma}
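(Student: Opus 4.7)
The plan is to show the claim by carefully tracking how the set of variables appearing in each matrix $M_m$ evolves as $m$ ranges from $\ell_0$ to $\ell-1$, and then checking that $X_{i_{\ell-1}'}$ is never among them. Two facts will do most of the work: first, $B$ is held fixed throughout the outer-loop iterations $\ell_0, \ell_0+1, \ldots, \ell$, since $B$ is only refreshed at refresh indices and $\ell_0$ is by definition the largest refresh index less than $\ell$; second, $M_{\ell_0}$ is constructed as a submatrix of $\mathsf{RIM}_{\mathcal{H}}^{B\cup\{i_1,\ldots,i_{\ell_0-1}\}}$ on Line~\ref{line:ml2}, so by definition of row deletion, no variable $X_j$ with $j \in B \cup \{i_1,\ldots,i_{\ell_0-1}\}$ appears in $M_{\ell_0}$.

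For the base case $m=\ell_0$: since $i_{\ell-1}'$ is chosen from $B$ (Line~\ref{line:B}) and $B$ is unchanged since iteration $\ell_0$, we have $i_{\ell-1}' \in B$, hence $X_{i_{\ell-1}'}$ does not appear in $M_{\ell_0}$. For $\ell_0 < m \le \ell-1$, the matrix $M_m$ is obtained from $M_{m-1}$ by replacing $X_{i_{m-1}}$ with $X_{i_{m-1}'}$ (Line~\ref{line:ml}), so by an easy induction the variables appearing in $M_m$ are contained in $\mathrm{Vars}(M_{\ell_0}) \cup \{X_{i_{\ell_0}'}, X_{i_{\ell_0+1}'}, \ldots, X_{i_{m-1}'}\}$. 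It therefore suffices to verify that $i_{\ell-1}' \neq i_p'$ for every $\ell_0 \le p \le \ell-2$.

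The remaining step is a type-counting argument using the definitions on Lines~\ref{line:tau}--\ref{line:s}. Let $\tau = \tau(e_{i_{\ell-1}})$ and $\tau_p = \tau(e_{i_p})$. If $\tau_p \neq \tau$, then $i_p'$ is drawn from $B$ as an element of type $\tau_p$ while $i_{\ell-1}'$ is drawn as an element of type $\tau$, so they are distinct. If $\tau_p = \tau$, then $i_p'$ and $i_{\ell-1}'$ are the $s_p$-th and $s_{\ell-1}$-th smallest elements of $B$ of type $\tau$ respectively, where $s_p$ is the count of type-$\tau$ indices in $i_{\ell_0}, \ldots, i_p$ and $s_{\ell-1}$ is the count in $i_{\ell_0},\ldots,i_{\ell-1}$. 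Since $p \le \ell-2$ and $i_{\ell-1}$ itself contributes one more type-$\tau$ index, we get $s_p < s_{\ell-1}$, and hence $i_p' \neq i_{\ell-1}'$.

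I do not expect a real obstacle: the statement is essentially a bookkeeping claim about the bank $B$ and the ``rank-indexing'' used on Line~\ref{line:s}. The only subtle point to be explicit about is that $B$ has not been refreshed between $\ell_0$ and $\ell$, so the same bank that was used to construct $M_{\ell_0}$ is the one from which $i_{\ell-1}'$ is drawn; this alignment is precisely what makes both the base case and the strict-monotonicity $s_p < s_{\ell-1}$ meaningful.
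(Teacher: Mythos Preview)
Your proof is correct and follows essentially the same approach as the paper's: both argue that $B$ is fixed from iteration $\ell_0$ through $\ell$, that $X_{i_{\ell-1}'}$ is absent from $M_{\ell_0}$ because $i_{\ell-1}'\in B$, and that the only new variables introduced in $M_{\ell_0+1},\dots,M_{\ell-1}$ are the $X_{i_p'}$, which are distinct from $X_{i_{\ell-1}'}$ because the $(\tau,s)$ pairs produced at Lines~\ref{line:tau}--\ref{line:s} are pairwise distinct. Your version just spells out the induction on $\mathrm{Vars}(M_m)$ and the strict inequality $s_p<s_{\ell-1}$ more explicitly.
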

\begin{proof}
   Let $B$ be the set defined in Line~\ref{line:B} at iteration $\ell_0$.
   In iterations $\ell'=\ell_0,\ell_0+1,\dots,\ell$, the set $B$ is the same, and $i_{\ell-1}'$ is in this set $B$ by definition.
   Thus, the variable $X_{i_{\ell-1}'}$ does not appear in $M_{\ell_0}$ by definition.
   For $\ell'=\ell_0,\ell_0+1,\dots,\ell$, the $(\tau,s)$ pairs generated at Line~\ref{line:tau} and Line~\ref{line:s} are pairwise distinct, so $X_{i_{\ell-1}'}$ is not added to $M_{\ell'}$ for $\ell'=\ell_0+1,\dots,\ell-1$ and thus is not in $M_{\ell_0},M_{\ell_0+1},\dots,M_{\ell-1}$.
\end{proof}

To show that the probability of a particular certificate $(i_1,\dots,i_r)$ is small (Lemma~\ref{lem:product}, Corollary~\ref{cor:product}), we crucially need that $i_1,\dots,i_r$ are pairwise distinct.
The next lemma proves that this is always the case.
\begin{lemma}[Distinct indices]
  For any sequence of evaluation points $(\alpha_1,\dots,\alpha_n) \in \mathbb{F}_q^n$, the output of $\mathtt{GetCertificate}(\alpha_1,\dots,\alpha_n)$ is a sequence $(i_1,\dots,i_r) \in [n]^r$ of pairwise distinct indices.
  \label{lem:distinct}
\end{lemma}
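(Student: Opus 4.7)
The plan is to prove by induction on $\ell\in\{1,\dots,r\}$ the following stronger claim $(\star)$: \emph{the variable $X_{i_j}$ does not appear in $M_\ell$ for any $j<\ell$}. Given $(\star)$, the lemma is immediate: if $i_\ell=i_j$ for some $j<\ell$, then $X_{i_\ell}$ is absent from $M_\ell$, so the substitution $X_{i_\ell}=\alpha_{i_\ell}$ is a no-op and $M_\ell(X_{\le i_\ell}=\alpha_{\le i_\ell})=M_\ell(X_{\le i_\ell-1}=\alpha_{\le i_\ell-1})$; the right-hand side is nonsingular by the minimality of $i_\ell$ at Line~\ref{line:ij}, contradicting that the left-hand side is singular.

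To prove $(\star)$, I would split into cases according to whether $\ell$ is a refresh index. The refresh case is immediate from Line~\ref{line:ml2}: $M_\ell$ is a submatrix of $\mathsf{RIM}_{\mathcal{H}}^{B\cup\{i_1,\dots,i_{\ell-1}\}}$, and by the definition of row deletion combined with Definition~\ref{def:rim} (in which $X_{i_j}$ appears only in rows associated with edge $e_{i_j}$), every row containing any $X_{i_j}$ with $j<\ell$ has been erased.

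For the non-refresh case, let $\ell_0<\ell$ be the largest refresh index at most $\ell$. By the refresh case, $M_{\ell_0}$ contains no $X_{i_j}$ for $j<\ell_0$, and $M_{\ell_0+1},\dots,M_\ell$ arise from $M_{\ell_0}$ via the successive substitutions $X_{i_m}\mapsto X_{i'_m}$ at Line~\ref{line:ml} for $m=\ell_0,\dots,\ell-1$. Each substitution removes $X_{i_m}$, so it suffices to verify that no substituted variable $X_{i'_m}$ equals $X_{i_j}$ for any $j<\ell$. For $j<\ell_0$ this holds because $i'_m\in B$, while $B$ is constructed at Line~\ref{line:B} from $[n]\setminus\{i_1,\dots,i_{\ell_0-1}\}$. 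For $\ell_0\le j\le m$, one first observes that $X_{i_j}$ must appear in $M_j$ (otherwise $M_j(X_{\le i_j}=\alpha_{\le i_j})=M_j(X_{\le i_j-1}=\alpha_{\le i_j-1})$, contradicting the minimality of $i_j$); but Lemma~\ref{lem:cert-0} guarantees that $X_{i'_m}$ does not appear in $M_{\ell_0},\dots,M_m$, so $X_{i'_m}\neq X_{i_j}$.

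The main obstacle is this non-refresh inductive step, where we must rule out that a newly substituted variable $X_{i'_m}$ accidentally coincides with some earlier chosen $X_{i_j}$; this is exactly where the bank construction at Line~\ref{line:B} and Lemma~\ref{lem:cert-0} do the work, together with the auxiliary observation that each chosen $i_j$ corresponds to a variable actually present in $M_j$.
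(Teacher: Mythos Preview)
Your proof is correct and follows essentially the same approach as the paper: both prove the auxiliary claim that $M_\ell$ contains none of $X_{i_1},\dots,X_{i_{\ell-1}}$ (from which distinctness follows since $X_{i_\ell}$ must appear in $M_\ell$), and both use the same three ingredients---the refresh-index definition of $B$, Lemma~\ref{lem:cert-0}, and the observation that $X_{i_j}$ appears in $M_j$. The only cosmetic difference is that the paper inducts cleanly from $\ell-1$ to $\ell$ (checking only that $i'_{\ell-1}\notin\{i_1,\dots,i_{\ell-1}\}$), whereas you unroll the induction back to the last refresh index $\ell_0$ and track all substitutions $X_{i_m}\mapsto X_{i'_m}$ simultaneously; one minor imprecision is that your ``it suffices'' sentence asks for $i'_m\neq i_j$ for all $j<\ell$, while you only verify (and only need) the cases $j<\ell_0$ and $\ell_0\le j\le m$.
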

\begin{proof}
  By definition of $i_\ell$ at Line~\ref{line:ij} of $\mathtt{GetCertificate}$, variable $X_{i_\ell}$ must be in $M_\ell$, so suffices to show that $M_\ell$ never contains any variable $X_i$ for $i\in\{i_1,\dots,i_{\ell-1}\}$.
  We induct on $\ell$.
  If $\ell$ is a refresh index, this is true by definition.
  If not, let $\ell_0$ be the largest refresh index less than $\ell$.
  By induction, $i_1,\dots,i_{\ell-2}$ are not in $M_{\ell-1}$, so we just need to show $i_{\ell-1}'$ (the new index replacing $i_{\ell-1}$ in $M_{\ell}$ at Line~\ref{line:ml}) is not any of $i_1,\dots,i_{\ell-1}$.
  It is not any of $i_1,\dots,i_{\ell_0-1}$ because none of those indices are in $B$ by definition.
  It is not any of $i_{\ell'}$ for $\ell'=\ell_0,\dots,\ell-1$, because $X_{i_{\ell'}}$ is in $M_{\ell'}$, but $X_{i_{\ell-1}'}$ is not, by Lemma~\ref{lem:cert-0}.
\end{proof}

\subsection{Bad Evaluation Points Admit Certificates}

Here, we establish Lemma~\ref{lem:cert-2}, that if some evaluation points make $\mathsf{RIM}_\mathcal{H}$ not full-column-rank, then $\mathtt{GetCertificate}$ outputs a certificate.
To do so, we first justify our matrix constructions, showing that the matrices in $\mathtt{GetMatrixSequence}$ are in fact submatrices of $\mathsf{RIM}_\mathcal{H}$.

\begin{lemma}[$\mathtt{GetMatrixSequence}$ gives submatrices of $\mathsf{RIM}_\mathcal{H}$]
  For all sequence of indices $i_1,\dots,i_{j-1}$, if $M_1,\dots,M_j$ is the output of $\mathtt{GetMatrixSequence}(i_1,\dots,i_{j-1})$, then $M_1,\dots,M_j$ are $(t-1)k\times (t-1)k$ submatrices of $\mathsf{RIM}_\mathcal{H}$.
  \label{lem:cert-1}
\end{lemma}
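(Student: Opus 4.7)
The plan is to prove the claim by induction on $\ell \in \{1, \ldots, j\}$, showing that each $M_\ell$ is a $(t-1)k \times (t-1)k$ submatrix of $\mathsf{RIM}_{\mathcal{H}}$. The dimension claim is immediate: the refresh-index branch (Line~\ref{line:ml2}) defines $M_\ell$ to have these dimensions explicitly, and the substitution branch (Line~\ref{line:ml}) preserves dimensions. So the real content is the submatrix property. The base case $\ell = 1$ is always a refresh index, and $M_1$ is constructed directly as a submatrix of $\mathsf{RIM}_{\mathcal{H}}^{B} \subseteq \mathsf{RIM}_{\mathcal{H}}$, with existence guaranteed by Lemma~\ref{lem:well-defined}. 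For the inductive step, the refresh-index case is identical, so the interesting case is when $\ell > 1$ is not a refresh index.

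In that case, $M_\ell$ is obtained from $M_{\ell-1}$ by globally replacing $X_{i_{\ell-1}}$ with $X_{i_{\ell-1}'}$, where the algorithm's choice (Lines~\ref{line:tau}--\ref{line:s}) forces $i_{\ell-1}$ and $i_{\ell-1}'$ to have the same type, i.e.\ $e_{i_{\ell-1}} = e_{i_{\ell-1}'}$. The key structural fact, read off from Definition~\ref{def:rim}, is that the rows of $\mathsf{RIM}_{\mathcal{H}}$ contributed by an edge $e_i$ are determined solely by the set $e_i \subseteq [t]$ (which fixes the column pattern and signs) together with the variable $X_i$ (which fills the Vandermonde entries). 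Consequently, whenever $e_i = e_{i'}$, the rows contributed by $e_i$ and $e_{i'}$ are related by the variable swap $X_i \leftrightarrow X_{i'}$; moreover, the variable $X_i$ occurs in $\mathsf{RIM}_{\mathcal{H}}$ only in rows contributed by $e_i$.

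Applying this in our setting: by the inductive hypothesis every row of $M_{\ell-1}$ is a row of $\mathsf{RIM}_{\mathcal{H}}$, so occurrences of $X_{i_{\ell-1}}$ in $M_{\ell-1}$ are confined to rows coming from $e_{i_{\ell-1}}$. Substituting $X_{i_{\ell-1}'}$ for $X_{i_{\ell-1}}$ turns each such row into the corresponding row contributed by $e_{i_{\ell-1}'}$ (well-defined precisely because the two edges coincide as subsets of $[t]$), while leaving every other row of $M_{\ell-1}$ untouched. Hence every row of $M_\ell$ is a row of $\mathsf{RIM}_{\mathcal{H}}$. To conclude that $M_\ell$ is a genuine submatrix (no duplicated rows), invoke Lemma~\ref{lem:cert-0}: since $X_{i_{\ell-1}'}$ does not appear in $M_{\ell-1}$, no row contributed by $e_{i_{\ell-1}'}$ was already present, so the new rows introduced by the substitution do not collide with the rows they replace or with the untouched rows.

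The one conceptual point to articulate carefully is this type-preserving "row swap" — it is exactly the symmetry of $\mathsf{RIM}_{\mathcal{H}}$ noted in Remark~\ref{rem:exchangable}, and the proof essentially amounts to making that symmetry precise at the level of individual rows. Once that is in place and combined with Lemma~\ref{lem:cert-0}, the induction is mechanical and no further difficulty arises.
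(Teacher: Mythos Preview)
Your proposal is correct and follows essentially the same approach as the paper: induct on $\ell$, handle refresh indices trivially, and for the substitution case verify that (i) same-type rows of $\mathsf{RIM}_\mathcal{H}$ are related by the variable swap (the symmetry of Remark~\ref{rem:exchangable}) and (ii) $X_{i_{\ell-1}'}$ is absent from $M_{\ell-1}$ via Lemma~\ref{lem:cert-0}. Your write-up is slightly more explicit than the paper's about why no row collisions occur, but the argument is the same.
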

\begin{proof}
  We proceed with induction on $\ell=1,\dots,j$.
  First, if $\ell$ is a refresh index, then $M_\ell$ is a submatrix of $\mathsf{RIM}_\mathcal{H}$ by definition.
  In particular, $M_1$ is a submatrix of $\mathsf{RIM}_\mathcal{H}$, so the base case holds.
  Now suppose $\ell$ is not a refresh index and $M_{\ell-1}$ is a submatrix of $\mathsf{RIM}_\mathcal{H}$.
  Matrix $M_\ell$ is defined by replacing all copies of $X_{i_{\ell-1}}$ with $X_{i_{\ell-1}'}$.
  To check that $M_\ell$ is a submatrix of $\mathsf{RIM}_\mathcal{H}$, it suffices to show that
  \begin{itemize}
    \item[(i)] for each row of $\mathsf{RIM}_\mathcal{H}$ containing $X_{i_{\ell-1}}$, replacing all copies of $X_{i_{\ell-1}}$ with $X_{i_{\ell-1}'}$ gives another row of $\mathsf{RIM}_\mathcal{H}$, and
    \item[(ii)] the variable $X_{i_{\ell-1}'}$ does not appear in $M_{\ell-1}$.
  \end{itemize}
  The first item follows from the fact that indices $i_{\ell-1}$ and $i_{\ell-1}'$ are of the same type, so (i) holds by definition of types and $\mathsf{RIM}_\mathcal{H}$ (see also Remark~\ref{rem:exchangable}).
  The second item is Lemma~\ref{lem:cert-0}.
  Thus, $M_\ell$ is a submatrix of $\mathsf{RIM}_\mathcal{H}$, completing the induction.
\end{proof}

We now show that any $n$-tuple of bad evaluation points admits a certificate. 

\begin{lemma}[Bad evaluations points admit certificates]
  If $(\alpha_1,\dots,\alpha_n) \in \mathbb{F}_q^n$ are evaluation points such that $\mathsf{RIM}_\mathcal{H}(X_{[n]}=\alpha_{[n]})$ does not have full column rank, $\mathtt{GetCertificate}(\alpha_1,\dots,\alpha_n)$ returns a certificate $(i_1,\dots,i_r) \in [n]^r$ (rather than $\perp$).
  \label{lem:cert-2}
\end{lemma}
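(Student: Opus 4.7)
The plan is to show that the $\mathtt{GetCertificate}$ loop runs to completion without ever hitting the early-return branch (Line~\ref{line:ij}, where $i_j$ is undefined). By the structure of the algorithm, this reduces to showing that for every $j=1,\dots,r$, after the call to $\mathtt{GetMatrixSequence}(i_1,\dots,i_{j-1})$, the matrix $M_j(X_{[n]}=\alpha_{[n]})$ is singular --- then the smallest $i$ such that $M_j(X_{\le i}=\alpha_{\le i})$ is singular automatically exists and is at most $n$, so $i_j$ is defined.

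I would proceed by induction on $j$. The inductive hypothesis is merely that $i_1,\dots,i_{j-1}$ are well-defined elements of $[n]$ (vacuous at $j=1$), so the call $\mathtt{GetMatrixSequence}(i_1,\dots,i_{j-1})$ is legal. Applying Lemma~\ref{lem:cert-1} to this call yields that $M_j$ is a $(t-1)k \times (t-1)k$ submatrix of $\mathsf{RIM}_\mathcal{H}$ obtained by row selection. In particular, $M_j$ and $\mathsf{RIM}_\mathcal{H}$ have the \emph{same} column set, so any vector in the right kernel of $\mathsf{RIM}_\mathcal{H}(X_{[n]}=\alpha_{[n]})$ also lies in the right kernel of $M_j(X_{[n]}=\alpha_{[n]})$.

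By hypothesis, $\mathsf{RIM}_\mathcal{H}(X_{[n]}=\alpha_{[n]})$ does not have full column rank, so there exists a nonzero $v \in \mathbb{F}_q^{(t-1)k}$ with $\mathsf{RIM}_\mathcal{H}(X_{[n]}=\alpha_{[n]}) \cdot v = 0$. The previous paragraph then gives $M_j(X_{[n]}=\alpha_{[n]}) \cdot v = 0$, and since $M_j$ is square, this means $M_j(X_{[n]}=\alpha_{[n]})$ is singular. Thus $i_j$ is well-defined in Line~\ref{line:ij}, completing the inductive step. Iterating over $j=1,\dots,r$, the algorithm outputs $(i_1,\dots,i_r)\in [n]^r$ rather than $\perp$, which is the claim.

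Honestly, there is no real obstacle here --- all the heavy lifting was done in Lemma~\ref{lem:cert-1} (which establishes that $M_j$ is a genuine row-submatrix of $\mathsf{RIM}_\mathcal{H}$, despite the variable-swapping in $\mathtt{GetMatrixSequence}$) and in Lemma~\ref{lem:well-defined} (which guarantees that each $M_\ell$ in the sequence exists). The only subtlety worth emphasizing is that we need $M_j$ and $\mathsf{RIM}_\mathcal{H}$ to share the same columns so that a kernel vector transfers; this is automatic from the row-selection description in Lemma~\ref{lem:cert-1}.
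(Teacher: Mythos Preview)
Your proof is correct and takes essentially the same approach as the paper. The paper argues by contradiction---if $\mathtt{GetCertificate}$ returns $\perp$ at iteration $j$, then $M_j(X_{[n]}=\alpha_{[n]})$ is nonsingular, and since $M_j$ is a (row-)submatrix of $\mathsf{RIM}_\mathcal{H}$ by Lemma~\ref{lem:cert-1}, this forces $\mathsf{RIM}_\mathcal{H}(X_{[n]}=\alpha_{[n]})$ to have full column rank---while you run the same implication directly and make explicit the point that $M_j$ shares all $(t-1)k$ columns with $\mathsf{RIM}_\mathcal{H}$, so kernel vectors transfer.
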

\begin{proof}
  Suppose for contradiction that $\mathtt{GetCertificate}$ returns $\perp$ at iteration $j$ in the loop.
  Then there is no index $i$ such that $M_j(X_{\le i}=\alpha_{\le i})$ is singular, so in particular, $M_j(X_{[n]}=\alpha_{[n]})$ is nonsingular and thus has full column rank.
  By Lemma~\ref{lem:cert-1}, $M_j$ is a submatrix of $\mathsf{RIM}_\mathcal{H}$, so we conclude $\mathsf{RIM}_\mathcal{H}$ has full column rank.
\end{proof}

\subsection{Bounding the Number of Possible Certificates}

In this section, we upper bound the number of possible certificates.
The key step is to prove the following structural result about certificates.

\begin{lemma}[Certificate structure]
  Given a sequence of evaluation points $(\alpha_1,\dots,\alpha_n) \in \mathbb{F}_q^n$ such that $\mathsf{RIM}_\mathcal{H}(X_{[n]} = \alpha_{[n]})$ is not full column rank, the return value $(i_1,\dots,i_r) = \mathtt{GetCertificate}(\alpha_1,\dots,\alpha_n)$ satisfies $i_{j-1} < i_j$ for all but at most $2^t$ values $j=2,\dots,r$.
  \label{lem:count}
\end{lemma}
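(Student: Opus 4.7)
The plan is to identify the ``bad'' values $j\in\{2,\dots,r\}$ (those with $i_{j-1}\ge i_j$) with the \emph{refresh indices} of $\mathtt{GetMatrixSequence}$, and then show there are at most $2^t$ such refreshes. Concretely, I will first show that if $\ell\ge 2$ is not a refresh index then $i_\ell > i_{\ell-1}$, and second I will bound the number of refresh indices using the bank sizes.

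For the first step, fix a non-refresh iteration $\ell$. Here $M_\ell$ is obtained from $M_{\ell-1}$ by replacing $X_{i_{\ell-1}}$ with $X_{i_{\ell-1}'}$, where $i_{\ell-1}'$ is the $s$-th smallest type-$\tau$ element of the current bank $B$ ($\tau$ being the type of $i_{\ell-1}$ and $s$ the count of type-$\tau$ indices in $\{i_{\ell_0},\dots,i_{\ell-1}\}$). The key substep is proving $i_{\ell-1}' > i_{\ell-1}$ by cases on whether $i_{\ell-1}\in B$. If $i_{\ell-1}\notin B$, then by Lemma~\ref{lem:distinct} it is a type-$\tau$ index of $[n]\setminus\{i_1,\dots,i_{\ell_0-1}\}$ not in $B$, and since $B$ holds the \emph{largest} $\lfloor r/2^t\rfloor$ such type-$\tau$ indices, every type-$\tau$ element of $B$ exceeds $i_{\ell-1}$. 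If instead $i_{\ell-1}\in B$, then $i_{\ell-1}=i_{\ell^*}'$ was inserted at some earlier iteration $\ell^*+1$ at bank position $s^*$, and the current position $s$ strictly exceeds $s^*$ because the counting set $\{i_{\ell_0},\dots,i_{\ell-1}\}$ properly contains $\{i_{\ell_0},\dots,i_{\ell^*}\}$ and acquires the new type-$\tau$ element $i_{\ell-1}$. In either case, writing $b_1<b_2<\cdots$ for the type-$\tau$ elements of $B$, we obtain $i_{\ell-1}'=b_s > i_{\ell-1}$. Once this is in hand, Lemma~\ref{lem:cert-0} guarantees $X_{i_{\ell-1}'}$ did not appear in $M_{\ell-1}$, so $M_\ell(X_{\le i_{\ell-1}}=\alpha_{\le i_{\ell-1}})$ coincides with $M_{\ell-1}(X_{\le i_{\ell-1}-1}=\alpha_{\le i_{\ell-1}-1})$ after relabeling the indeterminate $X_{i_{\ell-1}}$ as $X_{i_{\ell-1}'}$; the latter is nonsingular by the minimality of $i_{\ell-1}$ as $M_{\ell-1}$'s critical index, forcing $i_\ell > i_{\ell-1}$.

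For the second step, I claim refresh can only be triggered by types $\tau$ whose bank slot is full, i.e., $|B\cap\text{type }\tau|=\lfloor r/2^t\rfloor$. Indeed, if fewer than $\lfloor r/2^t\rfloor$ type-$\tau$ indices lie in $[n]\setminus\{i_1,\dots,i_{\ell_0-1}\}$, all of them land in $B$, so by Definition~\ref{def:rim} the matrix $M_{\ell_0}$ (a submatrix of $\mathsf{RIM}_\mathcal{H}^{B\cup\{i_1,\dots,i_{\ell_0-1}\}}$) has no type-$\tau$ variables, and no subsequent replacement can introduce one since the replacing variable shares the type of the replaced critical variable (which must live in the current matrix). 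Hence every segment between consecutive refreshes must contain at least $\lfloor r/2^t\rfloor+1$ type-$\tau$ critical indices for some full-bank $\tau$, forcing segment length $\ge\lfloor r/2^t\rfloor+1$. Since the total certificate length is $r$, the number of segments $N$ obeys $(N-1)(\lfloor r/2^t\rfloor+1)+1\le r$, and a direct calculation gives $N-1<2^t$, hence at most $2^t$ refresh indices in $\{2,\dots,r\}$, as required.

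The main obstacle I anticipate is the case $i_{\ell-1}\in B$ when showing $i_{\ell-1}' > i_{\ell-1}$: keeping the position counter $s$ synchronized with the enumeration $b_1<b_2<\cdots$ of type-$\tau$ bank elements requires carefully tracking which critical indices have appeared and of what type, verifying that a bank element, once it has entered the matrix as a replacement and later become a critical index, is itself replaced by a strictly larger bank element of the same type. The argument rests on the order-preserving extraction ``$s$-th smallest'' together with the strict monotonicity of $s$ in the number of type-$\tau$ critical indices observed since the last refresh.
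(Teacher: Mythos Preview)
Your proposal is correct and follows essentially the same two-claim structure as the paper's proof: (i) non-refresh indices give $i_{\ell-1}<i_\ell$, and (ii) consecutive refresh indices are at least $\lfloor r/2^t\rfloor+1$ apart. Your treatment is in fact more careful than the paper's on one point: the paper simply asserts ``by construction of $B$ and $i_{\ell-1}'$, we know that $i_{\ell-1}'>i_{\ell-1}$,'' whereas you correctly identify and handle the subtle case $i_{\ell-1}\in B$ via the position-counter monotonicity argument; your direct argument for (i) (showing $M_\ell(X_{\le i_{\ell-1}}=\alpha_{\le i_{\ell-1}})$ is nonsingular) is equivalent to the paper's contradiction argument but slightly cleaner.
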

\begin{proof}
  Let $(i_1,\dots,i_r)$ be the return of $\mathtt{GetCertificate}$, and let $M_1,\dots,M_r$ be the associated matrix sequence.
  By Lemma~\ref{lem:online}, we have $M_1,\dots,M_j = \mathtt{GetMatrixSequence}(i_1,\dots,i_{j-1})$ for $j=1,\dots,r$.
  Recall an index $\ell\in[r]$ is a \emph{refresh index} if $M_\ell$ is defined on Line~\ref{line:ml2} rather than Line~\ref{line:ml}.
  The lemma follows from two claims: 
  \begin{enumerate}
    \item[(i)] If $\ell>1$ is not a refresh index, then $i_{\ell-1}<i_\ell$.
    \item[(ii)] Any two refresh indices differ by at least $r/2^t$.
  \end{enumerate}

  To see claim (i), let $\ell_0$ be the largest refresh index less than $\ell$.
  By definition of a refresh index, the set $B$ stays constant between when $M_{\ell_0}$ is defined and when $M_{\ell}$ is defined.
  From the definition of $i_j$ at Line~\ref{line:ij} in $\mathtt{GetCertificate}$, we know that
  \begin{itemize}
    \item For $i< i_{\ell-1}$ the matrix $M_{\ell-1}(X_{\le i}=\alpha_{\le i})$ is nonsingular. 
    \item The matrix $M_{\ell}(X_{\le i_\ell}=\alpha_{\le i_\ell})$ is singular. 
  \end{itemize}
  Suppose for contradiction that $i_\ell < i_{\ell-1}$. (Note that $i_{\ell-1} \neq i_{\ell}$ by Lemma~\ref{lem:distinct}.)
  We contradict the first item by showing, using the second item, that $M_{\ell-1}(X_{\le i_\ell}=\alpha_{\le i_\ell})$ is also singular.
  By the definition of $\mathtt{GetMatrixSequence}$, since $\ell$ is not a refresh index, $M_\ell$ is defined in Line~\ref{line:ml}.
  By construction of $B$ and $i_{\ell-1}'$, we know that $i_{\ell-1}' > i_{\ell-1} > i_\ell$.
  Thus, not only is $M_\ell$ obtained from $M_{\ell-1}$ by replacing all copies of $X_{i_{\ell-1}}$ with $X_{i_{\ell-1}'}$, but $M_{\ell}(X_{\le i_\ell}=\alpha_{\le i_\ell})$ is also obtained by replacing all copies of $X_{i_{\ell-1}}$ with $X_{i_{\ell-1}'}$ in $M_{\ell-1}(X_{\le i_\ell}=\alpha_{\le i_\ell})$ .
  Moreover, the variable $X_{i_{\ell-1}'}$ does not appear in $M_{\ell-1}$ by Lemma~\ref{lem:cert-0}. So we conclude that, as $M_{\ell}(X_{\le i_\ell}=\alpha_{\le i_\ell})$ is singular, so is $M_{\ell-1}(X_{\le i_\ell}=\alpha_{\le i_\ell})$.

  Now we show claim (ii). Suppose $\ell_0$ and $\ell_1$ are consecutive refresh indices. If a variable of type $\tau$ appears in the matrix $M_{\ell_0}$, there must be exactly $\floor{r/2^t}$ indices of type $\tau$ in $B$ (if there were fewer, then $B\cup\{i_1,\dots,i_{\ell-1}\}$ would contain all indices of type $\tau$, and the corresponding variables would not appear in $\mathsf{RIM}_{\mathcal{H}}^{B\cup\{i_1,\dots,i_{\ell-1}\}}$).
    Let $\tau$ be the type of index $i_{\ell_1-1}$.
    Since $\ell_1$ is a refresh index, the number of indices of type $\tau$ among $i_{\ell_0},i_{\ell_0+1},\dots,i_{\ell_1-1}$ must therefore be $\floor{r/2^t}+1$.
    In particular, this means $\ell_1 - \ell_0 \ge \floor{r/2^t} + 1 \ge r/2^t$, as desired.
\end{proof}
\begin{corollary}[Certificate count]
  The number of possible outputs of $\mathtt{GetCertificate}$ is at most $\binom{n}{r}2^{tr}$.
  \label{cor:count}
\end{corollary}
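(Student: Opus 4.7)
The plan is to combine Lemma~\ref{lem:distinct} and Lemma~\ref{lem:count} to encode each possible certificate compactly as a pair (underlying set, labeling of positions into runs). First, I would note that any non-$\perp$ output $(i_1,\ldots,i_r)$ of $\mathtt{GetCertificate}$ consists, by Lemma~\ref{lem:distinct}, of pairwise distinct entries. Hence such a certificate is uniquely determined by its underlying $r$-element set $S=\{i_1,\ldots,i_r\}\subseteq [n]$ together with the particular ordering in which $S$ is listed; the choice of $S$ contributes the factor $\binom{n}{r}$.

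Next, I would extract from the proof of Lemma~\ref{lem:count} the slightly stronger fact that the certificate decomposes into at most $2^t$ strictly increasing runs. Claim (i) in that proof forces every descent $i_{j-1}>i_j$ to occur at a refresh index $j$, while claim (ii) forces consecutive refresh indices to be at least $r/2^t$ apart. Since $\ell_1=1$ is a refresh index and the last refresh index is at most $r$, the total number of refresh indices, and hence the total number of (maximal) increasing runs in $(i_1,\dots,i_r)$, is at most $2^t$.

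Having at most $2^t$ runs, I would count the admissible orderings of $S$ by assigning each element of $S$ a label in $[2^t]$ indicating which run it belongs to. Once such a labeling $f:S\to[2^t]$ is fixed, the order within each fiber $f^{-1}(b)$ is forced to be increasing, and the runs are concatenated in label order, yielding a unique sequence. Every admissible certificate arises from at least one such labeling, so the number of orderings of a fixed $S$ is at most $(2^t)^r=2^{tr}$. Multiplying by the $\binom{n}{r}$ choices for $S$ then gives the claimed bound $\binom{n}{r}\,2^{tr}$.

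There is no real obstacle here; the only subtlety is being careful to invoke the proof of Lemma~\ref{lem:count} to get at most $2^t$ runs, rather than its literal statement (at most $2^t$ descents), which would only yield $2^t+1$ runs and hence the looser bound $\binom{n}{r}(2^t+1)^r$ in the final count.
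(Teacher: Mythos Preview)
Your proposal is correct and follows essentially the same approach as the paper's proof: choose the underlying $r$-subset of $[n]$ via Lemma~\ref{lem:distinct}, then assign each element a run label in $[2^t]$ using the bound on the number of increasing runs from Lemma~\ref{lem:count}. You are in fact slightly more careful than the paper, which simply cites Lemma~\ref{lem:count} for ``at most $2^t$ increasing runs'' while, as you note, its literal statement only bounds the number of descents by $2^t$; extracting the $2^t$-run bound (rather than $2^t+1$) indeed requires dipping into the proof to count refresh indices directly.
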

\begin{proof}
  The certificate consists of $r$ distinct indices of $[n]$ by Lemma~\ref{lem:distinct}. We can choose those in $\binom{n}{r}$ ways. These indices are distributed between at most $2^t$ increasing runs by Lemma~\ref{lem:count}. We can distribute these indices between the $2^t$ increasing runs in at most $(2^t)^r$ ways. 
\end{proof}

\subsection{Bounding the Probability of One Certificate}

The goal of this section is to establish Corollary~\ref{cor:product}, which states that the probability of obtaining a particular certificate is at most $\left(\frac{(t-1)k}{q-n}\right)^r$.

\begin{lemma}
  Let $i_1,\dots,i_{r}\in[n]$ be pairwise distinct indices, and $M_1,\dots,M_r$ be $(t-1)k\times (t-1)k$ submatrices of $\mathsf{RIM}_\mathcal{H}$. Over randomly chosen pairwise distinct evaluation points $\alpha_1,\dots\alpha_n \in \mathbb{F}_q$, define the following events for $j=1,\dots,r$:
  \begin{itemize}
      \item $E_j$ is the event that $M_j(X_{\le i}=\alpha_{\le i})$ is non-singular for all $i < i_j$.
      \item $F_j$ is the event that $M_j(X_{\le i_j}=\alpha_{\le i_j})$ is singular.
    \end{itemize}
    The probability that all the events hold is at most $\left(\frac{(t-1)k}{q-n}\right)^{r}$.
    \label{lem:product}
\end{lemma}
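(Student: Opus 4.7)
The plan is to reveal $\alpha_1,\alpha_2,\ldots,\alpha_n$ one at a time in the natural order and apply the chain rule of conditional probability, showing that each ``critical'' step costs a factor of at most $(t-1)k/(q-n)$. Without loss of generality I relabel the pairs $(i_j,M_j)$ so that $i_1<i_2<\cdots<i_r$; this permutation of indices does not affect the probability of $\bigcap_j (E_j\cap F_j)$.

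The key polynomial-degree bound is $\deg_{X_{i_j}}\det M_j\le (t-1)k$. By Definition~\ref{def:rim}, every row of $\mathsf{RIM}_\mathcal{H}$ contains exactly one variable $X_i$, and its entries are either $0$ or monomials in $X_i$ of degree at most $k-1$. Rows containing $X_{i_j}$ come from the single edge $e_{i_j}$, so $M_j$ contains at most $|e_{i_j}|-1\le t-1$ such rows. The Leibniz formula for the determinant then gives the bound (with room to spare).

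Next I bound $\Pr[E_j\cap F_j\mid \alpha_{\le i_j-1}]$. If $\alpha_{\le i_j-1}$ falsifies $E_j$, the probability is zero; so assume it does not. Equivalently (since singularity is preserved under further substitution, so $E_j$ is equivalent to non-singularity at $i=i_j-1$), the polynomial $p(X_{i_j},\ldots,X_n):=\det M_j(X_{\le i_j-1}=\alpha_{\le i_j-1})$ is nonzero. Writing $p=\sum_d p_d(X_{i_j+1},\ldots,X_n)\,X_{i_j}^d$, pick any monomial $m$ in the later variables and any index $d$ with $[m]p_d\ne 0$. Event $F_j$ forces $p(\alpha_{i_j},X_{i_j+1},\ldots,X_n)\equiv 0$, so in particular its $m$-coefficient vanishes: $\alpha_{i_j}$ is a root of the nonzero univariate polynomial $\sum_d ([m]p_d)\,X^d$, which has degree at most $(t-1)k$. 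Since $\alpha_{i_j}$ is uniform over at least $q-n$ values (the $i_j-1\le n-1$ previously chosen values being excluded by distinctness), we conclude $\Pr[E_j\cap F_j\mid \alpha_{\le i_j-1}]\le (t-1)k/(q-n)$.

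Finally, I telescope. With the ordering $i_1<\cdots<i_r$, each event $E_j\cap F_j$ for $j<r$ is measurable with respect to $\alpha_1,\ldots,\alpha_{i_r-1}$, so conditioning on this prefix isolates the last factor and yields
\[
\Pr\Bigl[\bigcap_{j=1}^r (E_j\cap F_j)\Bigr]\le \frac{(t-1)k}{q-n}\cdot\Pr\Bigl[\bigcap_{j=1}^{r-1}(E_j\cap F_j)\Bigr].
\]
Induction on $r$ gives the claimed bound $\bigl((t-1)k/(q-n)\bigr)^r$. The main subtlety, which the monomial-picking step resolves, is that $M_j$ may involve variables $X_i$ with $i>i_j$, so the event ``$M_j$ is singular at $\alpha_{\le i_j}$'' is the vanishing of a multivariate polynomial rather than a single field element, and we must extract from it a single univariate condition on $\alpha_{i_j}$ of the correct degree.
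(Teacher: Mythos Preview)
Your proof is correct and follows essentially the same approach as the paper's: relabel so that $i_1<\cdots<i_r$, apply the chain rule revealing $\alpha_{i_j}$ one at a time, and use that $\det M_j$ has $X_{i_j}$-degree at most $(t-1)k$ together with the observation that $F_j$ forces a single nonzero coefficient (in the later variables) of this polynomial to vanish at $\alpha_{i_j}$. The paper's proof is organized as a full product via the chain rule rather than an induction/telescoping step, but the content is the same.
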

\begin{proof}
  Note that the set of evaluation points $\alpha_1,\dots,\alpha_n$ for which events $E_j$ and $F_j$ occur depends only on $M_j$ and $i_j$.
  Furthermore, each of the events $E_j$ and $F_j$ depends only on $M_i$, $i_j$, and the evaluation points.
  Thus, by relabeling the index $j$, we may assume without loss of generality that $i_1<i_2<\cdots<i_r$.
  We emphasize that we are \emph{not} assuming that the output of $\mathtt{GetCertificate}$ satisfies $i_1<\cdots<i_r$ (this is not true).
  We are instead just choosing how we ``reveal'' our events $E_j$ and $F_j$: starting with the smallest index in $i_1, \ldots , i_r$ and ending with the largest index in it.

  We have
  \begin{align}
    \Pr_{\alpha_{[n]}}\left[\bigwedge_{j=1}^{r} (E_j\wedge F_j)\right]
    &=\prod_{j=1}^{r} \Pr_{\alpha_{[n]}}\left[E_j\wedge F_j|E_1\wedge F_1\wedge\cdots\wedge E_{j-1}\wedge F_{j-1}\right] \nonumber\\
    &\le\prod_{j=1}^{r} \Pr_{\alpha_{[n]}}\left[F_j|E_1\wedge F_1\wedge\cdots\wedge E_{j-1}\wedge F_{j-1}\wedge E_j\right] 
    \label{eq:main-lem-2}
  \end{align}
  Note that $E_1\wedge F_1\wedge \cdots \wedge E_{j-1}\wedge F_{j-1}\wedge E_j$ depends only on $\alpha_1,\dots,\alpha_{i_j-1}$, and $F_j$ depends only on $\alpha_1,\dots,\alpha_{i_j}$.
  For any $\alpha_1,\dots,\alpha_{i_j-1}$ for which $E_1\wedge F_1\wedge \cdots \wedge E_{j-1}\wedge F_{j-1}\wedge E_j$ holds, we have that $M_j(X_{\le i_j-1}= \alpha_{\le i_j-1})$ is a $(t-1)k\times (t-1)k$ matrix in $\mathbb{F}_q(X_{i_j},X_{i_j+1},\dots,X_n)$ whose determinant is a nonzero polynomial of degree at most $(t-1)k$ in each variable (the determinant contains at most $t-1$ rows including $X_{i_j}$, each time with maximum degree $k-1$). In particular, at most $(t-1)k$ values of $\alpha_{i_j}$ can make the determinant zero since, viewing the determinant as a polynomial in variables $X_{i_j+1},\dots,X_n$ with coefficients in $\mathbb{F}_q[X_{i_j}]$, any single nonzero coefficient becomes zero on at most $(t-1)k$ values of $\alpha_{i_j}$.
  Conditioned on $\alpha_1,\dots,\alpha_{i_j-1}$, the field element $\alpha_{i_j}$ is uniformly random over $q-i_j+1\ge q-n$ elements.
  Thus, we have, for all $\alpha_1,\dots,\alpha_{i_j-1}$ such that $E_1\wedge F_1\wedge \cdots \wedge E_{j-1}\wedge F_{j-1}\wedge E_j$,
  \begin{align}
    \Pr_{\alpha_{i_j}}\left[F_j|\alpha_1,\dots,\alpha_{i_j-1}\right]\le \frac{(t-1)k}{q-n}.
    \label{eq:product-3}
  \end{align}
  Since $E_1\wedge F_1\wedge \cdots \wedge E_{j-1}\wedge F_{j-1}\wedge E_j$ depends only on $\alpha_{\le i_j-1}$ and $F_j$ depends only on $\alpha_{\le i_j}$, we have
  \begin{align}
    \Pr_{\alpha_{[n]}}\left[F_j|E_1\wedge F_1\wedge\cdots\wedge E_{j-1}\wedge F_{j-1}\wedge E_j\right] \le \frac{(t-1)k}{q-n}.
  \end{align}
  Combining with \eqref{eq:main-lem-2} gives the desired result.
\end{proof}

The key result for this section is a corollary of Lemma~\ref{lem:product}.
\begin{corollary}[Probability of one certficiate]
   For any sequence $i_1,\dots,i_{r}\in[n]$, over randomly chosen pairwise distinct evaluation points $\alpha_1 , \dots , \alpha_n$, we have
   \label{cor:product}
  \begin{align}
    \Pr\left[\mathtt{GetCertificate}(\alpha_1,\dots,\alpha_n) = (i_1,\dots,i_{r})\right] 
    \le \left(\frac{(t-1)k}{q-n}\right)^{r}.
  \end{align}
\end{corollary}
\begin{proof}
  By Lemma~\ref{lem:distinct}, we only need to consider pairwise distinct indices $i_1,\dots,i_r$, otherwise the probability is 0.
  Let $M_1,\dots,M_r=\mathtt{GetMatrixSequence}(i_1,\dots,i_r)$.
  By Lemma~\ref{lem:cert-1}, matrices $M_1,\dots,M_r$ are all submatrices of $\mathsf{RIM}_\mathcal{H}$.
  Thus, Lemma~\ref{lem:product} applies.
  Let $E_1,\dots,E_r, F_1,\dots,F_r$ be the events in Lemma~\ref{lem:product}.
  If $\mathtt{GetCertificate}(\alpha_1,\dots,\alpha_n)=(i_1,\dots,i_r)$, then the definition of $i_j$ in Line~\ref{line:ij} of $\mathtt{GetCertificate}$ implies that events $E_j$ and $F_j$ both occur.
  By Lemma~\ref{lem:product}, the probability that all $E_j$ and $F_j$ hold is at most $(\frac{(t-1)k}{q-n})^r$, hence the result.
\end{proof}

\subsection{Finishing the Proof of Lemma~\ref{lem:main}}

\begin{proof}[Proof of Lemma~\ref{lem:main}]
  Recall (Section~\ref{ssec:proof-def}) that we fixed $\mathcal{H}$ to be a type-ordered $(k+\varepsilon n)$-weakly-partition-connected hypergraph.
  By Lemma~\ref{lem:cert-2}, if the matrix $\mathsf{RIM}_\mathcal{H}(X_{[n]}=\alpha_{[n]})$ does not have full column rank, then $\mathtt{GetCertificate}(\alpha_1,\dots,\alpha_n)$ is some certificate $(i_1,\dots,i_r)$.
  By Corollary~\ref{cor:product}, the probability that $\mathtt{GetCertificate}(\alpha_1,\dots,\alpha_n) = (i_1,\dots,i_r)$ holds is at most $\left(\frac{(t-1)k}{q-n}\right)^r$.
  By Corollary~\ref{cor:count}, there are at most $\binom{n}{r}2^{tr}$ certificates.
  Taking a union bound over possible certificates gives the lemma.
\end{proof}

\subsection{Finishing the Proof of Theorem~\ref{thm:main}}

\begin{proof}[Proof of Theorem~\ref{thm:main}]
  By Lemma~\ref{lem:hypergraph-1}, if $\RS_{n,k}(\alpha_1,\dots,\alpha_n)$ is not $\left(\frac{L}{L+1}(1-R-\varepsilon),L\right)$ average-radius list-decodable, then there exists a vector $y$ and pairwise distinct codewords $c\ind{1},\dots,c\ind{t}$ with $t \ge 2$ such that the agreement hypergraph $\mathcal{H} = ([t], \mathcal{E})$ is $(R+\varepsilon)n = (k+\varepsilon n)$-weakly-partition-connected. By Lemma~\ref{lem:eval}, the matrix $\mathsf{RIM}_{\mathcal{H}}(X_{[n]}=\alpha_{[n]})$ is not full column rank. 
  Now, the number of possible agreement hypergraphs $\mathcal{H}$ is at most $\sum_{t=2}^{L+1} 2^{tn} \le 2^{(L+2)n}$.
  Thus by the union bound over possible agreement hypergraphs $\mathcal{H}$ with Lemma~\ref{lem:main}, we have, for $r=\floor{\frac{\varepsilon n}{2}}$,
  \begin{align}
    &\Pr_{\alpha_{[n]}}\left[ \RS_{n,k}(\alpha_1,\dots,\alpha_n) \text{ is not $\left(\frac{L}{L+1}(1-R-\varepsilon),L\right)$ list-decodable} \right] \nonumber\\
    &\le \Pr_{\alpha_{[n]}}\left[ \exists \text{ $(k+\varepsilon n)$-w.p.c. agreement hypergraph } \mathcal{H} \text{ such that } \mathsf{RIM}_{\mathcal{H}}(X_{[n]}=\alpha_{[n]}) \text{ is not full column rank}\right] \nonumber\\
    &\le 2^{(L+2)n} \max_{\text{$(k+\varepsilon n)$-w.p.c. }\mathcal{H}}\quad\Pr_{\alpha_{[n]}}\left[ \mathsf{RIM}_{\mathcal{H}}(X_{[n]}=\alpha_{[n]}) \text{ is not full column rank}\right] \nonumber\\
    &\le 2^{(L+2)n} \cdot \binom{n}{r} 2^{(L+1)r} \left( \frac{Lk}{q-n} \right)^r 
    \le \left( 2^{(L+2)n/r}\cdot \frac{e n}{r}\cdot 2^{L+1} \frac{Lk}{q-n} \right)^r 
    \le 2^{-Ln},
  \end{align}
  as desired.
  Here, we used that $q = n + k\cdot 2^{10L/\varepsilon}$.
\end{proof}

\section{Random Linear Codes}
\label{sec:rlc}

In this section, we discuss how to modify the proof of Theorem~\ref{thm:main} to give Theorem~\ref{thm:rlc}, list-decoding for random linear codes (RLCs).
Our proof follows the roadmap in Figure~\ref{fig:roadmap}.
The proof is identical up to a few minor modifications, which we state here for brevity.
Below, we state the same lemmas as in the proof for Reed--Solomon codes, adjusted for random linear codes, and we highlight the key differences in purple.
We expect that our framework could be applied even more generally to show that other families of random codes --- beyond randomly punctured Reed--Solomon codes and random linear codes --- achieve list-decoding capacity with small alphabet sizes, assuming such codes satisfy an appropriate GM-MDS theorem.

\subsection{Preliminaries: Notation and Definitions}
The generator matrix $G\in \mathbb{F}_q^{n\times  k}$ of a random linear code has independent uniformly random entries in $\mathbb{F}_q$.
To transfer the proof for list-decoding Reed--Solomon codes to list-decoding random linear codes, a key analogy is to think of the generator matrix as a $n\times k$ matrix of $nk$ distinct indeterminates $(X_{i,\ell})_{i\in[n],\ell\in[k]}$, evaluated at $nk$ independent and uniformly random field elements $(\alpha_{i,\ell})_{i\in[n],\ell\in[k]}$.
\begin{align}
  \mathcal{G} &\defeq 
  \begin{bmatrix}
    X_{1,1} & \cdots & X_{1,k} \\
    \vdots & \ddots & \vdots\\
    X_{n,1} & \cdots & X_{n,k} \\
  \end{bmatrix} \in \mathbb{F}_q(X_{1,1},\dots,X_{n,k})^{n\times k}, \nonumber\\
  G &\defeq \mathcal{G} 
  \vert_{X_{[n]\times [n]}=\alpha_{[n]\times [k]}} \nonumber\\
  \mathcal{G}_i &\defeq [X_{i,1},\dots,X_{i,k}] \text{ (the $i$th row of $\mathcal{G}$).}
\end{align}
We note that our randomly punctured Reed--Solomon code can also be viewed as an evaluation of $\mathcal{G}$, where $X_{i,\ell}$ is assigned $\alpha_i^{\ell-1}$ where $\alpha_1,\dots,\alpha_n$ are random distinct field elements over $\mathbb{F}$.
In this light, one might expect our framework can also apply, and indeed it does.

Accordingly, we use similar indexing shorthand, where the notation $X_{[a]\times [b]}$ represents the $a\cdot b$ indeterminates $X_{1,1},X_{1,2},\dots,X_{a,b}$, and similarly for field elements $\alpha_{[a]\times [b]}$.
For field elements $\alpha_{1,1},\dots,\alpha_{a,b}$, we write $X_{[a]\times [b]}=\alpha_{[a]\times [b]}$ to denote $X_{i,\ell}=\alpha_{i,\ell}$ for $1\le i\le a$ and $1\le b\le \ell$.

We again use the notion of an agreement hypergraph in Section~\ref{ssec:hypergraph}, and Lemma~\ref{lem:hypergraph-1} still holds.
For each agreement hypergraph $\mathcal{H}$, we consider more general reduced intersection matrix $\textsf{RIM}_{\mathcal{H},\mathcal{G}}$, where the $X_i$-Vandermonde-rows are instead the $i$-th row of $\mathcal{G}$.
More precisely,
\begin{definition}[Reduced intersection matrix, Random Linear Codes, analogous to Definition~\ref{def:rim}.]
  The \emph{reduced intersection matrix} $\mathsf{RIM}_{\mathcal{H},\mathcal{G}}$ associated with a hypergraph $\mathcal{H}=([t],(e_1,\dots,e_n))$ is a $\wt(\mathcal{E}) \times (t-1)k$ matrix over the field of fractions $\mathbb{F}_q(X_{1,1},\dots,X_{n,k})$.
  For each hyperedge $e_i$ with vertices $j_1<j_2<\dots<j_{|e_i|}$, we add $\wt(e_i) = |e_i|-1$ rows to $\mathsf{RIM}_{\mathcal{H},\mathcal{G}}$.
  For $u=2,\dots,|e_i|$, we add a row $r_{i,u}=(r^{(1)},\ldots,r^{(t-1)})$ of length $(t-1)k$ defined as follows:
  \begin{itemize}
    \item If $j = j_1$, then $r\ind{j} = \mathcal{G}_i = [X_{i,1}, X_{i,2}, X_{i,3},\dots,X_{i,k}]$
    \item If $j = j_u$ and $j_u \neq t$, then $r\ind{j}  = -\mathcal{G}_i = -[X_{i,1}, X_{i,2}, X_{i,3},\dots,X_{i,k}]$
    \item Otherwise, $r\ind{j} = 0^k$.
  \end{itemize}
    \label{def:rlc:rim}
\end{definition}

\subsection{Preliminaries: Properties of RLC Reduced Intersection Matrices}

We have similar preliminaries for reduced intersection matrices of random linear codes.

\begin{lemma}[RIM of agreement hypergraphs are not full column rank, analogous to Lemma~\ref{lem:eval}]
  Let $\mathcal{H}$ be an agreement hypergraph for $(y,c\ind{1},\dots,c\ind{t})$, where $c\ind{j}\in\mathbb{F}_q^n$ are distinct codewords of the code generated by $\mathcal{G}|_{X_{[n]\times [k]}=\alpha_{[n]\times [k]}}$.
    Then the reduced intersection matrix $\mathsf{RIM}_{\mathcal{H},\mathcal{G}}(X_{[n]\times [k]}=\alpha_{[n]\times [k]})$ does not have full column rank. \label{lem:rlc:eval}
\end{lemma}
\begin{proof}
  Analogous to the proof of Lemma~\ref{lem:eval}.
\end{proof}

\begin{lemma}[RIM have full column rank, analogous to Theorem~\ref{lem:int-mat-1}]
  Let $\mathcal{H}$ be a $k$-weakly-partition-connected hypergraph with $n$ hyperedges and at least $2$ vertices.
  Then $\mathsf{RIM}_{\mathcal{H},\mathcal{G}}$ has full column rank over the field $\mathbb{F}_q(X_{1,1},\dots,X_{n,k})$.
  \label{lem:rlc:int-mat-1}
\end{lemma}
\begin{proof}
We note that the Reed--Solomon code reduced intersection matrix $\textsf{RIM}_\mathcal{H}$ can be obtained from the random linear code reduced intersection matrix $\textsf{RIM}_{\mathcal{H},\mathcal{G}}$ by setting the indeterminates $X_{i,\ell} = X_i^{\ell-1}$, so Lemma~\ref{lem:rlc:int-mat-1} immediately follows from Theorem~\ref{lem:int-mat-1}.
We emphasize that, while Reed--Solomon codes require large alphabet sizes $q\ge \Omega(n)$, Theorem~\ref{lem:int-mat-1} still holds for constant alphabet sizes $q$ (see Remark~\ref{rem:int-mat-1}), so we can use it here.
\end{proof}

We remark that Lemma~\ref{lem:rlc:int-mat-1} can be proven directly by following the proof framework of Theorem~\ref{lem:int-mat-1} in Appendix~\ref{subapp:int-mat-1-proof}, but instead substitute the use of Theorem~\ref{thm:gm-mds} with an analogous GM-MDS theorem for Random Linear Codes, which can be found in Lemma 7 of~\cite{DSY15} (Lemma 7 of~\cite{DSY15} only implies Lemma~\ref{lem:rlc:int-mat-1} for $q$ to be sufficiently large, but again by Remark~\ref{rem:int-mat-1} the $q$ sufficiently large version of Lemma~\ref{lem:rlc:int-mat-1} implies the lemma for all $q$). That way, the proof of Theorem~\ref{thm:rlc} relies only on the proof framework of Theorem~\ref{thm:main} and not on any of its lemmas.

We again define row deletions for reduced intersection matrices.

\begin{definition}[Row deletions, analogous to Definition~\ref{def:del}]
  Given a hypergraph $\mathcal{H}=([t],(e_1,\dots,e_n))$ and set $B \subseteq [n]$, define $\mathsf{RIM}_{\mathcal{H},\mathcal{G}}^B$ to be the submatrix of $\mathsf{RIM}_{\mathcal{H},\mathcal{G}}$ obtained by deleting all rows containing the row $\mathcal{G}_i$ with $i\in B$.
\end{definition}
Now we show that, as for Reed--Solomon codes, the full-column-rankness of reduced intersection matrices is robust to deletions.
\begin{lemma}[Robustness to deletions, analogous to Lemma~\ref{lem:del}]
  Let $\mathcal{H}=([t],\mathcal{E})$ be a $(k+\varepsilon n)$-weakly-partition-connected hypergraph with $t \ge 2$.
  For all sets $B\subseteq [n]$ with $|B|\le \varepsilon n$, we have that $\mathsf{RIM}_{\mathcal{H},\mathcal{G}}^B$ is nonempty and has full column rank. 
\end{lemma}
\begin{proof}
  The proof is identical to Lemma~\ref{lem:del}, where we instead use the full column rankness of $\mathsf{RIM}_{\mathcal{H},\mathcal{G}}$ for $k$-weakly-partition-connected $\mathcal{H}$ (Lemma~\ref{lem:rlc:int-mat-1}) rather than the full column rankness of $\mathsf{RIM}_{\mathcal{H}}$ (Theorem~\ref{lem:int-mat-1}).
\end{proof}

\subsection{The Proof}
\begin{algorithm}[t]
  \caption{$\mathtt{GetMatrixSequenceRLC}$}
  \label{alg:matrix-0}
  \KwIn{indices $i_1,\dots,i_{j-1}\in[n]$ for some $j\ge 1$.}
  \KwOut{$M_1,\dots,M_j$, which are $(t-1)k\times (t-1)k$ matrices over $\mathbb{F}_q(X_{1,1},\dots,X_{n,k})$.}
  $B\gets \emptyset$, $i_0\gets \perp$, $\ell_0\gets\perp$\;
  \For{$\ell=1,\dots,j$}{
    \tcp*[l]{$M_\ell$ depends only on $i_1,\dots,i_{\ell-1}$}
    \If{$\ell > 1$}{
      \tcp*[l]{Fetch new index from bank $B$}
      $\tau\gets$ the type of $i_{\ell-1}$\;
      $s\gets$ number of indices among $i_{\ell_0},i_{\ell_0+1},\dots,i_{\ell-1}$ that are type $\tau$ \;
      $i_{\ell-1}'\gets$ the $s$-th smallest element of $B$ that has type $\tau$\;
      \If{$i_{\ell-1}'$ \textrm{is defined}}{
        $M_\ell\gets$ the matrix obtained from $M_{\ell-1}$ by replacing all copies of row $\mathcal{G}_{i_{\ell-1}}$ with $\mathcal{G}_{i_{\ell-1}'}$
      }
    }
    \If{$M_\ell$ \textrm{not yet defined}}{
      \tcp*[l]{Refresh bank $B$}
      $B\gets \emptyset$\;
      \For{$\tau=1,\dots,2^t$}{
        $B\gets B \cup \{\text{largest $\floor{r/2^t}$ indices of type $\tau$ in $[n]\setminus\{i_1,\dots,i_{\ell-1}\}$}\}$
      (if there are less than $\floor{r/2^t}$ indices of type $\tau$, then $B$ contains all such indices) \;
      }
      $M_\ell\gets $ lexicographically smallest nonsingular $(t-1)k\times (t-1)k$ submatrix of $\mathsf{RIM}_{\mathcal{H},\mathcal{G}}^{B\cup \{i_1,\dots,i_{\ell-1}\}}$ \;
      $\ell_0\gets \ell$ \tcp{new refresh index} \;
    }
  }
  \Return{$M_1,\dots,M_j$}
\end{algorithm}

\begin{algorithm}[t]
  \caption{$\mathtt{GetCertificateRLC}$}
  \label{alg:certificate-0}
  \KwIn{Generator matrix entries $\alpha_{1,1},\dots,\alpha_{n,k} \in \mathbb{F}_q$.}
  \KwOut{A ``certificate'' $(i_1,\dots,i_r) \in [n]^r$.}
  \For{$j=1,\dots,r$}{
    \tcp*[l]{$M_1,\dots,M_{j-1}$ stay the same, $M_j$ is now defined}
    $M_1,\dots,M_j = \mathtt{GetMatrixSequenceRLC}(i_1,\dots,i_{j-1})$\;
    $i_j \gets $ smallest index $i$ such that $M_j(X_{[i]\times [k]}=\alpha_{[i]\times [k]})$ is singular\;
    \If{$i_j$\text{ not defined}}{
      \Return{$\perp$}
    }
  }
  \Return{$(i_1,\dots,i_r)$}
\end{algorithm}

The proof of Theorem~\ref{thm:rlc} follows similarly to the proof of Theorem~\ref{thm:main}.
Our key lemma, analogous to Lemma~\ref{lem:main} is to show that reduced intersection matrices of weakly-partition-connected hypergraphs are full column rank with high probability.
\begin{lemma}[Analogous to Lemma~\ref{lem:main}]
  Let $k$ be a positive integer and $\varepsilon>0$.
  For each $(k+\varepsilon n)$-weakly-partition-connected hypergraph $\mathcal{H}=([t],(e_1,\dots,e_n))$ with $t \ge 2$, we have, for $r=\floor{\varepsilon n/2}$,
  \begin{align}
    \Pr_{\alpha_{[n]\times [k]}} \left[ \mathsf{RIM}_{\mathcal{H},\mathcal{G}}(X_{[n]\times [k]}=\alpha_{[n]\times [k]})\text{ does not have full column rank} \right] \le \binom{n}{r}2^{tr}\cdot \left(\frac{t-1}{q}\right)^r \ .
  \end{align}
  \label{lem:rlc:main}
\end{lemma}
We highlight that our probability bound here is better than the one in Lemma~\ref{lem:main} for Reed--Solomon codes. This is because (i) all indeterminates in our generator matrix (and thus, the reduced intersection matrix) appear with degree 1 (rather than degree up to $k-1$), and (ii) our indeterminates are assigned independently uniformly at random, rather than random \emph{distinct} values.
Thus, the probability of any particular square submatrix matrix being made singular with an assignment is at most $\frac{t-1}{q}$, rather than $\frac{(t-1)k}{q-n}$: item (i) improves the numerator from $(t-1)k$ to $t-1$, and item (ii) improves the denominator from $q-n$ to $q$.
This improved probability bound means we can use a smaller alphabet size for random linear codes than for Reed--Solomon codes.
Other than this difference, the rest of our proof follows analogously. We include some more details for completeness.

We start with the same setup in Section~\ref{ssec:proof-def}, defining types in the same way, and starting with a $(k+\varepsilon n)$-weakly-partition-connected hypergraph $\mathcal{H}$ that we assume without loss of generality is type-ordered.
We again fix
\begin{align}
  r\defeq \left\lfloor \frac{\varepsilon n}{2} \right\rfloor
\end{align}

To prove Lemma~\ref{lem:rlc:main}, we similarly find a certificate $(i_1,\dots,i_r)$ for each singular reduced intersection matrix. 
This certificate is generated by an analogous algorithm, \texttt{GetCertificateRLC}, which uses an analogous helper function \texttt{GetMatrixSequenceRLC}.
We show this certificate has the same three properties:
\begin{enumerate}
  \item A bad generator matrix, namely a generator matrix for which the reduced intersection matrix is not full column rank, must yield a certificate.
  \item There are few possible certificates.
  \item The probability that a random generator matrix yields a particular certificate is small.
\end{enumerate}
We generate the certificate in a similar way. This time, instead of sequentially revealing the evaluation points, we sequentially reveal rows of the generator matrix, and $i_1$ indicates.

The first item is captured in the following lemma.
\begin{lemma}[Bad generator matrix admits certificate, analogous to Lemma~\ref{lem:cert-2}]
  If $\alpha_{1,1},\dots,\alpha_{n,k} \in \mathbb{F}_q$ are entries for the generator matrix such that $\mathsf{RIM}_{\mathcal{H},\mathcal{G}}(X_{[n]\times [k]}=\alpha_{[n]\times [k]})$ does not have full column rank, $\mathtt{GetCertificateRLC}(\alpha_{1,1},\dots,\alpha_{n,k})$ returns a certificate $(i_1,\dots,i_r) \in [n]^r$ (rather than $\perp$).
\end{lemma}
\begin{proof}
  Analogous to the proof of Lemma~\ref{lem:cert-2}.
\end{proof}

Just as for Reed--Solomon codes, we obtain the same bound on the number of possible certificates.
\begin{lemma}[Analogous to Corollary~\ref{cor:count}]
  The number of possible outputs of $\mathtt{GetCertificateRLC}$ is at most $\binom{n}{r}2^{tr}$.
  \label{lem:rlc:count}
\end{lemma}
\begin{proof}
   Analogous to the proof of Corollary~\ref{cor:count}.
\end{proof}

Lastly, we obtain an upper bound on the probability of obtaining a particular certificate.

\begin{lemma}[Probability of one certificate, analogous to Corollary~\ref{cor:product}]
  For any sequence $i_1,\dots,i_{r}\in[n]$, over independent uniformly random $\alpha_{1,1},\dots,\alpha_{n,k}$, we have
  \begin{align}
    \Pr\left[\mathtt{GetCertificateRLC}(\alpha_{1,1},\dots,\alpha_{n,k}) = (i_1,\dots,i_{r})\right] 
    \le \left(\frac{t-1}{q}\right)^{r}.
  \end{align}
   \label{cor:rlc:product}
\end{lemma}
Lemma~\ref{cor:rlc:product} is slightly different from the analogous result for Reed--Solomon codes, Corollary~\ref{cor:product}, so we provide a little more justification here.
Similar to Corollary~\ref{cor:product}, Lemma~\ref{cor:rlc:product} follows from a lemma analogous to Lemma~\ref{lem:product}.
\begin{lemma}[Analogous to Lemma~\ref{lem:product}]
  Let $i_1,\dots,i_{r}\in[n]$ be pairwise distinct indices, and $M_1,\dots,M_r$ be $(t-1)k\times (t-1)k$ submatrices of $\mathsf{RIM}_{\mathcal{H},\mathcal{G}}$. Over random generator matrix entries $\alpha_{1,1},\dots\alpha_{n,k} \in \mathbb{F}_q$, define the following events for $j=1,\dots,r$:
\begin{itemize}
  \item $E_j$ is the event that $M_j(X_{[i]\times [k]}=\alpha_{[i]\times [k]})$ is non-singular for all $i < i_j$.
  \item $F_j$ is the event that $M_j(X_{[i_j]\times [k]}=\alpha_{[i_j]\times [k]})$ is singular.
  \end{itemize}
  The probability that all the events hold is at most $(\frac{t-1}{q})^{r}$.
  \label{lem:rlc:product}
\end{lemma}
\begin{proof}[Proof of Lemma~\ref{lem:rlc:product}]
  The proof is similar to the proof of Lemma~\ref{lem:product}.
  Lemma~\ref{lem:product} follows from combining Equation~\eqref{eq:product-3} with the appropriate conditional probabilities.
  This lemma follows the same approach.
  We again assume without loss of generality $i_1<i_2<\cdots,i_r$.

  Here, we want, analogous to Equation~\eqref{eq:product-3}, for all $\alpha_{[i_j-1]\times [k]}$ such that $E_1\wedge F_1\wedge\cdots\wedge E_{j-1}\wedge F_{j-1}\wedge E_j$,
  \begin{align}
    \Pr_{\alpha_{\{i_j\}\times [k]}}\left[F_j|\alpha_{[i_j-1]\times [k]}\right]\le \frac{t-1}{q}.
    \label{eq:rlc:product-3}
  \end{align}
  To see \eqref{eq:rlc:product-3}, consider the determinant of $M_j(X_{[i_j-1]\times [k]}=\alpha_{[i_j-1]\times [k]})$, a $(t-1)k\times (t-1)k$ matrix in $\mathbb{F}_q(X_{\{i_j,i_j+1,\dots,n\}\times [k]})$. 
  View the determinant of $M_j(X_{[i_j-1]\times [k]}=\alpha_{[i_j-1]\times [k]})$ as a polynomial in variables $X_{\{i_j+1,\dots,n\}\times [k]}$ with coefficients in $\mathbb{F}_q[X_{i_j,1},\dots,X_{i_j,k}]$.
  It is nonzero because we assume $E_j$ holds, so there is some coefficient of the form $f(X_{i_j,1},\dots,X_{i_j,k})$ that is nonzero.
  Since matrix $M_j$ has at most $t-1$ rows containing any variables among $X_{i_j,1},\dots,X_{i_j,k}$, each appearing with total degree 1, the total degree of $X_{i_j,1},\dots,X_{i_j,k}$ in the determinant of $M_j$ is at most $t-1$.
  Thus, the total degree of $f(X_{i_j,1},\dots,X_{i_j,k})$ is at most $t-1$.
  Hence, by the Schwartz--Zippel lemma, $f$ becomes zero with probability at most $\frac{t-1}{q}$ over random $\alpha_{i_j,1},\dots,\alpha_{i_j,k}$.
  Thus, the probability that $F_j$ holds is at most $\frac{t-1}{q}$, giving \eqref{eq:rlc:product-3}.

Combining conditional probabilities as in Lemma~\ref{lem:product} gives the result.
\end{proof}

\begin{proof}[Proof of Theorem~\ref{thm:rlc}]
By Lemma~\ref{lem:hypergraph-1}, if our random linear code generated by $G$ is not $\left(\frac{L}{L+1}(1-R-\varepsilon),L\right)$ average-radius list-decodable, then there exists a vector $y$ and codewords $c\ind{1},\dots,c\ind{t}$ with $t \ge 2$ such that the agreement hypergraph $\mathcal{H} = ([t], \mathcal{E})$ is $(R+\varepsilon)n = (k+\varepsilon n)$-weakly-partition-connected. By Lemma~\ref{lem:rlc:eval}, the matrix $\mathsf{RIM}_{\mathcal{H},\mathcal{G}}(X_{[n]\times [k]}=\alpha_{[n]\times [k]})$ is not full column rank. 
  Now, the number of possible agreement hypergraphs $\mathcal{H}$ is at most $\sum_{t=2}^{L+1} 2^{tn} \le 2^{(L+2)n}$.
  Thus by the union bound over possible agreement hypergraphs $\mathcal{H}$ with Lemma~\ref{lem:rlc:main}, we have, for $r= \floor{\frac{\varepsilon n}{2}}$,
  \begin{align}
    &\Pr_{\alpha_{[n]\times [k]}}\left[ \text{Code generated by }\mathcal{G}|_{X_{[n]\times [k]}=\alpha_{[n]\times [k]}} \text{ is not $\left(\frac{L}{L+1}(1-R-\varepsilon),L\right)$ list-decodable} \right] \nonumber\\
    &\le \Pr_{\alpha_{[n]\times [k]}}
    \left[
    \begin{aligned}
     &\exists \text{ $(k+\varepsilon n)$-w.p.c. agreement hypergraph } \mathcal{H} \text{ such that}\\ &\mathsf{RIM}_{\mathcal{H},\mathcal{G}}(X_{[n]\times [k]}=\alpha_{[n]\times [k]}) \text{ is not full column rank}
    \end{aligned}
    \right] \nonumber\\
    &\le 2^{(L+2)n} \max_{\text{$(k+\varepsilon n)$-w.p.c. }\mathcal{H}}\quad\Pr_{\alpha_{[n]\times [k]}}\left[ \mathsf{RIM}_{\mathcal{H},\mathcal{G}}(X_{[n]\times [k]}=\alpha_{[n]\times [k]}) \text{ is not full column rank}\right] \nonumber\\
    &\le 2^{(L+2)n} \cdot \binom{n}{r} 2^{(L+1)r} \left( \frac{L}{q} \right)^r 
    \le \left( 2^{(L+2)n/r}\cdot \frac{e n}{r}\cdot 2^{L+1} \cdot \frac{L}{q} \right)^r 
    \le 2^{-Ln},
    \label{eq:rlc-prob}
  \end{align}
  as desired.
  Here, we used that $q = 2^{10L/\varepsilon}$.
\end{proof}

\begin{remark}
  Our tighter bound on the number of certificates that leveraged the symmetries of the matrices $\textsf{RIM}$ is crucial to obtaining a near-optimal constant alphabet size for Theorem~\ref{thm:rlc}; we would only have obtained linear $O(n)$ alphabet size otherwise.
In \eqref{eq:rlc-prob}, our upper bound on the non-list-decodability probability is
\begin{align}
  2^{(L+2)n} \cdot \binom{n}{r} 2^{(L+1)r} \cdot \left( \frac{L}{q} \right)^r,
\end{align}
where $r=\varepsilon n/2$, where $\varepsilon>0$ is roughly the gap to capacity.
Here $\binom{n}{r}2^{(L+1)r}$ is the number of possible certificates.
If we had naively bounded the number of certificates by $n^r$, our bound on the non-list-decodability probability would then be
\begin{align}
  2^{(L+2)n} \cdot n^r \cdot \left( \frac{L}{q} \right)^r.
\end{align}
For this bound to be $o(1)$, we need to take $q \ge 2^{L/\varepsilon}\cdot n$, giving an alphabet size of $O(n)$.
This would still have been a new result, but leveraging the symmetries allowed us to achieve a near-optimal constant list size of $2^{O(L/\varepsilon)}$.
\end{remark}

\section{Alphabet Size Limitations}
\label{app:lb}

In this section, we establish Proposition~\ref{pr:lb}.
For positive integers $m$, view $\mathbb{F}_{2^m}$ as a vector space of dimension $m$ over base field $\mathbb{F}_2$.
For a set $S\subseteq \mathbb{F}_{2^m}$, let
\begin{align}
  P_S(X)\defeq \prod_{\alpha\in S}^{} (X-\alpha).
\end{align}
An \emph{affine subspace} is a set $L+\alpha=\{\alpha+\beta:\beta\in L\}$ for some subspace $L$ of $\mathbb{F}_{2^m}$.
\begin{lemma}[Proposition 3.2 of \cite{BKR10}]
  Let $L$ be a subspace of $\mathbb{F}_{2^m}$.
  Then $P_L$ has the form
  \begin{align}
    X^{2^{\dim L}} + \sum_{i=0}^{\dim L - 1} \alpha_i X^{2^i},
  \end{align}
  where $\alpha_i \in \mathbb{F}_{2^m}$.
  \label{lem:bkr-1}
\end{lemma}
As an immediate corollary, we have
\begin{lemma}
  Let $L$ be an affine subspace of $\mathbb{F}^{2^m}$.
  Then $P_L$ has the form
  \begin{align}
    X^{2^{\dim L}} + \sum_{i=0}^{\dim L - 1} \alpha_i X^{2^i} + \beta
  \end{align}
  for $\alpha_i,\beta\in \mathbb{F}_{2^m}$.
  \label{lem:bkr-2}
\end{lemma}
\begin{proof}
  Since $L$ is an affine subspace, there exists $\gamma$ such that $L - \gamma\defeq \{\alpha-\gamma:\alpha\in L\}$ is a subspace of $\mathbb{F}_{2^m}$.
  By Lemma~\ref{lem:bkr-1}, we have $P_{L-\gamma}$ is of the form
  \begin{align}
    X^{2^{\dim L}} + \sum_{i=0}^{\dim L - 1} \alpha_i X^{2^i}
  \end{align}
  for $\alpha_i\in \mathbb{F}_{2^m}$.
  In particular, $P_{L-\gamma}$ is $\mathbb{F}_2$-linear, so
  \begin{align}
    P_L(X) = P_{L-\gamma}(X+\gamma) = P_{L-\gamma}(X) + P_{L-\gamma}(\gamma).
  \end{align}
  Setting $\beta = P_{L-\gamma}(\gamma)$ gives the desired form for $P_L(X)$.
\end{proof}

\begin{lemma}[Analogous to Lemma 3.5 of \cite{BKR10}]
  Let $S$ be a subset of $\mathbb{F}_{2^m}$ of size $n$.
  Let $u$ and $v$ be integers such that $0\le u\le v\le m$.
  Then there is a family $\mathcal{L}$ of at least $2^{(u+1)m-v^2}$ affine subspaces of dimension $v$, such that each affine subspace $L\in\mathcal{L}$ satisfies $|L\cap S|\ge n/2^{m-v}$, and for any two affine subspaces $L,L'\in\mathcal{L}$, the difference $P_{L}-P_{L'}$ has degree at most $2^u$.
 \label{lem:bkr-3}
\end{lemma}
\begin{proof}
  For every subspace $L$ of dimension $v$, there exists $\beta_0,\dots,\beta_{2^{m-v}}$ such that the affine subspaces $L+\beta_i$ partition $\mathbb{F}_{2^m}$.
  By pigeonhole, there exists some $\beta_i$ such that $|(L+\beta_i)\cap S| \ge |S|/2^{m-v} = n/2^{m-v}$
  The number of subspaces of dimension $v$ is
  \begin{align}
    \frac{(2^m-1)(2^m-2)\cdots(2^m-2^{v-1})}{(2^v-1)(2^v-2)\cdots(2^v-2^{v-1})} \ge 2^{v(m-v)},
  \end{align}
  so there are at least $2^{v(m-v)}$ affine-subspaces $L$ with $|L\cap S|\ge n/2^{m-v}$.
  For all such affine-subspaces $L$, the polynomial $P_L(X)$ has the form $X^{2^v} + \sum_{i=0}^{v-1} \alpha_iX^{2^i}+\beta$ by Lemma~\ref{lem:bkr-2}.
  Among these affine-subspaces $L$, by the pigeonhole principle, for at least a fraction $2^{-m(v-u-1)}$ of these subspaces, their subspace polynomials $P_L(X)$ have the same $\alpha_i$ for $i=u+1,u+2,\dots,v$.
  Let $\mathcal{L}$ be this family of subspaces.
  The number of subspaces is at least $2^{v(m-v)}\times 2^{-m(v-u-1)}=2^{(u+1)m-v^2}$, so $\mathcal{L}$ is the desired family of affine subspaces.
\end{proof}

\begin{proof}[Proof of Proposition~\ref{pr:lb}]
  Let $\delta=2^{-r-1}$ as in the statement of Proposition~\ref{pr:lb}.
  Consider a Reed--Solomon code of length $n$ and rate $\delta$ over $\mathbb{F}_q$, where $q=2^m$ with $m$ sufficiently large.
  Let $S\subseteq \mathbb{F}_q$ be the set of $n$ evaluation points.
  Apple Lemma~\ref{lem:bkr-3} with $u=m-\ceil{1.99r}$ and $v=m-r$.
  This gives a family $\mathcal{L}$ of $2^{m(m-\ceil{1.99r}) - (m-r)^2} = 2^{2rm-\ceil{1.99r}m +r^2}\ge q^{\Omega(\log(1/\delta))}$ affine subspaces $L\le \mathbb{F}_{2^m}$ for which $|L\cap S|\ge n/2^{m-v} = 2\delta n$.
  Furthermore, for $L\in \mathcal{L}$, the subspace polynomials $P_L$ each have $2^v$ roots, and agree on all coefficients of degree larger than $2^u$. 
  Let $L_0$ be an arbitrary element of $\mathcal{L}$.
  Then the polynomials $\{P_{L_0}-P_{L}:L\in \mathcal{L}\}$ are each of degree at most $2^u = 2^{-\ceil{1.99r}}q\le 4\delta^{1.99}q\le\delta n$, and each agree with $P_{L_0}(X)$ on at least $|L\cap S|\ge 2\delta n$ values of $S$.
  Thus, our Reed--Solomon code is not $(1-2\delta, n^{\Omega(1/\delta)})$-list-decodable, as desired.
\end{proof}

\appendix 
\section*{Appendix}

\section{Alternate Presentation of \texorpdfstring{\cite{BGM23}}{[BGM23]}}
\label{app:bgm}
Here, we include alternate presentations of some ideas from \cite{BGM23}.
Algebraically, our presentation is the same, but the hypergraph perspective streamlines combinatorial aspects of their ideas.

\subsection{Preliminaries}
\paragraph{Dual of Reed--Solomon codes.}
It is well known that the \emph{dual} of a Reed--Solomon code is a \emph{generalized Reed--Solomon code}: Given positive integers $k \le n$ and evaluation points $\alpha_1,\dots,\alpha_n \in \mathbb{F}_q$, there exists nonzero $\beta_1,\dots,\beta_n\in\mathbb{F}_q$ such that the following matrix, called the \emph{parity-check matrix},
\begin{align}
  H = 
    \begin{bmatrix}
      \beta_1 & \beta_2 & \cdots & \beta_n \\
      \beta_1\alpha_1 & \beta_2\alpha_2 & \cdots & \beta_n\alpha_n \\
      \vdots &\vdots&&\vdots \\
      \beta_1\alpha_1^{n-k-1} & \beta_2\alpha_2^{n-k-1} & \cdots & \beta_n\alpha_n^{n-k-1} \\
    \end{bmatrix}
    \label{eq:rs-h}
\end{align}
satisfies $Hc = 0^{n-k}$ if and only if $c\in \RS_{n,k}(\alpha_1,\dots,\alpha_n)$.

\paragraph{Generic zero patterns.}
Following \cite{BGM23}, we leverage the GM-MDS theorem to establish the list-decodability of Reed--Solomon codes.
In this work, we more directly connect the list-decoding problem to the GM-MDS theorem using a hypergraph orientation lemma (introduced in the next section).
Here, we review generic zero patterns and the GM-MDS theorem.
To keep the meaning of the variable ``$k$'' consistent throughout the paper, we unconventionally state the definition of zero patterns and the GM-MDS theorem with $n-k$ rows instead of $k$ rows.
\begin{definition}
  Given positive integers $k \le n$, an $(n,n-k)$-\emph{generic zero pattern} (GZP) is a collection of sets $S_1, \ldots , S_{n-k} \subseteq [n]$ such that, for all $K \subseteq [n-k]$,
  \begin{align}
    \abs{\bigcap_{\ell\in K} S_\ell} \le n - k - |K|.
    \label{eq:gzp}
  \end{align}
\end{definition}

\paragraph{GM-MDS Theorem.}
As in \cite{BGM23}, we connect the list-decoding problem to the GM-MDS theorem. Here, we make the connection more directly.
\begin{theorem}[GM-MDS Theorem \cite{DauSY14, Lovett18,YildizH19}]
  Given $q\ge 2n-k-1$ and any generic zero pattern $S_1,\dots,S_{n-k} \subseteq [n]$, there exists pairwise distinct evaluation points $\alpha_1,\dots,\alpha_n \in \mathbb{F}_q$ and an invertible matrix $M\in \mathbb{F}_q^{(n-k)\times (n-k)}$ such that, if $H$ is the parity-check matrix for $\RS_{n,k}(\alpha_1,\dots,\alpha_n)$ (as in \eqref{eq:rs-h}), then $MH$ achieves zero pattern $S_1,\dots,S_{n-k}$, meaning that $(MH)_{\ell,i}=0$ whenever $i\in S_\ell$.
  \label{thm:gm-mds}
\end{theorem}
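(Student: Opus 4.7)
My plan is to reduce the theorem to nonvanishing of a formal determinantal polynomial, establish that nonvanishing by induction using the GZP condition, and then conclude with a Schwartz--Zippel-style coordinate-by-coordinate argument that matches the field size bound $q\ge 2n-k-1$.

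First, I would rephrase the required conclusion polynomially. Writing $p_\ell(t) = \sum_{j=1}^{n-k} M_{\ell,j} t^{j-1}$ and using that the $\beta_i$ in \eqref{eq:rs-h} are nonzero, the zero-pattern condition $(MH)_{\ell,i}=0$ for $i \in S_\ell$ is equivalent to $p_\ell(\alpha_i) = 0$ for $i \in S_\ell$, while invertibility of $M$ is equivalent to the $p_\ell$ forming a basis of the $(n-k)$-dimensional space of polynomials of degree at most $n-k-1$. So the theorem reduces to finding distinct $\alpha_1,\ldots,\alpha_n\in\mathbb{F}_q$ and linearly independent polynomials $p_\ell$ of degree $\le n-k-1$ with $p_\ell(\alpha_i)=0$ for all $i\in S_\ell$.

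Second, I would introduce formal variables $x_1,\ldots,x_n$ and choose canonical candidates
\[ p_\ell(t;x) \;=\; \prod_{i \in S_\ell}(t-x_i)\cdot t^{\,n-k-1-|S_\ell|}, \]
whose coefficients in $t$ assemble into an $(n-k)\times(n-k)$ matrix $A(x)\in\mathbb{F}_q[x_1,\ldots,x_n]^{(n-k)\times(n-k)}$. Set $F(x)=\det A(x)$. Any evaluation $x=\alpha$ with pairwise distinct coordinates and $F(\alpha)\neq 0$ furnishes the desired invertible $M$ (its rows are the coefficient vectors of the evaluated $p_\ell$). The degree of $x_i$ in row $\ell$ of $A$ is at most $1$ (equal to $1$ exactly when $i\in S_\ell$), so $\deg_{x_i}F \le |\{\ell:i\in S_\ell\}|$; the GZP condition with $K=[n-k]$ forces $\bigcap_\ell S_\ell = \emptyset$, which means every index is missing from some $S_\ell$, giving the cleaner bound $\deg_{x_i}F \le n-k-1$.

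Third, and this is the main obstacle, I need to show $F\not\equiv 0$. I would induct on $n+(n-k)$, using the GZP structure to locate a pivot. The plan is to pick an index $i^*\in[n]$ belonging to as few sets as possible and some $\ell^*$ with $i^*\notin S_{\ell^*}$ (guaranteed by the $K=[n-k]$ case of the GZP condition). Performing a controlled cofactor expansion along the variable $x_{i^*}$ isolates a determinantal cofactor $F'$ for the reduced instance obtained by deleting $i^*$ from $[n]$ and $\ell^*$ from $[n-k]$ (with $S_\ell\setminus\{i^*\}$ replacing $S_\ell$ for $\ell\neq\ell^*$). The technical heart is verifying that this reduced instance is still a valid $(n-1,n-k-1)$-GZP, which requires a careful case analysis on intersections $\bigcap_{\ell\in K'}(S_\ell\setminus\{i^*\})$ using the original inequality \eqref{eq:gzp}. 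By induction $F'\not\equiv 0$, and then $F\not\equiv 0$.

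Finally, assuming $F\not\equiv 0$, I would select $\alpha_1,\alpha_2,\ldots,\alpha_n$ one coordinate at a time. At step $i$, the univariate restriction of $F$ to $x_i$ (after plugging in $\alpha_1,\ldots,\alpha_{i-1}$) is nonzero of degree at most $n-k-1$, so at most $n-k-1$ values of $\alpha_i$ make it vanish; additionally, at most $n-1$ values are forbidden in order to avoid collisions with $\alpha_1,\ldots,\alpha_{i-1}$. The total number of forbidden values is at most $(n-k-1)+(n-1)=2n-k-2$, so any field with $q\ge 2n-k-1$ elements admits a valid choice, matching the hypothesis exactly. The main difficulty is the combinatorial induction in the third step; the reduction and the Schwartz--Zippel-style conclusion are then routine.
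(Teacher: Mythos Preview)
The paper does not prove this theorem; it cites it as a known result from \cite{DauSY14,Lovett18,YildizH19} and adds only the one-line observation that the parity-check-matrix formulation is an immediate corollary of the original generator-matrix version. So there is no ``paper's own proof'' to compare against---your proposal is an attempt to reprove the cited theorem from scratch.

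Your framework is the standard polynomial-method reduction used in \cite{Lovett18,YildizH19}, and steps 1, 2, and 4 are correct: the reformulation via polynomials $p_\ell$, the candidate $p_\ell(t;x)=\prod_{i\in S_\ell}(t-x_i)\cdot t^{n-k-1-|S_\ell|}$, the degree bound $\deg_{x_i}F\le n-k-1$, and the coordinate-by-coordinate selection giving $q\ge 2n-k-1$ all match the known arguments.

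Step 3, however, has a genuine gap. Your reduction---pick $i^*$ in the \emph{fewest} sets, pick $\ell^*$ with $i^*\notin S_{\ell^*}$, delete both---does not in general yield an $(n-1,n-k-1)$-GZP. For any $K'\subseteq[n-k]\setminus\{\ell^*\}$ with $i^*\notin\bigcap_{\ell\in K'}S_\ell$, the original GZP inequality only gives $\bigl|\bigcap_{\ell\in K'}(S_\ell\setminus\{i^*\})\bigr|\le n-k-|K'|$, one more than the $n-k-1-|K'|$ you need. Concretely, with $n=3$, $k=1$, $S_1=\{1\}$, $S_2=\{2\}$, the index in fewest sets is $i^*=3$; deleting it and either row leaves a singleton set, violating the $(2,1)$-GZP condition. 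The ``fewest'' heuristic is in fact backwards: the cofactor picture you describe only produces a clean $(n-k-1)\times(n-k-1)$ minor when $i^*$ lies in \emph{all} sets except $S_{\ell^*}$ (then the top-degree coefficient in $x_{i^*}$ is exactly $\pm F'$), and such an $i^*$ need not exist. This obstruction is precisely why the GM-MDS conjecture was nontrivial; the proofs in \cite{Lovett18,YildizH19} first establish an augmentation lemma (one can enlarge the $S_\ell$ to size exactly $n-k-1$ while preserving the GZP condition) and then run a more delicate case-based induction. That augmentation step is the combinatorial heart of the theorem and is absent from your sketch.
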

We note that the original GM-MDS theorem shows that the generator matrix of a (non-generalized) Reed Solomon code achieves any generic zero pattern. Here, we state that the generator matrix of a \emph{generalized} Reed--Solomon code achieves any generic zero pattern, which is an immediate corollary of the former result.

\subsection{Hypergraph Orientations}
Our new perspective of the tools from \cite{BGM23} leverages a well-known theorem about orienting weakly-partition-connected hypergraphs, stated below. This theorem is most explicitly stated in \cite{Frank11}, but it is implicit in \cite{Kiraly03,FKK03b}.

  A \emph{directed hyperedge} is a hyperedge with one vertex assigned as the \emph{head}. All the other vertices in the hyperedge are called \emph{tails}. 
  A \emph{directed hypergraph} consists of directed hyperedges.
  In a directed hypergraph, the \emph{in-degree} of a vertex $v$ is the number of edges for which $v$ is the head.
  A \emph{path} in a directed hypergraph is a sequence $v_1,e_1,v_2,e_2,\dots,v_{s-1},e_{s-1},v_s$ such that for all $\ell=1,\dots,s-1$, vertex $v_\ell$ is a tail of edge $e_\ell$ and vertex $v_{\ell+1}$ is the head of edge $e_\ell$.
  An \emph{orientation} of an (undirected) hypergraph is obtained by assigning a head to each hyperedge, making every hyperedge directed.
  \begin{theorem}[Theorems 9.4.13 and 15.4.4 of \cite{Frank11}]
    A hypergraph $\mathcal{H}$ is $k$-weakly-partition-connected if and only if it has an orientation such that, for some vertex $v$ (the ``root''), every other vertex $u$ has $k$ edge-disjoint paths to $v$.\footnote{In \cite[Theorems 9.4.13 and 15.4.4]{Frank11}, the property of having $k$ edge-disjoint paths to $v$ is called \emph{$(0,k)$-edge-connected}.}
  \label{thm:hypergraph}
  \end{theorem}
  We note that Theorem~\ref{thm:hypergraph} remains true if ``to $v$'' is replaced with ``from $v$'' and $k$-weakly-partition-connected is replaced with another hypergraph notion called $k$-partition-connected.
  The following corollary essentially captures (the hard direction of) \cite[Lemma~2.8]{BGM23}.
  \begin{corollary}
    Let $\mathcal{H}=([t],\mathcal{E})$ be a $k$-weakly-partition-connected hypergraph with $n$ hyperedges and $t \ge 2$. Then there exists integers $\delta_1,\dots,\delta_t\ge 0$ summing to $n-k$ such that taking $\delta_j$ copies of $S_j\defeq \{i \in [n]: j\notin e_i\}\subseteq [n]$ gives an $(n,n-k)$-GZP.
    \label{cor:hypergraph}
  \end{corollary}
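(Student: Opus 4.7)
The plan is to extract the weights $\delta_j$ from the orientation supplied by Theorem~\ref{thm:hypergraph}. Applying that theorem to $\mathcal{H}$ yields an orientation with a root $v \in [t]$ such that every other vertex $u$ has $k$ edge-disjoint directed paths to $v$. Let $d_j$ denote the in-degree of $j$ in this orientation, so $\sum_j d_j = n$; since $t \ge 2$, picking any $u \ne v$ and inspecting the final edges of its $k$ edge-disjoint paths to $v$ shows $d_v \ge k$. I would then set $\delta_v = d_v - k$ and $\delta_j = d_j$ for $j \ne v$. These satisfy $\delta_j \ge 0$ and $\sum_j \delta_j = n - k$ automatically.

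Next, I would reformulate the GZP condition combinatorially. Because $\bigcap_{\ell \in K} S_\ell$ depends only on which vertices $j \in [t]$ contribute some copy of $S_j$ to $K$, the GZP inequality (at the tightest choice where all $\delta_j$ copies of each used $S_j$ are taken) reduces to: $|\{i : e_i \cap J = \emptyset\}| + \sum_{j \in J} \delta_j \le n - k$ for every nonempty $J \subseteq [t]$. Using $\sum_j \delta_j = n - k$ and writing $J' = [t] \setminus J$, this is equivalent to the ``orientation condition'': for every $J' \subsetneq [t]$, $\sum_{j \in J'} \delta_j \ge |\{i : e_i \subseteq J'\}|$.

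The main step is verifying this orientation condition using the chosen orientation. If $v \notin J'$, then every edge contained in $J'$ is headed at a vertex of $J'$, so $\sum_{j \in J'} \delta_j = \sum_{j \in J'} d_j$ already upper bounds $|\{i : e_i \subseteq J'\}|$. If $v \in J'$, let $U = [t] \setminus J'$, which is nonempty, and fix any $u \in U$; the $k$ edge-disjoint paths from $u$ to $v$ each must cross from $U$ into $J'$, and edge-disjointness guarantees the first crossings of these paths use $k$ distinct edges, each with a tail in $U$ and head in $J'$. Combining with the edges entirely contained in $J'$ (which are also headed in $J'$), we obtain $\sum_{j \in J'} d_j \ge |\{i : e_i \subseteq J'\}| + k$, and subtracting the deficit of $k$ at $v$ gives the desired bound. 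The main obstacle is precisely this cut argument; once one recognizes that edge-disjointness of the $k$ paths produces $k$ distinct crossing edges out of $U$, the rest is bookkeeping.
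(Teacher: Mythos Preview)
Your proposal is correct and follows essentially the same approach as the paper's proof: both extract the $\delta_j$ from the in-degrees of the orientation given by Theorem~\ref{thm:hypergraph} (subtracting $k$ at the root), and both verify the GZP inequality by a case split on whether the root lies in the complement set, invoking the $k$ edge-disjoint paths in the nontrivial case to find $k$ crossing edges. Your explicit reduction to the ``tightest'' constraint (taking all $\delta_j$ copies for each $j\in J$) and the clean reformulation as $\sum_{j\in J'}\delta_j \ge |\{i:e_i\subseteq J'\}|$ is a nice presentational touch, but the underlying argument is the same.
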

  \begin{proof}
    Take the orientation of $\mathcal{H}$ and root vertex $v\in[t]$ given by Theorem~\ref{thm:hypergraph}.
    We now take our $\delta_j$'s as follows: for each non-root $j\in[t]$, let $\delta_j\defeq \deg_{in} (j)$ to be the in-degree of vertex $j$. 
    For the root $v$, let $\delta_v\defeq \deg_{in}(v)-k$.
    Note that any other vertex $u$ has $k$ edge-disjoint paths to $v$, so $v$ has in-degree at least $k$ and $\delta_v \ge 0$.
    Since there are $n$ hyperedges, the sum of all $\delta_j$'s is thus $n - k$.
    We now check the generic zero pattern condition \eqref{eq:gzp}.
    Consider any nonempty multiset $K\subseteq [t]$ such that each vertex $j \in [t]$ appears at most $\delta_j$ times. We claim:
    \begin{align}
      \abs{\bigcap_{\ell\in K} S_\ell} = (\text{\# edges induced by $[t]\setminus K$}) \le \sum_{j\in [t]\setminus K}^{} \delta_j = n-k - \sum_{j\in K\cap [t]}^{} \delta_j \le n-k-|K|.
      \label{eq:hypergraph-goal}
    \end{align}
    The first equality holds by definition of $S_j$.
    The second equality holds because $\sum_{j\in[t]}\delta_j=n-k$.
    The second inequality holds because $|K| \le \sum_{j \in K} \delta_j$ by definition of $K$.
    It remains to show the first inequality.
    We have two cases:
    
    \noindent\textbf{Case 1: the root $v$ is in $K$.}
    The number of hyperedges induced by the vertices $[t]\setminus K$ is at most the sum of the indegrees of $[t]\setminus K$, which is exactly $\sum_{j\in [t]\setminus K}^{} \delta_j$ by definition of $\delta_j$.
    
    \noindent\textbf{Case 2: the root $v$ is in $[t]\setminus K$.} 
    Fix an arbitrary vertex $u$ in $K$.
    By our orientation of $\mathcal{H}$, vertex $u$ has $k$ edge-disjoint paths to $v$. Each of these paths has an edge that ``enters'' $[t] \setminus K$, i.e., the head is in $[t] \setminus K$ but the edge is not induced by $[t]\setminus K$.
    Thus, the number of edges induced by $[t]\setminus K$ is at most $(\sum_{j\in[t]\setminus K}^{} \deg_{in}(j)) - k$, which is exactly $\sum_{j\in[t]\setminus K}\delta_j$ by definition of $\delta_j$.
    Hence, we have the first inequality.
    This covers all cases, proving \eqref{eq:hypergraph-goal}, completing the proof.
  \end{proof}

\subsection{Proof of Theorem~\ref{lem:int-mat-1}}
\label{subapp:int-mat-1-proof}

In this section, we reprove Theorem~\ref{lem:int-mat-1}, which we need in this work.

\begin{proof}[Proof of Theorem~\ref{lem:int-mat-1}]
  It suffices to prove that $\mathsf{RIM}_\mathcal{H}$ has full column rank for some evaluation of $X_1=\alpha_1,\dots,X_n=\alpha_n$ for $\alpha_1,\dots,\alpha_n\in\mathbb{F}_q$.
  Furthermore, by Remark~\ref{rem:int-mat-1}, it also suffices to prove Theorem~\ref{lem:int-mat-1} for when $q \ge 2n-k-1$.
  Indeed, that would then show there that is a square $(t-1)k\times (t-1)k$ submatrix of $\mathsf{RIM}_{\mathcal{H}}(X_{[n]}=\alpha_{[n]})$ of full column rank, which means that submatrix has nonzero determinant (in $\mathbb{F}_q$), which means the corresponding square submatrix of $\mathsf{RIM}_{\mathcal{H}}$ also has a nonzero determinant (in $\mathbb{F}_q(X_1,\dots,X_n)$), so $\mathsf{RIM}_{\mathcal{H}}$ has full column rank.
  
Let $e_1,\dots,e_n$ be the edges of our $k$-weakly-partition-connected hypergraph $\mathcal{H}$.
By Corollary~\ref{cor:hypergraph}, there a generic zero pattern $S_1,\dots,S_{n-k}$ where, for all $\ell=1,\dots,n-k$, the set $S_\ell$ is $\{i:j\notin e_i\}$ for some $j\in[t]$.
By Theorem~\ref{thm:gm-mds}, there exists pairwise distinct $\alpha_1,\dots,\alpha_n \in \mathbb{F}_q$ and a nonsingular matrix $M\in \mathbb{F}_q^{(n-k)\times (n-k)}$ such that, for $H\in\mathbb{F}_q^{(n-k)\times n}$ the parity check matrix of $\RS_{n,k}(\alpha_1,\dots,\alpha_n)$, the matrix $M\cdot H\in\mathbb{F}_q^{(n-k)\times n}$ achieves the zero pattern $S_1,\dots,S_{n-k}$,  meaning that $(MH)_{\ell,i}=0$ whenever $i\in S_\ell$.

  Suppose for the sake of contradiction there is a nonzero vector $v\in \mathbb{F}_q^{(t-1)k}$ such that $\mathsf{RIM}_{\mathcal{H}}(X_{[n]}=\alpha_{[n]}) \cdot v = 0$. Let $f\ind{1},\dots,f\ind{t}\in\mathbb{F}_q^{k}$ be such that $v=[f\ind{1},f\ind{2},\dots,f\ind{t-1}]$ and $f\ind{t}=0$.
  Define $c\ind{1},\dots,c\ind{t}\in\mathbb{F}_q^n$ be such that $c\ind{i} = G\cdot f\ind{i}$ where
  \begin{align}
    G\defeq
    \begin{bmatrix}
      1 & \alpha_1 & \cdots & \alpha_1^{k-1} \\
      1 & \alpha_2 & \cdots & \alpha_2^{k-1} \\
      \vdots &\vdots&&\vdots \\
      1 & \alpha_n & \cdots & \alpha_n^{k-1} \\
    \end{bmatrix}
  \end{align}
  We next show that, for any $i=1,\dots,n$, $c\ind{j}_i = c\ind{j'}_i$ for all $j,j'\in e_i$.
Let $e_i = j_1,\dots,j_{|e_i|}$.
Since $\mathsf{RIM}_{\mathcal{H}}(X_{[n]}=\alpha_{[n]}) \cdot v = 0$, we have, by definition of $\mathsf{RIM}_\mathcal{H}$, for $u=2,\dots,|e_i|$,
\begin{align}
  c\ind{j_1}_i-c\ind{j_u}_i = [1,\alpha_i,\dots,\alpha_i^{k-1}] \cdot (f\ind{j_1} - f\ind{j_u})^T = 0.
\end{align}
(note this is true even if $j_u = t$, since $f\ind{t}=0$).

Define a vector $y\in\mathbb{F}_q^n$ such that, for $i=1,\dots,n$, we have $y_i = c\ind{j}_i$, where $j$ is an arbitrary element of hyperedge $e_i$ (by the previous paragraph, the choice of $j$ does not matter).
  For each $j=1,\dots,t$, we must have $(MH\cdot (y-c\ind{j}))_\ell = 0$ for all $\ell \in [n-k]$ such that $S_\ell$ is a copy of $\{i \in [n] : j\notin e_i\}$; the $\ell$'th row of $MH$ is supported only on $\{i \in [n] : j\in e_i\}$, and $y-c\ind{j}$ is zero on $\{i \in [n] : j\in e_i\}$ by definition of $y$.
  Since $MHc\ind{j} = M\cdot(Hc\ind{j}) = 0$ for all $j=1,\dots,t$, we have, for all $j$ and all $\ell$ such that $S_\ell$ is a copy of $\{i \in [n] : j\notin e_i\}$,
  \begin{align}
    (MHy)_\ell = (MH\cdot (y-c\ind{j}))_\ell = 0.
  \end{align}
  By construction, all $S_\ell$ are a copy of some set $\{i:j\notin e_i\}$, so we conclude $MHy=0$.
  Since $M$ is invertible, we must have $Hy = 0$.

  This means $y = G\cdot f$ for some $f\in \mathbb{F}_q^k$, so $y$ is the evaluation of a degree-less-than-$k$ polynomial.
  Since $\mathcal{H}$ is $k$-weakly-partition-connected, by considering the partition $\{j\}\sqcup ([t]\setminus\{j\})$, there are at least $k$ hyperedges $e_i$ containing vertex $j$ in $\mathcal{H}$, so $y_i = c\ind{j}_i$ in at least $k$ indices $i$.
  Since $y$ and $c\ind{j}$ are the evaluation of degree-less-than-$k$ polynomials, we must have $y=c\ind{j}$. This holds for all $j$, so we have $y=c\ind{1}=\cdots=c\ind{t} = 0$ (recall $f\ind{t}=0$), which contradicts our initial assumption that $v\neq 0$.
\end{proof}

\section*{Acknowledgments} 

We thank Mary Wootters and Francisco Pernice for helpful discussions about \cite{BGM23} and the hypergraph perspective of the list-decoding problem.
We thank Karthik Chandrasekaran for helpful discussions about hypergraph connectivity notions and for the reference of Theorem~\ref{thm:hypergraph} in \cite{Frank11}.
We thank Nikhil Shagrithaya and Jonathan Mosheiff for pointing out a mistake in the proof of Lemma~\ref{lem:rlc:int-mat-1} in an earlier version of the paper.
We thank an anonymous reviewer for pointing out a mistake in the proof of Corollary~\ref{cor:hypergraph} in an earlier version of the paper.

\bibliographystyle{amsplain}
\bibliography{bib}


\begin{aicauthors}
\begin{authorinfo}[omara]
  Omar Alrabiah\\
  University of California, Berkeley\\
  Berkeley, California, United States of America\\
  oalrabiah\imageat{}berkeley\imagedot{}edu\\
  \url{https://scholar.google.com/citations?user=AZMV2qQAAAAJ}
\end{authorinfo}
\begin{authorinfo}[zeyug]
  Zeyu Guo\\
  The Ohio State University\\
  Columbus, Ohio, United States of America\\
  zguotcs\imageat{}gmail\imagedot{}com\\
  \url{https://zeyuguo.bitbucket.io}
\end{authorinfo}
\begin{authorinfo}[venkatg]
  Venkatesan Guruswami\\
  University of California, Berkeley and Simons Institute for the Theory of Computing\\
  Berkeley, California, United States of America\\
  venkatg\imageat{}berkeley\imagedot{}edu\\
  \url{https://people.eecs.berkeley.edu/~venkatg}
\end{authorinfo}
\begin{authorinfo}[rayl]
  Ray Li\\
  Santa Clara University\\
  Santa Clara, California, United States of America\\
   rli6\imageat{}scu\imagedot{}edu\\
   \url{https://cs.stanford.edu/~rayyli}
\end{authorinfo}
\begin{authorinfo}[zihanz]
  Zihan Zhang\\
  The Ohio State University\\
  Columbus, Ohio, United States of America\\
  zhang.13691\imageat{}osu\imagedot{}edu \\
  \url{https://zihanzhang.owlstown.net}
\end{authorinfo}
\end{aicauthors}

\end{document}